\documentclass[sn-mathphys-num]{sn-jnl}

\usepackage{graphicx}%
\usepackage{multirow}%
\usepackage{amsmath,amssymb,amsfonts}%
\usepackage{amsthm}%
\usepackage{mathrsfs}%
\usepackage{mathtools}
\usepackage[title]{appendix}%
\usepackage{xcolor}%
\let\mathscr\mathrsfs
\usepackage{textcomp}%
\usepackage{manyfoot}%
\usepackage{algorithm}%
\usepackage{algorithmicx}%
\usepackage{algpseudocode}%
\usepackage{listings}%
\usepackage{booktabs}
\usepackage{rotating}
\usepackage{tabularx} 
\theoremstyle{thmstyleone}%
\newtheorem{theorem}{Theorem}
\newtheorem{proposition}[theorem]{Proposition}%
\newtheorem{lemma}[theorem]{Lemma}%
\newtheorem{corollary}[theorem]{Corollary}%

\theoremstyle{thmstyletwo}%
\newtheorem{example}{Example}%
\newtheorem{remark}{Remark}%

\theoremstyle{thmstylethree}%
\newtheorem{definition}{Definition}%

\begin{document}

\title[Article Title]{Asymmetric Entanglement-Assisted Quantum Locally Recoverable Codes}


\author*[1]{\fnm{Vijay} \sur{Kumar}}\email{vijayk@iiitnr.edu.in}

\author[2]{\fnm{Ramakrishna} \sur{Bandi}}\email{ramakrishna@iiitnr.edu.in}
\equalcont{These authors contributed equally to this work.}

\affil[1,2]{\orgdiv{Department of Science and Mathematics}, \orgname{International Institute of Information Technology },\orgaddress{ \city{Naya Raipur}, \postcode{493661}, \state{Chhattisgarh}, \country{India}}}










\abstract{Locally recoverable codes (LRCs) permit reconstruction of a lost coordinate from a small subset of the remaining coordinates, a property that is a fundamental coding technique for distributed storage and, more recently, for fault-tolerant quantum memories and distributed storage systems. Existing constructions of quantum locally recoverable codes (qLRCs) and their entanglement-assisted generalizations (EA-qLRCs) impose a single locality parameter $r$ governing recovery from both bit-flip ($X$-type) and phase-flip ($Z$-type) errors, even though physical noise is rarely symmetric between the two. We introduce entanglement-assisted quantum locally recoverable codes (EA-qLRC) with \emph{asymmetric locality} $\{r_X,r_Z\}$, in which $X$-type and $Z$-type errors are corrected from recovery sets of size at most $r_X+1$ and $r_Z+1$, respectively. We extend a stabilizer sufficient condition for this property and, in particular, the CSS framework, where asymmetric locality corresponds to separate locality conditions on the two classical codes $\mathcal C_1$ and $\mathcal C_2$ underlying the construction. From this characterization we derive Singleton-, Griesmer-, Plotkin-,
sphere-packing-, and Cadambe\textendash Mazumdar-like bounds for CSS-type asymmetric EA-qLRCs (AEA-qLRCs) built from Proposition~\ref{CSS-EA-qLRC-Asym}, and we characterize exactly when a pure CSS-type AEA-qLRC built from a pair
of classical LRCs attains the Singleton-like bound with equality.}

\keywords{Quantum error correction, entanglement-assisted quantum codes, locally recoverable codes, asymmetric quantum codes, CSS construction, Singleton bound, Tamo-Barg codes, cyclic codes.}

\maketitle

\section{Introduction}

The rapid growth of large-scale distributed storage systems has stimulated extensive research on coding techniques that enable reliable data storage while reducing the cost of node repair. Classical locally recoverable codes (LRCs), introduced by Gopalan \emph{et al.}~\cite{gopalan2012locality}, address this need by allowing each erased symbol to be recovered by accessing only a small number of other symbols. Owing to their practical importance, LRCs have become an active area of research, with extensive studies on their fundamental bounds, explicit constructions, and optimality conditions~\cite{cadambe2015bounds,tamo2016bounds,papailiopoulos2014locally,tamo2015cyclic,xing2022construction}.

With the continued development of quantum computing, quantum communication, and quantum storage technologies, protecting quantum information efficiently has become increasingly important. Quantum error-correcting codes (QECCs) provide the fundamental mechanism for protecting quantum states against decoherence and operational errors. Motivated by the success of classical LRCs, Golowich \emph{et al.}~\cite{Golowic} introduced quantum locally recoverable codes (qLRCs), which enable an erased qudit to be recovered by accessing only a limited number of other qudits. Since then, qLRCs have attracted considerable attention because of their potential applications in quantum distributed storage, where locality can reduce the overhead associated with quantum recovery operations. Building on this foundation, several constructions~\cite{luo2025bounds,li2025improved,sharma2025quantum,bu2025quantum,xie2025two} have developed qLRCs through CSS and Hermitian frameworks by exploiting classical locally recoverable codes. In parallel, Singleton-like, Plotkin-like, Griesmer-like, and sphere-packing-type bounds have been established for symmetric qLRCs and their generalizations. These bounds and constructions provide important benchmarks for evaluating the optimality of qLRCs and have substantially advanced the understanding of quantum locality.

This framework has subsequently been extended in two directions: entanglement assistance and error asymmetry. The first direction concerns the use of pre-shared entanglement to relax the structural constraints arising in quantum code constructions. Li \emph{et al.}~\cite{li2026entanglement} generalized qLRCs to entanglement-assisted quantum locally recoverable codes (EA-qLRCs). Independently, Kumar \emph{et al.}~\cite{kumar2026entanglement} developed a systematic framework for CSS-like entanglement-assisted quantum locally recoverable codes (CSS-like EA-qLRCs). They established five fundamental converse bounds, namely Singleton-, Griesmer-, Plotkin-, sphere-packing-, and Cadambe\textendash Mazumdar-like bounds, and analyzed their relative asymptotic and regime-specific tightness. They also characterized necessary and sufficient conditions for pure codes to attain the Singleton-like bound, presented explicit optimal cyclic constructions, and identified a structural obstruction showing that Tamo\textendash Barg-based constructions attain this bound only in the vacuous $k\le r$ regime. Furthermore, they constructed maximally entangled codes from complementary duals, established two Gilbert\textendash Varshamov-like achievability bounds through parity-check augmentation and concatenation, and unified the converse and achievability bounds in the maximally entangled regime to characterize the trade-offs among rate, distance, and locality.

The second direction concerns asymmetric error correction. Since quantum noise can be asymmetric, with phase-flip ($Z$-type) errors occurring more frequently than bit-flip ($X$-type) errors~\cite{steane1996multiple}, asymmetric quantum codes (AQCs) have been developed to provide different error-correction capabilities for the two error types~\cite{ioffe2007asymmetric,sarvepalli2009asymmetric,wang2010asymmetric,ezerman2013css}. Building on this motivation, a companion work~\cite{li2026quantum} introduced quantum locally recoverable codes with asymmetric locality ${r_X,r_Z}$, where $r_X$ and $r_Z$ denote the numbers of qudits required to recover from $X$-type and $Z$-type single-qudit errors, respectively. The work adapted the completely depolarizing channel used in earlier qLRC constructions into separate $X$-type and $Z$-type Pauli twirling channels and established a CSS criterion under which $C=\mathrm{CSS}(C_X,C_Z)$ is an ${r_X,r_Z}$-qLRC whenever $C_X$ has locality $r_Z$ and $C_Z$ has locality $r_X$.

These two directions naturally raise the question of whether entanglement assistance can be incorporated into the asymmetric local-recovery framework. Entanglement-assisted quantum error-correcting codes (EAQECCs), introduced by Brun \emph{et al.}~\cite{brun2006correcting}, use pre-shared maximally entangled pairs between the encoder and decoder to remove the dual-containing constraint inherent in the standard CSS construction. Consequently, an EAQECC can be constructed from any pair of classical linear codes at the cost of consuming $c$ ebits. Galindo \emph{et al.}~\cite{galindo2020asymmetric} introduced asymmetric EAQECCs constructible from arbitrary pairs of classical linear codes over arbitrary fields, together with a Gilbert\textendash Varshamov-type existence bound and an explicit BCH-code construction. Liu \emph{et al.}~\cite{liu2021asymmetric} subsequently improved these bounds and provided further constructions for asymmetric EAQECCs.

However, it remains unclear whether this compatibility persists once local recoverability is imposed as an additional requirement. In particular, the question of whether entanglement assistance can overcome the dual-containment obstruction inherent in CSS-type constructions, while simultaneously preserving distinct locality requirements for $X$-type and $Z$-type errors, has not been addressed in the literature. The central problem motivating this work is therefore the following: \emph{does there exist a unified framework in which entanglement assistance, asymmetric error correction, and local recoverability can be attained simultaneously?}

\section*{Our Contribution}
In this paper, we introduce  \emph{asymmetric entanglement-assisted quantum locally recoverable codes} (AEA-qLRCs), a framework that simultaneously incorporates asymmetric error protection and local recoverability into the entanglement-assisted stabilizer formalism. Our contributions can be summarized as follows.

\begin{enumerate}
    \item \textbf{Definition and stabilizer characterization.} We define an AEA-qLRC with locality $\{r_X,r_Z\}$, requiring $X$-type and $Z$-type errors at each coordinate to be independently recoverable from sets of size at most $r_X+1$ and $r_Z+1$, respectively, using shared entanglement. We prove that a pair of stabilizer generators reducing, on the encoder side, to a pure $X$-type and a pure $Z$-type operator at coordinate $i$, with supports of size at most $r_X+1$ and $r_Z+1$, suffices for this property (Theorem~\ref{Stb_EAqLRC_Asym}).

    \item \textbf{CSS-type construction.} From this characterization we obtain a construction (Proposition~\ref{CSS-EA-qLRC-Asym}) yielding an AEA-qLRC with locality $\{r_X,r_Z\}$ from any two classical codes $\mathcal{C}_1,\mathcal{C}_2$ whose duals admit, at every coordinate, parity checks of weight at most $r_Z+1$ and $r_X+1$, respectively. The two localities are thus controlled independently by the two codes.

    \item \textbf{Bounds.} We prove a general trade-off theorem (Theorem~\ref{Bounds_theorem_Asym}) bounding $d_X+d_Z$, $\kappa$, and $n$ via the optimal-parameter functions $n_{\mathrm{opt}}^{(q)}$, $k_{\mathrm{opt}}^{(q)}$, $d_{\mathrm{opt}}^{(q)}$ of classical LRCs. Applying known classical bounds yields the Singleton-, Cadambe-Mazumdar-, Plotkin-, sphere-packing-, and Griesmer-like bounds for CSS-type AEA-qLRCs (Corollaries~\ref{cor:singleton_Asym_direct} and \ref{cor:singleton_Asym}).
    \item \textbf{Optimality.} We characterize Singleton-optimal pure AEA-qLRCs: equality holds if and only if $\mathcal{C}_1,\mathcal{C}_2$ are optimal LRCs with $d_Z=d_X$ and matching locality-ceiling conditions (Theorem~\ref{pure_EA-qLRC-Asym}). We restate this criterion in terms of $s_1=\dim(\mathcal{C}_1\cap\mathcal{C}_2^\perp)$ and $s_2=\dim(\mathcal{C}_2\cap\mathcal{C}_1^\perp)$ (Theorem~\ref{cor:two-code-optimality-asym}), and show that optimal \emph{maximally} entangled codes correspond exactly to Linear Complementary Pairs of optimal LRCs (Theorem~\ref{EA-qLRC-LCP-Asym}).

    \item \textbf{Constructions.} We instantiate the framework using two classes of classical LRCs. First, for Tamo--Barg codes, we explicitly evaluate the relevant dual-intersection sizes in Proposition~\ref{prop:TBintersection-asym}, leading to pure AEA-qLRC constructions in both the disjoint- and overlapping-support regimes; see Theorems~\ref{thm:TB-EAqLRC-Asym-1} and \ref{thm:TB-EAqLRC-Asym-2}. Second, for cyclic LRCs, we establish a support-alignment lemma (Lemma~\ref{lem:support-alignment-asym}) and use it to derive a general construction in Theorem~\ref{Cyclic-EA-qLRC-Asym}. Furthermore, by considering linear complementary pairs of optimal cyclic LRCs, we obtain an optimal maximally entangled family of AEA-qLRCs in Theorem~\ref{Cyclic-MEAQECC-Asym}.

\end{enumerate}
\subsection*{Relation to Concurrent EA-qLRC Work}

This work builds directly on two concurrent lines of research: entanglement-assisted
quantum locally recoverable codes (EA-qLRCs) with a single, symmetric locality
parameter $r$, developed independently by Li \emph{et al.}~\cite{li2026entanglement}
and Kumar and Bandi~\cite{kumar2026entanglement}; and quantum locally recoverable
codes with asymmetric locality $\{r_X,r_Z\}$ but without entanglement assistance,
developed by Li, Jin, and Xing~\cite{li2026quantum}. The present paper unifies these
two directions by introducing independent localities $\{r_X,r_Z\}$ into the
entanglement-assisted stabilizer formalism. Table~\ref{tab:novelty} summarizes,
result by result, how the contributions of this paper relate to
\cite{li2026entanglement,kumar2026entanglement,li2026quantum}. In particular, setting
$r_X=r_Z=r$ throughout recovers exactly the stabilizer criterion, CSS construction,
and bounds of \cite{li2026entanglement,kumar2026entanglement}, so the results below
should be understood as a strict asymmetric generalization of that framework rather
than an independent development.
\begin{table}[!ht]
\centering
\caption{Relation of this paper's results to concurrent EA-qLRC and asymmetric qLRC work.}\label{tab:novelty}
\begin{tabularx}{\textwidth}{XXXX}
\toprule
\textbf{Result} & \textbf{\cite{li2026entanglement,kumar2026entanglement}} & \textbf{\cite{li2026quantum}} & \textbf{This paper} \\
\midrule
Entanglement assistance            & Yes            & No              & Yes \\
Asymmetric locality $\{r_X,r_Z\}$  & No ($r$ only)  & Yes             & Yes \\
Stabilizer sufficient condition    & Single $r$     & Single $\{r_X,r_Z\}$, no EA & Independent $\{r_X,r_Z\}$ with EA (Thm.~\ref{Stb_EAqLRC_Asym}) \\
CSS-type construction              & Single $r$     & Single $\{r_X,r_Z\}$, no EA & Independent support bounds $r_Z\!+\!1$, $r_X\!+\!1$ (Prop.~\ref{CSS-EA-qLRC-Asym}) \\
Singleton-like bound               & Symmetric      & Asymmetric, no EA & Asymmetric with EA (Cor.~\ref{cor:singleton_Asym_direct}) \\
Griesmer/Plotkin/sphere-packing/CM & Symmetric      & --              & Asymmetric with EA (Cor.~\ref{cor:singleton_Asym}) \\
Pure-code optimality               & Single ceiling condition & --   & Two independent ceiling conditions (Thm.~\ref{pure_EA-qLRC-Asym}) \\
LCP / maximally entangled          & Symmetric LCP  & --              & Asymmetric LCP, requires $d_X=d_Z$ (Thm.~\ref{EA-qLRC-LCP-Asym}) \\
Tamo--Barg construction            & Symmetric $r$  & --              & Independently tunable $r_X,r_Z$ (Thm.~\ref{thm:TB-EAqLRC-Asym-1}, \ref{thm:TB-EAqLRC-Asym-2}) \\
Cyclic construction                & Symmetric $r$  & --              & Independently tunable $r_X,r_Z$ (Thm.~\ref{Cyclic-EA-qLRC-Asym}, \ref{Cyclic-MEAQECC-Asym}) \\
\bottomrule
\end{tabularx}
\end{table}

\section*{Organization}

The remainder of the paper is organized to mirror our contributions above. Section~\ref{sec:preliminary} recalls the necessary background\textemdash on classical linear and cyclic codes, classical locally recoverable codes and their bounds, the Tamo--Barg and cyclic LRC constructions, the entanglement-assisted stabilizer formalism, entanglement-assisted quantum locally recoverable codes (EA-qLRCs), and asymmetric entanglement-assisted quantum error-correcting codes (AEAQECCs). Section~\ref{sec:AEA-qLRCs} introduces AEA-qLRCs, establishes their stabilizer sufficient condition (Theorem~\ref{Stb_EAqLRC_Asym}), and gives the associated CSS-type construction (Proposition~\ref{CSS-EA-qLRC-Asym}). Section~\ref{sec:bounds} derives parameter bounds (Theorem~\ref{Bounds_theorem_Asym}, Corollaries~\ref{cor:singleton_Asym_direct}--\ref{cor:singleton_Asym}) and establishes optimality characterizations (Theorems~\ref{pure_EA-qLRC-Asym}, \ref{cor:two-code-optimality-asym}, \ref{EA-qLRC-LCP-Asym}). Section~\ref{sec:construction} presents explicit constructions from Tamo--Barg codes (Proposition~\ref{prop:TBintersection-asym}, Theorems~\ref{thm:TB-EAqLRC-Asym-1}, \ref{thm:TB-EAqLRC-Asym-2}) and from cyclic codes (Lemma~\ref{lem:support-alignment-asym}, Theorems~\ref{Cyclic-EA-qLRC-Asym}, \ref{Cyclic-MEAQECC-Asym}). Section~\ref{sec:conclusion} concludes with future research work.

\section{Preliminaries}\label{sec:preliminary}
Throughout, let $q$ be a prime power, and let $\mathbb F_q$ denote the finite field of order $q$. For $m,n\in\mathbb N$, define $[m]^\dagger:=\{0,1,\ldots,m\},\;[n]:=\{1,\ldots,n\}$.
For $\mathbf u=(u_1,\ldots,u_n)\in\mathbb F_q^n$, let $\operatorname{supp}(\mathbf u)=\{i\in[n]:u_i\neq 0\}$ denote its support and $\operatorname{wt}(\mathbf u)=|\operatorname{supp}(\mathbf u)|$ its Hamming weight. For a subset $S\subseteq\mathbb F_q^n$, define
$\operatorname{wt}(S)=\min_{\substack{\mathbf u\in S\\\mathbf u\neq\mathbf 0}}\operatorname{wt}(\mathbf u)$. In particular, for a linear code $\mathcal C$, $\operatorname{wt}(\mathcal C)$ denotes the minimum weight of a nonzero codeword. We use $\operatorname{tr}(\cdot)$ to denote the trace map from the relevant extension field to $\mathbb F_p$, where $p$ is the characteristic of $\mathbb F_q$, and $\omega=e^{2\pi i/p}.$

\subsection{Classical Linear Codes}
Let $\mathbb{F}_q$ denote the finite field of order $q$. A linear code of length $n$ and dimension $k$ over $\mathbb{F}_q$ is a $k$-dimensional subspace $\mathcal{C} \subseteq \mathbb{F}_q^n$, denoted as an $[n, k]_q$ code. The minimum distance of $\mathcal{C}$ is $d = \min_{c \in \mathcal{C} \setminus \{0\}} \operatorname{wt}(c)$. A code with these parameters is denoted as $[n, k, d]_q$.A generator matrix of an $[n,k]_q$ code $\mathcal{C}$ is a matrix $G \in \mathbb{F}_q^{k \times n}$ of rank $k$ whose row space equals $\mathcal{C}$. Equivalently:
$$\mathcal{C} = \{\, uG : u \in \mathbb{F}_q^k \,\}$$Equip $\mathbb{F}_q^n$ with the standard Euclidean bilinear form $\langle x, y \rangle_E = \sum_{i=1}^n x_i y_i$. The Euclidean dual of $\mathcal{C}$ is:$$\mathcal{C}^{\perp} = \{\, x \in \mathbb{F}_q^n : \langle x, c \rangle_E = 0 \ \ \forall\, c \in \mathcal{C} \,\}$$For a code $\mathcal{C}$ over $\mathbb{F}_{q^2}$, equip $\mathbb{F}_{q^2}^n$ with the conjugate-linear Hermitian form $\langle x, y \rangle_H = \sum_{i=1}^n x_i y_i^q$, where $z \mapsto z^q$ is the nontrivial field automorphism of $\mathbb{F}_{q^2}$ over $\mathbb{F}_q$. The Hermitian dual of $\mathcal{C}$ is:$$\mathcal{C}^{\perp_H} = \{\, x \in \mathbb{F}_{q^2}^n : \langle x, c \rangle_H = 0 \ \ \forall\, c \in \mathcal{C} \,\}$$If $\mathcal{C}$ is an $[n, k, d]_q$ code, its Euclidean dual $\mathcal{C}^{\perp}$ is an $[n, n-k, d^{\perp}]_q$ code. The dual minimum distance $d^{\perp} = \min_{x \in \mathcal{C}^{\perp} \setminus \{0\}} \operatorname{wt}(x)$ is independent of $d$.A matrix $H \in \mathbb{F}_q^{(n-k) \times n}$ is a parity-check matrix of $\mathcal{C}$ if $\operatorname{rank}(H) = n-k$ and:$$\mathcal{C} = \ker(H) = \{\, x \in \mathbb{F}_q^n : Hx^{\top} = \mathbf{0} \,\}$$Equivalently, the rows of $H$ form a basis of $\mathcal{C}^{\perp}$. For any generator matrix $G$ and parity-check matrix $H$ of $\mathcal{C}$:$$GH^{\top} = 0$$A generator matrix $G$ of $\mathcal{C}$ is systematic if:$$G = [\,I_k \mid A\,]$$where $I_k$ is the $k \times k$ identity matrix and $A \in \mathbb{F}_q^{k \times (n-k)}$. In this case,$$H = [\, -A^{\top} \mid I_{n-k} \,]$$is a parity-check matrix of $\mathcal{C}$. Encoding via a systematic $G$ maps a message $u \in \mathbb{F}_q^k$ to a codeword whose first $k$ coordinates equal $u$ and whose remaining $n-k$ coordinates are parity symbols. Let $\mathcal{C}$ be an $[n,k]_q$ code and $\mathcal{C}^{\perp}$ its Euclidean dual. Then $\mathcal{C}$ is called self-orthogonal if $\mathcal{C} \subseteq \mathcal{C}^{\perp}$, self-dual if $\mathcal{C} = \mathcal{C}^{\perp}$, dual-containing if $\mathcal{C}^{\perp} \subseteq \mathcal{C}$ and a linear complementary dual (LCD) code if $\mathcal{C} \cap \mathcal{C}^{\perp} = \{0\}$. Every $[n, k, d]_q$ code satisfies the Singleton bound $d \le n - k + 1$. A code attaining $d = n - k + 1$ is called a maximum distance separable (MDS) code.

\begin{definition}
Let $\gcd(n,q)=1$, let $m=\operatorname{ord}_n(q)$, and let $\alpha$ be a primitive $n$-th root of unity in $\mathbb F_{q^m}$. For $0\le i<n$, the \emph{minimal polynomial} of $\alpha^i$ over $\mathbb F_q$ is the unique monic irreducible polynomial $M_i(x)\in\mathbb F_q[x]$ having $\alpha^i$ as a root. Equivalently,
\[
M_i(x)=\prod_{j\in C_i}(x-\alpha^j),
\]
where $C_i=\{iq^t\bmod n:t\ge0\}$ is the $q$-cyclotomic coset containing $i$.
\end{definition}

A linear code $\mathcal C$ of length $n$ over $\mathbb F_q$ is called a \emph{cyclic code} if it is invariant under cyclic shifts; that is, whenever $(c_0,c_1,\ldots,c_{n-1})\in\mathcal C$
then $(c_{n-1},c_0,\ldots,c_{n-2})\in\mathcal C$.
Equivalently, identifying $\mathbb F_q^n$ with the quotient ring
\[
R_n=\mathbb F_q[x]/\langle x^n-1\rangle,
\]
every cyclic code corresponds to an ideal of $R_n$ and is uniquely generated by a monic polynomial $g(x)$ satisfying $g(x)\mid(x^n-1)$. The polynomial $g(x)$ is called the \emph{generator polynomial} of $\mathcal C$, and
\[
k=n-\deg(g(x)).
\]

If $\mathcal C$ is a cyclic code of length $n$ with generator polynomial $g(x)$, its \emph{defining set} is
\[
Z=\{i\in\{0,1,\ldots,n-1\}:g(\alpha^i)=0\}.
\]
The defining set $Z$ is a union of $q$-cyclotomic cosets. Moreover,
\[
g(x)=\operatorname{lcm}\{M_i(x):i\in Z\},
\]
that is, the generator polynomial is the least common multiple of the minimal polynomials corresponding to the exponents in $Z$.

\begin{lemma}[BCH Bound {\cite{bose1960class,hocquenghem1959codes}}]
Let $\mathcal C$ be a cyclic code of length $n$ over $\mathbb F_q$ with defining set $Z$. If $Z$ contains $\delta-1$ consecutive integers, then the minimum Hamming distance of $\mathcal C$ satisfies $d(\mathcal C)\ge\delta.$
\end{lemma}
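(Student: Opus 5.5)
The plan is the classical argument via Vandermonde matrices and the parity-check description of a cyclic code. Let $Z$ contain the consecutive integers $b,b+1,\ldots,b+\delta-2$. Since $g(x)\mid c(x)$ for every codeword $c(x)=\sum_{j=0}^{n-1}c_jx^j\in\mathcal C$, we have $c(\alpha^i)=0$ for all $i\in Z$, and in particular for $i=b,\ldots,b+\delta-2$. This gives the matrix
\[
H=\bigl(\alpha^{(b+t)j}\bigr)_{0\le t\le \delta-2,\;0\le j\le n-1}
\]
over $\mathbb F_{q^m}$ with $Hc^{\top}=0$ for every $c\in\mathcal C$. We will not need $H$ to be a full parity-check matrix. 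It suffices that $\mathcal C\subseteq\ker H$ after extending scalars to $\mathbb F_{q^m}$, because minimum weight is computed coordinatewise and does not depend on the field.

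We argue by contradiction. Suppose $c\neq 0$ is a codeword with $\operatorname{wt}(c)=w\le\delta-1$ and support $\{j_1,\ldots,j_w\}$. Keep only the first $w$ rows of $H$ (those with $t=0,\ldots,w-1$) and the columns indexed by the support. This gives a $w\times w$ system $M\,(c_{j_1},\ldots,c_{j_w})^{\top}=0$ with $M_{t,s}=\alpha^{(b+t)j_s}$.

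Next we compute the determinant of $M$. Factor $\alpha^{bj_s}$ out of column $s$. What remains is the Vandermonde matrix in the values $x_s=\alpha^{j_s}$, so
\[
\det M=\prod_{s}\alpha^{bj_s}\prod_{s<s'}\bigl(\alpha^{j_{s'}}-\alpha^{j_s}\bigr).
\]
This is nonzero for two reasons. First, $\alpha\neq0$. Second, $\alpha$ is a primitive $n$-th root of unity and the $j_s$ are distinct in $\{0,\ldots,n-1\}$, so the $\alpha^{j_s}$ are distinct. Hence $M$ is invertible, which forces $c_{j_1}=\cdots=c_{j_w}=0$. This contradicts the choice of the support, so $d(\mathcal C)\ge\delta$.

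The only step needing care is the Vandermonde determinant, specifically the factoring out of $\alpha^{bj_s}$ when $b\neq 0$, together with the distinctness of the $\alpha^{j_s}$, which is where the hypothesis $\gcd(n,q)=1$ (ensuring a primitive $n$-th root exists) is used. Everything else is routine.
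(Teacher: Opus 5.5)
Your argument is correct. It is the standard Vandermonde proof of the BCH bound: restrict to the support of a hypothetical codeword of weight $w\le\delta-1$, factor $\alpha^{bj_s}$ out of each column, and use that the $\alpha^{j_s}$ are distinct. The paper gives no proof of its own and only cites Bose--Ray-Chaudhuri and Hocquenghem, so your route is the usual textbook one. The $\delta-1$ consecutive exponents may wrap around modulo $n$, but this is harmless in your argument because $\alpha^n=1$.
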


\subsection{Classical Locally Recoverable Codes}

We next recall the notion of locality for classical linear codes, which forms the basis for the locally recoverable codes considered throughout this work.

\begin{definition}
Let $\mathcal C$ be an $[n,k,d]_q$ linear code. A coordinate $i\in[n]$ is said to have \emph{locality} $r$ if there exists a recovery set $\Gamma_i\subseteq[n]$ containing $i$, with
$|\Gamma_i|\le r+1$, such that the symbol at coordinate $i$ can be recovered from the symbols indexed by $\Gamma_i\setminus\{i\}$ for every codeword in $\mathcal C$. Equivalently, there exists a parity-check vector $\mathbf h\in\mathcal C^\perp$  such that $i\in\operatorname{supp}(\mathbf h)$, $|\operatorname{supp}(\mathbf h)|\le r+1$. In this case, one may take
$\Gamma_i=\operatorname{supp}(\mathbf h)$. The code $\mathcal C$ is called a \emph{locally recoverable code (LRC)} with locality $r$ if every coordinate of $\mathcal C$ has locality $r$. We also refer to such a code as an $(n,k,d,r)_q$-cLRC.
\end{definition}

As in the case of classical linear codes, the parameters of cLRCs satisfy fundamental trade-offs among length, dimension, minimum distance, and locality. The corresponding bounds provide the principal benchmarks for assessing the performance of LRC constructions. We summarize the bounds used in this work below.

\begin{lemma}\label{lemma:cLRC Bounds}
Let $\mathcal C$ be an $(n,k,d,r)_q$-cLRC, and let $k_{\mathrm{opt}}^{(q)}(n,d)$ denote the optimal dimension of a $q$-ary linear code of length $n$ and minimum distance $d$. Suppose that $\tau$ is a positive integer. Then the following bounds hold.
\begin{enumerate}
    \item \textbf{Singleton-like bound \cite{gopalan2012locality}}: We have that
    \begin{equation}\label{eq:classical-singleton}
    d\le n-k-\left\lceil\frac{k}{r}\right\rceil+2.
  \end{equation}

    \item \textbf{Cadambe-Mazumdar-like bound \cite{cadambe2015bounds}}: We have that
  \begin{equation}
   k\le \min_{\tau\in\mathbb{Z}_{\ge0}}\left\{\tau r+k_{\mathrm{opt}}^{(q)}\bigl(n-\tau(r+1),d\bigr)\right\},
\end{equation}

\item \textbf{Plotkin-like bound \cite{hao2020bounds}}: We have that
    \begin{equation}
    d \le \min_{0 \le \tau \le \lceil \frac{k}{r} \rceil - 1} \left\{ \frac{q^{k-\tau r-1}(q-1)(n-\tau(r+1))}{q^{k-\tau r} - 1} \right\}. \label{eq:plotkin}
    \end{equation}

\item \textbf{Sphere-packing-like bound \cite{li2025improved}}: We have that
    \begin{equation}\label{eq:sphere-packing}
    k \le n - \max_{0 \le \tau \le \lfloor \frac{n-1}{r+1} \rfloor} \left\{ \tau + \log_q \left( \sum_{i=0}^{\lfloor \frac{d-1}{2} \rfloor} \binom{n-\tau(r+1)}{i} (q-1)^i \right) \right\}. 
    \end{equation}

    \item \textbf{Griesmer-like bound \cite{hao2020bounds}}: We have that
    \begin{equation}
    n \ge \max_{0 \le \tau \le \lceil \frac{k}{r} \rceil - 1} \left\{ \tau(r+1) + \sum_{i=0}^{k-\tau r-1} \left\lceil \frac{d}{q^i} \right\rceil \right\}. \label{eq:griesmer}
    \end{equation}
\end{enumerate}
\end{lemma}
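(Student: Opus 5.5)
The lemma collects five known bounds on $(n,k,d,r)_q$-cLRCs. All of them follow from one structural fact about locality, so the plan is to prove that fact once and then apply it five times.

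\textbf{Shortening (puncturing) lemma.} For every integer $\tau$ with $0\le \tau\le \lceil k/r\rceil-1$ I will exhibit a set $S\subseteq[n]$ with $|S|\le \tau(r+1)$ and $\operatorname{rank}(G_S)\le \tau r$. Here $G_S$ denotes the columns of a generator matrix indexed by $S$. The set is built greedily, starting from $S=\emptyset$. At each step:
\begin{itemize}
\item if $\operatorname{rank}(G_S)<k$, pick a coordinate $i$ whose column lies outside the span of $G_S$;
\item take its recovery set $\Gamma_i$, of size at most $r+1$, and replace $S$ by $S\cup\Gamma_i$.
\end{itemize}
Each step adds at most $r+1$ coordinates. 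It raises the rank by at most $r$, because column $i$ is a linear combination of the other columns in $\Gamma_i$. If an added set overlaps $S$, or if $\tau$ could otherwise continue, I pad $S$ with further coordinates so that the ratio of size to rank is kept; this padding is the usual technical nuisance.

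The subcode of codewords vanishing on $S$ is then a code on $n-|S|\ge n-\tau(r+1)$ coordinates. It has dimension at least $k-\tau r$ and distance at least $d$. In the refined version I aim to get exactly length $n-\tau(r+1)$ and dimension $\ge k-\tau r$, possibly after shortening further.

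\textbf{Deriving each bound.}
\begin{itemize}
\item \emph{Cadambe--Mazumdar.} The shortened code gives $k-\tau r\le k_{\mathrm{opt}}^{(q)}(n-\tau(r+1),d)$. Minimizing over $\tau$ yields the bound.
\item \emph{Plotkin and Griesmer.} Apply the classical Plotkin bound $d\le \frac{q^{K-1}(q-1)N}{q^K-1}$ and the classical Griesmer bound $N\ge\sum_{i<K}\lceil d/q^i\rceil$ to the shortened code, with $N=n-\tau(r+1)$ and $K=k-\tau r\ge1$. Both right-hand sides are monotone in $K$, so the lower bound on $K$ suffices. The constraint $K\ge1$ is exactly the range $\tau\le\lceil k/r\rceil-1$.
\item \emph{Sphere-packing.} Apply the Hamming bound $q^{K}\sum_i\binom{N}{i}(q-1)^i\le q^N$, with $N-K\le n-k-\tau$ after the appropriate accounting. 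Taking logarithms gives the stated form. The range $\tau\le\lfloor (n-1)/(r+1)\rfloor$ reflects the requirement $N\ge1$ rather than $K\ge1$.
\item \emph{Singleton-like bound.} Take the maximal $\tau=\lceil k/r\rceil-1$ and grow $S$ to a set of rank exactly $k-1$. Its size is at least $k-1+\tau$, since each greedy step adds one dependent coordinate. The complement then supports a nonzero codeword, so $d\le n-(k-1+\lceil k/r\rceil-1)=n-k-\lceil k/r\rceil+2$.
\end{itemize}

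\textbf{Main obstacle.} The delicate step is the greedy construction. The sizes and ranks must be controlled exactly, so that each step contributes at least one redundant coordinate and the final step does not overshoot rank $k-1$. The recovery sets may overlap, which is where the exact accounting can fail. I would handle this as Gopalan et al.\ do: add recovery sets while the rank stays at most $k-2$, then fill arbitrary coordinates until the rank reaches $k-1$. For the sphere-packing accounting I would track the $\tau$ dependent coordinates explicitly. Beyond these points, each item is an immediate application of the corresponding classical bound, so I would cite \cite{gopalan2012locality,cadambe2015bounds,hao2020bounds,li2025improved} for the details.
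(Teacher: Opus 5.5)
The paper does not prove this lemma. It compiles known results, each quoted from its source: Gopalan et al.\ for the Singleton-like bound, Cadambe--Mazumdar, Hao et al.\ for the Plotkin- and Griesmer-like bounds, and Li et al.\ for the sphere-packing-like bound. It is then used only as a black box in Corollaries~\ref{cor:singleton_Asym_direct} and~\ref{cor:singleton_Asym} and in Theorem~\ref{pure_EA-qLRC-Asym}. Your proposal is therefore a genuinely different, self-contained route. You reduce all five items to one shortening statement: for each admissible $\tau$, the code has a shortened subcode of length exactly $n-\tau(r+1)$, dimension at least $k-\tau r$ and distance at least $d$. You then apply the classical Plotkin, Griesmer and Hamming bounds and the definition of $k_{\mathrm{opt}}^{(q)}$, and get the Singleton-like bound from the same greedy set grown to rank $k-1$. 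What this buys is transparency. It shows that each bound is a classical bound evaluated at $(n-\tau(r+1),\,k-\tau r,\,d)$, that $\tau\le\lceil k/r\rceil-1$ is exactly the condition $k-\tau r\ge 1$, and that monotonicity in the dimension lets a lower bound on $K$ suffice. The citation route buys brevity and inherits exact statements and conventions from the literature.

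One step must be stated correctly for your route to work.
\begin{itemize}
\item \emph{The announced invariant is too weak.} You state $|S|\le\tau(r+1)$ and $\operatorname{rank}(G_S)\le\tau r$. If $|S|<\tau(r+1)$, shortening on the remaining $\tau(r+1)-|S|$ coordinates can cost that much dimension. Keeping the longer length instead gives weaker inequalities in the Cadambe--Mazumdar, Plotkin, Griesmer and sphere-packing cases.
\item \emph{The right invariant.} Carry $|S|-\operatorname{rank}(G_S)\ge\tau$ together with $|S|\le\tau(r+1)$. Your own justification already gives this: column $i$ lies in the span of the columns indexed by $\Gamma_i\setminus\{i\}$, so adding $\Gamma_i$ for any $i\notin S$ raises the rank by at most $|\Gamma_i\setminus S|-1$.
\item \emph{Consequences.} Overlaps become harmless. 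Padding $S$ with arbitrary coordinates up to size exactly $\tau(r+1)$ preserves the invariant; it is the difference, not the ratio, that is kept. This yields $\operatorname{rank}(G_S)\le\tau r$ at the exact length.
\item \emph{Full range of $\tau$.} Since the step works for any $i\notin S$, it covers the whole range $\tau\le\lfloor (n-1)/(r+1)\rfloor$ of the sphere-packing bound. Alternatively, for $\tau r\ge k$ the sphere-packing term is vacuous, because the logarithm of the ball volume is at most $n-\tau(r+1)$. The Cadambe--Mazumdar terms with $\tau r\ge k$ are vacuous for the obvious reason.
\item \emph{Singleton-like bound.} After $\tau=\lceil k/r\rceil-1$ steps the rank is at most $\tau r\le k-1$. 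Adding single coordinates until the rank equals $k-1$ therefore cannot overshoot, and $|S|\ge k-1+\tau$ follows directly, without the case analysis you borrow from Gopalan et al.
\end{itemize}
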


Two widely studied families of optimal locally recoverable codes (LRCs), namely Tamo--Barg (TB) codes and cyclic LRCs, are defined as follows.

\begin{definition}\cite{tamo2014family}
Let $\mathbb{F}_q$ be a finite field,  $r \ge 2$ be an integer such that $r+1$ divides $q-1$., and  $1\le \ell<q-1$. Let $\mathcal{P}=\mathbb{F}_q^{*}=\{1,\alpha,\alpha^2,\ldots,\alpha^{q-2}\}$,
where $\alpha$ is a primitive element of $\mathbb{F}_q$, so that the code length is $n=|\mathcal{P}|=q-1$. Define $S=\{\,i\in[\ell-1]^\dagger:\; i\not\equiv r \pmod {r+1}\,\}$.
The \emph{Tamo-Barg locally recoverable code} is the evaluation code
\[
C_{\mathrm{TB}}=\left\{
\operatorname{ev}(f)
=
\bigl(f(1),f(\alpha),\ldots,f(\alpha^{q-2})\bigr)
:\;
f(X)=\sum_{i\in S}a_iX^i,\;
a_i\in\mathbb{F}_q
\right\}.
\]
It is an $[n,k]_q$ locally recoverable code with $n=q-1,\; k=|S|=\ell-\left\lfloor\frac{\ell}{r+1}\right\rfloor,$ and locality $r$. Their dual exponent set 
\[
S^\perp = \{i \in [ q-2] : i \le q-\ell-1 \text{ or } i \equiv 1 \pmod{r+1}\}.
\]
The dual of Tamo-Barg code  given by
\[
C_{\mathrm{TB}}^\perp = \left\{ \operatorname{ev}(f) = \bigl(f(1), f(\alpha), \ldots, f(\alpha^{q-2})\bigr) : f(X) = \sum_{j \in S^\perp} a_j X^j, \; a_j \in \mathbb{F}_q \right\}.
\]
\end{definition}

For cyclic codes, locality can be characterized through the dual code. Luo \textit{et al.}~\cite{luo2018optimal} showed that a cyclic code $\mathcal{C}$ has locality $r$ if $\mathcal{C}^{\perp}$ contains a nonzero codeword of weight at most $r+1$. Building on this characterization, Tamo \textit{et al.}~\cite{tamo2015cyclic} derived a sufficient condition on the defining set that guarantees locality $r$ and yields optimal cyclic LRC constructions.
\begin{definition}[{\cite{tamo2015cyclic}}]\label{cyclicLRC}
Let $r+1$ divide $n$, $\alpha$ be a primitive $n$-th root of unity, and let $\mu=\frac{k}{r}$. Choose integers $\ell$ and $b$ such that $0\le \ell<r+1$ and $\gcd(b,n)=1$. Define $L=\{\alpha^i:\; i\equiv \ell \pmod{r+1}\}$, and $D=\{\alpha^{j+sb}:\; s=0,1,\ldots,n-\mu(r+1)\}$,
where $\alpha^j\in L$. Then the cyclic code with a defining set $Z=L\cup D$ is an optimal $(n,k,r)$ cyclic locally recoverable code.
\end{definition}

\subsection{Entanglement-Assisted Quantum Locally Recoverable Codes}
Let $\mathbb{C}$ denote the field of complex numbers, and let $\mathcal{H}=\mathbb{C}^q$ be the Hilbert space corresponding to a single qudit of dimension $q$. An $n$-qudit quantum system is described by the tensor product Hilbert space $\mathcal{H}^{\otimes n}\cong\mathbb{C}^{q^n}$, whose computational basis is given by $\{|x\rangle=|x_1\rangle\otimes\cdots\otimes|x_n\rangle:x=(x_1,\ldots,x_n)\in\mathbb{F}_q^n\}$. Accordingly, every pure quantum state can be expressed as $|\psi\rangle=\sum_{x\in\mathbb{F}_q^n}u_x|x\rangle$, where $u_x\in\mathbb{C}$ satisfy the  $\sum_{x\in\mathbb{F}_q^n}|u_x|^2=1$. Let $u=(u_1,\ldots,u_n)$ and $v=(v_1,\ldots,v_n)$ be vectors in $\mathbb{F}_q^n$, and let $\omega=e^{2\pi i/p}$, where $q=p^m$ for some prime $p$. The generalized shift and phase operators $X(u)$ and $Z(v)$ act on the computational basis as
\[
X(u)|x\rangle=|x+u\rangle,\qquad
Z(v)|x\rangle=\omega^{\operatorname{tr}(\langle v,x\rangle)}|x\rangle,
\]
where $\langle v,x\rangle=\sum_{i=1}^{n}v_ix_i$ is the standard Euclidean inner product over $\mathbb{F}_q$. The generalized Pauli group on $n$ qudits is defined by $\mathcal{G}_n=\{\omega^{\,j}X(u)Z(v):u,v\in\mathbb{F}_q^n,\;j\in\mathbb{F}_p\}$, and, for any error operator $E=\omega^{\,j}X(u)Z(v)\in\mathcal{G}_n$, its support is defined as $\operatorname{supp}(E)=\{\,i\in[n]:u_i\neq0\ \text{or}\ v_i\neq0\,\}$, which is the set of qudit positions on which $E$ acts nontrivially.

A quantum code $\mathcal{Q}_E$ is called an $[[n,\kappa,\delta;c]]_q$  entanglement-assisted quantum error-correcting code (EAQECC) if it encodes $\kappa$ logical qudits into $n$ physical qudits with the assistance of $c$ copies of a maximally entangled state, and has minimum distance $\delta$, meaning it can correct up to $\left\lfloor(\delta-1)/2\right\rfloor$ arbitrary quantum errors. In the stabilizer formalism, an EAQECC is specified by a stabilizer group 
\begin{equation}\label{eq:stb_EAQEC}
    \mathcal{S}_E=\mathcal{S}_S\times\mathcal{S}_I\subseteq\mathcal{G}_{n+c},
\end{equation}
where $\mathcal{S}_S = \langle g'_1, \dots, g'_c, h'_1, \dots, h'_c \rangle$ is generated by $c$ symplectic pairs and $\mathcal{S}_I = \langle g'_{c+1}, \dots, g'_{n-\kappa} \rangle$ by $n-\kappa-c$ commuting isotropic generators, and  the corresponding extended stabilizer generators are defined by
$g'_i=g_i^{\mathcal E}\otimes Z_i^{\mathcal D}, \;\text{for}\;1\le i\le c;\;
h'_i=h_i^{\mathcal E}\otimes X_i^{\mathcal D},\;\text{for}\; 1\le i\le c;\;
g'_j =g_j^{\mathcal E}\otimes I_{\mathcal D},\; \text{for}\;c+1\le j\le n-\kappa,$
where $g_i^{\mathcal{E}}, h_i^{\mathcal{E}}, g_j^{\mathcal{E}} \in \mathcal{G}_n$ and $Z^{\mathcal{D}}_{i},X^{\mathcal{D}}_{i}\in\mathcal{G}_c$. The corresponding code space $\mathcal{Q}_{S_E}$ is the simultaneous $+1$ eigenspace of $\mathcal{S}_E$ i.e., $\mathcal S_E$ be a set of $n-\kappa+c$ independent Pauli operators on $n+c$ qudits, generated by operators of the form $E=E_{\mathcal E}\otimes E_{\mathcal D}$, where $E_{\mathcal E}$ acts on the $n$ encoder (channel) qudits and $E_{\mathcal D}$ acts on the decoder's half of $c$ pre-shared maximally entangled pairs. Unlike an ordinary stabilizer group, the encoder-side components $E_{\mathcal E}$ need not pairwise commute; any anticommutation on the encoder side is compensated by the corresponding decoder-side operators, so that every generator of $\mathcal S_E$ commutes globally.

We briefly recall the notation for density operators. Let $\mathcal{H}_{\Omega}=(\mathbb{C}^{q})^{\otimes |\Omega|}$ denote the Hilbert space associated with a finite set $\Omega$ of qudits, and let $D(\Omega)$ denote the set of density operators on $\mathcal{H}_{\Omega}$. A density operator $\rho\in D(\Omega)$ is a positive semidefinite Hermitian operator with unit trace, i.e., $\operatorname{Tr}(\rho)=1$. A pure state $|\psi\rangle\in\mathcal{H}_{\Omega}$ is represented by the density operator $\rho=|\psi\rangle\langle\psi|$. For any subsystem $A\subseteq\Omega$, the corresponding reduced density operator is $\rho_A=\operatorname{Tr}_{\Omega\setminus A}(\rho)\in D(A)$,
where $\operatorname{Tr}_{\Omega\setminus A}$ denotes the partial trace over the complement of $A$.
 
 A quantum channel from a subsystem $\Omega$ to a subsystem $\Lambda$ is a completely positive trace-preserving (CPTP) map $\mathcal{N}:D(\Omega)\rightarrow D(\Lambda)$. By the Kraus representation, every quantum channel can be expressed as $\mathcal{N}(\rho)=\sum_k K_k\rho K_k^\dagger$, where the Kraus operators $K_k:\mathcal{H}_{\Omega}\rightarrow\mathcal{H}_{\Lambda}$ satisfy the completeness relation $\sum_k K_k^\dagger K_k=I$.
 
Recently, entanglement-assisted quantum locally recoverable codes (EA-qLRCs), introduced independently by Li \emph{et al.}~\cite{li2026entanglement} and Vijay \emph{et al.}~\cite{kumar2026entanglement}, generalize quantum locally recoverable codes by incorporating pre-shared maximally entangled pairs into the local recovery process. They are defined as follows.

\begin{definition}\label{EAqLRC}\cite{li2026entanglement,kumar2026entanglement}
An entanglement-assisted quantum code $\mathcal{Q}_E$
is called an \emph{entanglement-assisted quantum locally recoverable code} (EA-qLRC) with \emph{locality} $r$ if the following conditions are satisfied.
\begin{enumerate}
\item[(i)] For every coordinate $i\in[n]$, there exists a recovery set $\Gamma_i\subseteq[n]$ with $i\in\Gamma_i$ and 
$|\Gamma_i|\le r+1,$
and a completely positive trace-preserving (CPTP)  recovery channel
\[\mathcal R_i:
\mathsf D\!\left(
\mathcal H_{\Gamma_i\setminus\{i\}}
\otimes
\mathcal H_{\mathcal D}
\right)
\longrightarrow
\mathsf D\!\left(
\mathcal H_{\Gamma_i}
\right),\]
where $\mathcal H_{\mathcal D}$ denotes the decoder's share of the pre-shared maximally entangled state.
\item[(ii)] For every code state $\rho\in\mathcal Q_E,$
the erased qudit at coordinate $i$ can be recovered by applying $\mathcal R_i$ 
\[
\left(
\mathcal{R}_i
\otimes
I_{[n]\setminus\Gamma_i}
\right)
\left(
\rho_{([n]\setminus\{i\})}
\right)
=
\rho.
\]
\end{enumerate}
\end{definition}

Next,  stabilizer characterization of EA-qLRCs by incorporating the locality into the entanglement-assisted stabilizer formalism.

\begin{theorem}\label{Stb_EAqLRC}\cite{li2026entanglement,kumar2026entanglement}
Let $\mathcal{Q}_{\mathcal{S}_E}\subseteq(\mathbb{C}^q)^{\otimes(n+c)}$ be an
entanglement-assisted quantum stabilizer code with stabilizer group $\mathcal{S}_E$.
Suppose that for every encoder coordinate $i\in[n]$, there exist two stabilizer generators
\[
E_1 = E_{\mathcal{E}} \otimes E_{\mathcal{D}}, \qquad E_2 = E'_{\mathcal{E}} \otimes E'_{\mathcal{D}} \;\in\; \mathcal{S}_E,
\]
where $E_{\mathcal{E}} = X^\alpha Z^\beta$ and $E'_{\mathcal{E}} = X^{\alpha'} Z^{\beta'}$
satisfy $(\alpha_i,\beta_i)=(1,0)$ and $(\alpha_i',\beta_i')=(0,1)$. If the associated
recovery set $\Gamma_i = \operatorname{supp}(E_{\mathcal{E}})\cup\operatorname{supp}(E'_{\mathcal{E}})$
satisfies $|\Gamma_i|\le r+1$, then qudit $i$ can be reconstructed, following its erasure,
using only the surviving qudits in $\Gamma_i\setminus\{i\}$ together with the decoder's
shared ebits. Consequently, $\mathcal{Q}_{\mathcal{S}_E}$ is an EA-qLRC with locality $r$.
\end{theorem}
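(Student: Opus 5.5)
The plan is to reduce the erasure of qudit $i$ to a local Pauli-error correction problem and solve it with the two stabilizers $E_1,E_2$. First, erasing qudit $i$ can be replaced by the completely depolarizing channel on qudit $i$. Averaging over the single-qudit Pauli group $\{X^aZ^b: a,b\in\mathbb F_q\}$ at position $i$ gives $\frac1{q^2}\sum_{a,b}P_{a,b}\rho P_{a,b}^\dagger = \operatorname{Tr}_i(\rho)\otimes \frac{I}{q}$. So a channel that recovers $\rho$ from $\rho_{[n]\setminus\{i\}}\otimes I/q$ also recovers it after erasure. It therefore suffices to build, for every code state $\rho\in\mathcal Q_{\mathcal S_E}$ (a state on $n$ encoder qudits plus the decoder's $c$ qudits), a channel acting only on $\Gamma_i$ and $\mathcal H_{\mathcal D}$ that undoes every error $P_{a,b}=X_i^aZ_i^b$.

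Second, I will check that $E_1$ and $E_2$ identify $(a,b)$. Since $(\alpha_i,\beta_i)=(1,0)$, $E_{\mathcal E}$ acts on qudit $i$ as a pure $X$-type operator. Commuting $E_1$ past $X_i^aZ_i^b\otimes I_{\mathcal D}$ therefore gives a phase $\omega^{\operatorname{tr}(\lambda b)}$ that depends only on $b$; here $\lambda$ is a scalar normalization, and I will take $\alpha_i=1$ after scaling. Likewise, $E_2$ returns a phase depending only on $a$. The map $b\mapsto \operatorname{tr}(\lambda b)$ takes values only in $\mathbb F_p$, so for $q=p^m$ with $m>1$ a single generator does not determine $b$. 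I expect this to be the main obstacle. I will handle it by using the $\mathbb F_q$-scalar multiples $\gamma E_1$ and $\gamma E_2$, $\gamma\in\mathbb F_q$. These also lie in $\mathcal S_E$, since the stabilizer group of a $q$-ary stabilizer code is $\mathbb F_q$-linear in the symplectic picture, and they have the same supports. Because the trace form is nondegenerate, the phases $\operatorname{tr}(\gamma b)$ over all $\gamma$ determine $b$ uniquely, and the same argument determines $a$.

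Third, I will build the recovery channel. Measure the commuting family $\{\gamma E_1,\gamma E_2\}$; every operator in it is supported on $\Gamma_i\cup\mathcal D$, and $E_1,E_2$ commute globally. Read off the syndrome $(a,b)$ and apply $P_{a,b}^\dagger$ on qudit $i$. On the code space all these operators act as $+1$, so for an error $P_{a,b}$ the measurement outcome is deterministic and the state is unchanged apart from the known error. Composing with the twirl, the resulting map $\mathcal R_i$ satisfies condition (ii) of Definition~\ref{EAqLRC}. It uses only the qudits in $\Gamma_i\setminus\{i\}$, a fresh qudit at $i$, and the decoder's ebits, and $|\Gamma_i|\le r+1$ gives locality $r$. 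Finally I will remark that decoder-side components $E_{\mathcal D}$ are allowed precisely because the recovery channel has access to $\mathcal H_{\mathcal D}$; this is the only point where the argument differs from the unassisted qLRC case.
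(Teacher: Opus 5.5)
Your proposal is correct. It is essentially the argument the paper gives for the asymmetric analogue, Theorem~\ref{Stb_EAqLRC_Asym} (the paper only cites the symmetric statement): measure the $\mathbb{F}_q$-multiples $\{\eta E_1\}$ and $\{\eta E_2\}$, read off the $\mathbb{F}_q$-valued syndrome at coordinate $i$ using nondegeneracy of the trace form, and apply the inverse Pauli on qudit $i$, relying as the paper does on $\mathcal{S}_E$ being $\mathbb{F}_q$-linear (an assumption of this formalism rather than a fact about arbitrary additive $q$-ary stabilizer codes). Your twirl identity $\frac{1}{q^2}\sum_{a,b}P_{a,b}\rho P_{a,b}^\dagger=\operatorname{Tr}_i(\rho)\otimes I/q$ gives a cleaner erasure-to-Pauli-error reduction than the paper's substitution of a reference state $\lvert 0\rangle$, and correcting a general $X_i^aZ_i^b$ with both generators at once is exactly what the union recovery set $\Gamma_i$ calls for.
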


From Theorem~\ref{Stb_EAqLRC}, the following result establishes a CSS-type construction of EA-qLRCs from classical LRCs.

\begin{proposition}\label{CSS-EA-qLRC}\cite{li2026entanglement,kumar2026entanglement}
Let $\mathcal{C}_i$ be an $[n,k_i,d_i]_q$ linear code over $\mathbb{F}_q$ with
parity check matrix $H_i$ for $i=1,2$ and $c = \operatorname{rank}(H_1 H_2^{T})$. If for every coordinate $i \in [n]$, 
there exist parity-check vectors 
$c_1^{(i)} \in \mathcal{C}_1^\perp$ and $c_2^{(i)} \in \mathcal{C}_2^\perp$ such that
$i \in \operatorname{supp}(c_1^{(i)}) \cap \operatorname{supp}(c_2^{(i)}),$
and $\left| \operatorname{supp}(c_1^{(i)}) \cup \operatorname{supp}(c_2^{(i)})\right| \le r+1$, then $\mathcal{Q}_E$ is an  [$n,\kappa,\delta;c$]$_q$ EA-qLRC with locality $r$.
\end{proposition}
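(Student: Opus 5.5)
The plan is to realize $\mathcal{Q}_E$ as the standard entanglement-assisted CSS stabilizer code attached to the pair $(H_1,H_2)$, and then verify the hypotheses of Theorem~\ref{Stb_EAqLRC} coordinate by coordinate.

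\textbf{Step 1: the encoder-side operators.} Following the usual entanglement-assisted CSS recipe, take the encoder-side operators $\{X(\mathbf{a}):\mathbf{a}\in\mathcal C_1^\perp\}$ and $\{Z(\mathbf{b}):\mathbf{b}\in\mathcal C_2^\perp\}$. The $X$-type operators commute among themselves, as do the $Z$-type ones. The commutation pattern between $X(\mathbf a)$ and $Z(\mathbf b)$ is governed by the bilinear pairing $(\mathbf a,\mathbf b)\mapsto \operatorname{tr}\langle \mathbf a,\mathbf b\rangle$, whose Gram matrix with respect to the rows of $H_1,H_2$ is $H_1H_2^T$.

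\textbf{Step 2: symplectic bases and the ebit count.} By a change of basis of $\mathcal C_1^\perp$ and $\mathcal C_2^\perp$ (a Smith/symplectic-normal-form argument), split the generators into $c=\operatorname{rank}(H_1H_2^T)$ anticommuting pairs and the remaining isotropic generators. Each anticommuting pair is extended by $Z_i^{\mathcal D}$ and $X_i^{\mathcal D}$ on the decoder side, exactly as in \eqref{eq:stb_EAQEC}. This gives a commuting stabilizer group $\mathcal S_E$ on $n+c$ qudits with $2n-k_1-k_2$ generators, and the standard count gives $\kappa=k_1+k_2-n+c$.

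\textbf{Step 3: locality.} Fix $i\in[n]$. The key observation is that every element of $\mathcal S_E$ has encoder part $X(\mathbf a)$ or $Z(\mathbf b)$ with $\mathbf a\in\mathcal C_1^\perp$ or $\mathbf b\in\mathcal C_2^\perp$, and conversely each such vector lifts to an element of $\mathcal S_E$ (times some decoder Pauli), because $\mathcal S_E$ is generated by lifts of bases. So $c_1^{(i)}$ gives $E_1=X(c_1^{(i)})\otimes E_{\mathcal D}\in\mathcal S_E$ and $c_2^{(i)}$ gives $E_2=Z(c_2^{(i)})\otimes E'_{\mathcal D}\in\mathcal S_E$.

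After scaling so that the $i$-th entries equal $1$, we have $(\alpha_i,\beta_i)=(1,0)$ and $(\alpha_i',\beta_i')=(0,1)$. Moreover, $\Gamma_i=\operatorname{supp}(c_1^{(i)})\cup\operatorname{supp}(c_2^{(i)})$ has size at most $r+1$. Theorem~\ref{Stb_EAqLRC} then yields that $\mathcal Q_E$ is an EA-qLRC with locality $r$.

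\textbf{Main obstacle.} The expected difficulty is a subtlety in applying Theorem~\ref{Stb_EAqLRC}: it asks for \emph{generators}, not arbitrary group elements. I would handle this by choosing the generating set adaptively, or by noting that the proof of that theorem only uses membership in $\mathcal S_E$. Either way, I would check that the lift of $c_1^{(i)}$ keeps the decoder part confined to the decoder register. This holds because decoder-side factors never touch encoder coordinates, so the encoder support, and hence $|\Gamma_i|$, is unaffected. The identification of $c$ with $\operatorname{rank}(H_1H_2^T)$, including over non-prime fields via the trace form, is the other point that needs care. There I would cite the standard entanglement-assisted CSS construction rather than redo it.
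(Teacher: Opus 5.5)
Your proposal is correct and follows essentially the same route as the paper. The paper only cites this symmetric statement, but its proof of the asymmetric analogue, Proposition~\ref{CSS-EA-qLRC-Asym}, is exactly your Step~3: normalize $c_1^{(i)}$ and $c_2^{(i)}$ so their $i$-th entries equal $1$, take $X^{c_1^{(i)}}$ and $Z^{c_2^{(i)}}$ as encoder parts of elements of $\mathcal S_E$, and invoke Theorem~\ref{Stb_EAqLRC}. You also make explicit two points the paper passes over silently, namely that the proof of Theorem~\ref{Stb_EAqLRC} only needs membership in $\mathcal S_E$ rather than being a generator, and that decoder-side factors do not enlarge the encoder support (one small correction: a general element of $\mathcal S_E$ has encoder part $X(\mathbf a)Z(\mathbf b)$ rather than $X(\mathbf a)$ or $Z(\mathbf b)$, though you only use the lifting direction, so nothing breaks).
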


\subsection{Asymmetric Entanglement-Assisted Quantum Codes}
We now introduce the fundamental definition of asymmetric entanglement-assisted quantum error-correcting (AEAQEC) codes. This extends the standard EAQEC framework by allowing different error-correction capabilities for bit-flip $(X)$ and phase-flip $(Z)$ errors.

\begin{definition}
An asymmetric entanglement-assisted quantum error-correcting code (AEAQECC) with parameters $[[n,\kappa,\{d_X,d_Z\};c]]_q$ encodes $\kappa$ logical qudits into $n$ physical qudits using $c$ pre-shared maximally entangled pairs between the encoder and decoder, which can correct all bit-flip errors up to $\lfloor(d_X-1)/2\rfloor$ and all phase-flip errors up to $\lfloor(d_Z-1)/2\rfloor$.
\end{definition}

\begin{theorem}\cite{galindo2020asymmetric}\label{PC-AEAQEC}
Let $\mathcal{C}_i$ be an $[n,k_i,d_i]_q$ linear code over $\mathbb{F}_q$ with parity-check matrix $H_i$, for $i=1,2$. Then there exists an AEAQEC code $\mathcal{Q}_E$ with parameters $[[n,\,k_1+k_2-n+c,\{d_X,d_Z\};c]]_q$, where $d_X=\mathrm{wt}(\mathcal{C}_2\setminus(\mathcal{C}_1^\perp\cap\mathcal{C}_2))\;,d_Z=\mathrm{wt}(\mathcal{C}_1\setminus(\mathcal{C}_1\cap\mathcal{C}_2^\perp)$ and $c=\operatorname{rank}(H_1H_2^{T})=n-k_1-\dim_{\mathbb{F}_q}(\mathcal{C}_1^\perp\cap\mathcal{C}_2)=n-k_2-\dim_{\mathbb{F}_q}(\mathcal{C}_1\cap\mathcal{C}_2^\perp)$ is the minimum number of required maximally entangled pairs.
\end{theorem}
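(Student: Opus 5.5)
The statement is a stabilizer construction, so I would build the extended stabilizer group $\mathcal S_E$ explicitly from $H_1$ and $H_2$, in the form \eqref{eq:stb_EAQEC}, and then read off $\kappa$, $c$, $d_X$ and $d_Z$ from it. The $X$-type generators are $X(\mathbf h)$ for the rows $\mathbf h$ of $H_1$, and the $Z$-type generators are $Z(\mathbf h')$ for the rows $\mathbf h'$ of $H_2$. Since $X(u)Z(v)=\omega^{-\operatorname{tr}\langle u,v\rangle}Z(v)X(u)$, the failure of commutation is recorded by the matrix $M=H_1H_2^{T}$. The trace form $(a,b)\mapsto \operatorname{tr}(ab)$ is nondegenerate on $\mathbb F_q$, so commutation of all pairs $X(\lambda\mathbf h),Z(\mu\mathbf h')$ with $\lambda,\mu\in\mathbb F_q$ is equivalent to $\langle \mathbf h,\mathbf h'\rangle=0$.

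\textbf{Step 1 (normal form and ebits).} Choose invertible matrices $P,Q$ over $\mathbb F_q$ with $PMQ^{T}=\operatorname{diag}(I_c,0)$, where $c=\operatorname{rank}(M)$. Replace $H_1$ by $PH_1$ and $H_2$ by $QH_2$; this changes neither the row spaces $\mathcal C_1^\perp,\mathcal C_2^\perp$ nor the groups they generate. The first $c$ rows of each then form symplectic pairs $(g_i,h_i)$, and every other pair commutes.
\begin{itemize}
\item On each pair $(g_i,h_i)$, $i\le c$, append $Z_i^{\mathcal D}$ to $g_i$ and a suitably scaled $X_i^{\mathcal D}$ to $h_i$ on the decoder qudit $i$. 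The scaling is chosen so that the decoder-side phase cancels the encoder-side phase, and the pair commutes globally.
\item The remaining rows get $I_{\mathcal D}$.
\end{itemize}
This yields $2n-k_1-k_2$ independent commuting generators on $n+c$ qudits. Hence $n+c-\kappa=2n-k_1-k_2$, which gives $\kappa=k_1+k_2-n+c$.

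\textbf{Step 2 (formula for $c$).} The left kernel of $M$ consists of the vectors $x$ with $xH_1\in\mathcal C_2$, equivalently $xH_1\in\mathcal C_1^\perp\cap\mathcal C_2$. Because $H_1$ has full rank, the map $x\mapsto xH_1$ is injective. So $c=(n-k_1)-\dim(\mathcal C_1^\perp\cap\mathcal C_2)$, and symmetrically for the column side. Minimality of $c$ follows because any valid extension must reproduce the symplectic Gram matrix $M$ on the decoder side. A symplectic space whose form has rank $c$ needs at least $c$ qudits, counting via the standard symplectic dimension.

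\textbf{Step 3 (distances).} Errors act only on the encoder qudits.
\begin{itemize}
\item A $Z$-error $Z(\mathbf e)$ commutes with all $X$-type generators iff $H_1\mathbf e^{T}=0$, that is, $\mathbf e\in\mathcal C_1$.
\item Such an error acts trivially exactly when it lies in the isotropic part of the $Z$-stabilizer. That part consists of the $Z(\mathbf h')$ with $\mathbf h'\in\mathcal C_2^\perp$ and trivial decoder component. These are precisely the $\mathbf h'$ orthogonal to all of $\mathcal C_1^\perp$, i.e. $\mathbf h'\in\mathcal C_1\cap\mathcal C_2^\perp$.
\end{itemize}
Hence undetectable harmful $Z$-errors are the vectors in $\mathcal C_1\setminus(\mathcal C_1\cap\mathcal C_2^\perp)$, which gives $d_Z$. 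By symmetry, $d_X=\operatorname{wt}(\mathcal C_2\setminus(\mathcal C_1^\perp\cap\mathcal C_2))$.

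The main obstacle is the bookkeeping in Step 3. One must check that errors on the encoder side are never "absorbed" by the non-isotropic generators. These generators carry nontrivial decoder components, so their encoder parts are not stabilizers by themselves. This is exactly why only the intersection $\mathcal C_1\cap\mathcal C_2^\perp$ is removed, rather than all of $\mathcal C_2^\perp$.
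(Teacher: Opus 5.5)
The paper does not prove this statement. It is quoted from \cite{galindo2020asymmetric} and used as a black box, for example for $\kappa=k_1-s_1=k_2-s_2$ in Theorem~\ref{cor:two-code-optimality-asym}, and implicitly for the form $X^{u}Z^{v}$ ($u\in\mathcal C_1^\perp$, $v\in\mathcal C_2^\perp$) of encoder-side stabilizers in Proposition~\ref{CSS-EA-qLRC-Asym}.

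Your proposal is a correct, self-contained proof. It is the CSS case, worked out explicitly, of the general symplectic argument for EAQECCs. In that argument, an arbitrary $\mathbb F_q$-linear $C\subseteq\mathbb F_q^{2n}$ is extended by pairing a symplectic basis of $C$ modulo its radical $C\cap C^{\perp_s}$ with decoder qudits. In your version:
\begin{itemize}
\item the rank normal form $PMQ^{T}=\operatorname{diag}(I_c,0)$ supplies that basis;
\item the left and right kernels of $M=H_1H_2^{T}$ correspond, via $x\mapsto xH_1$ and $y\mapsto yH_2$, to the two factors $\mathcal C_1^\perp\cap\mathcal C_2$ and $\mathcal C_1\cap\mathcal C_2^\perp$ of the radical of $C=\mathcal C_1^\perp\times\mathcal C_2^\perp$.
\end{itemize}
The general route also covers non-CSS stabilizers. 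Yours gives explicit generators $X(\lambda\mathbf g_i)\otimes Z^{\mathcal D}(\lambda\mathbf f_i)$ and $Z(\mu\mathbf h_i)\otimes X^{\mathcal D}(\mu\mathbf f_i)$, with $\mathbf f_i$ the $i$th unit vector of $\mathbb F_q^{c}$. With the paper's sign convention and $\langle\mathbf g_i,\mathbf h_i\rangle=1$, no rescaling is actually needed. This explicit structure is exactly what the later locality arguments use.

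Two points should be tightened.

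First, the minimality step. The relevant object is the full symplectic Gram matrix of the encoder parts. Its only nonzero blocks are $\pm M$ and $\mp M^{T}$, so it has rank $2c$. A Gram matrix $VJV^{T}$ of vectors in $\mathbb F_q^{2c'}$, with $J$ the standard symplectic matrix, has rank at most $2c'$. This gives $c'\ge c$, so ``form of rank $c$'' should read $2c$.

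Second, Step 3. The ``absorption'' issue you call the main obstacle matters only for showing that the stated $d_Z$ is attained exactly. It is settled by noting that the decoder components on the $c$ pairs are independent, so any element of $\mathcal S_E$ with trivial decoder part lies in $\mathcal S_I$. The correction guarantee itself needs only the easy direction: an $\mathbf e\in\mathcal C_1$ with $\operatorname{wt}(\mathbf e)<d_Z$ lies in $\mathcal C_1\cap\mathcal C_2^\perp$, so $Z(\mathbf e)\otimes I_{\mathcal D}\in\mathcal S_I$. You should then state the standard Knill--Laflamme step, applied to $E_1^{\dagger}E_2$ for two phase errors of weight at most $\lfloor(d_Z-1)/2\rfloor$; it is currently left implicit.
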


In particular, if $d_X=\mathrm{wt}(\mathcal{C}_2\setminus(\mathcal{C}_1^\perp\cap\mathcal{C}_2))=\mathrm{wt}(\mathcal{C}_2)$ and $d_Z=\mathrm{wt}(\mathcal{C}_1\setminus(\mathcal{C}_1\cap\mathcal{C}_2^\perp))=\mathrm{wt}(\mathcal{C}_1)$, then we say that the result AEAQEC code $\mathcal{Q}_E$ is a pure AEAQEC code.

The parameters of an AEAQEC code satisfy the following Singleton-like bound.
\begin{lemma}\cite{hu2024new}
Let $\mathcal{Q}_E$ be an $[[n,\,k_1+k_2-n+c,\{d_X,d_Z\};c]]_q$ AEAQEC code. Then
\[
d_X+d_Z \leq n-\kappa+c+2.
\]
\end{lemma}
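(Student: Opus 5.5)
We would prove the Singleton-like bound $d_X+d_Z\le n-\kappa+c+2$ by reducing it to classical statements about the two codes underlying $\mathcal Q_E$. Since the lemma is stated for codes with parameters $[[n,k_1+k_2-n+c,\{d_X,d_Z\};c]]_q$, we work in the setting of Theorem~\ref{PC-AEAQEC}. There $\kappa=k_1+k_2-n+c$, and therefore
\[
n-\kappa+c+2=(n-k_1+1)+(n-k_2+1)-n+\ldots
\]
Rather than carry that expansion further, the cleaner route is to use the two expressions for $c$ given in Theorem~\ref{PC-AEAQEC}. Write $s_1=\dim(\mathcal C_1\cap\mathcal C_2^\perp)$ and $s_2=\dim(\mathcal C_1^\perp\cap\mathcal C_2)$. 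Then $c=n-k_2-s_1=n-k_1-s_2$. The target inequality becomes
\[
d_X+d_Z\le n-(k_1+k_2-n+c)+c+2=2n-k_1-k_2+2 .
\]

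The first key step is to bound each distance separately by a classical Singleton bound. We have $d_Z=\mathrm{wt}(\mathcal C_1\setminus(\mathcal C_1\cap\mathcal C_2^\perp))\le$ the largest minimum weight achievable inside $\mathcal C_1$ outside a subspace. To control this we pick a subspace: puncturing/shortening argument.

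\begin{itemize}
\item[] \emph{Claim:} if $V\subsetneq \mathcal C$ is a proper subspace of an $[n,k]$ code, then $\mathrm{wt}(\mathcal C\setminus V)\le n-k+1$.
\end{itemize}

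To prove the claim, consider the $k$-dimensional code $\mathcal C$ and any set $T$ of $k-1$ coordinates. The subcode $\mathcal C_T$ of codewords vanishing on $T$ has dimension at least $1$. If $\mathcal C_T\not\subseteq V$, any codeword in $\mathcal C_T\setminus V$ has weight at most $n-k+1$ and we are done. Otherwise, we refine the argument. We take $\mathcal C'$ with $\mathcal C'\cap V=0$ and $\dim\mathcal C'=k-\dim V\ge 1$; a complement gives a nonzero word outside $V$. Then we choose $T$ of size $k-1$ such that the solution space of vanishing on $T$ meets $\mathcal C\setminus V$. We use the dimension count: vanishing on $k-1$ coordinates cuts $\mathcal C$ to dimension $\ge1$. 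Varying the coset, the affine set $\{c\in\mathcal C: c|_T=a\}$ for suitable $a\neq 0$ lies outside $V$ when $V$ projects into a proper subspace.

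This claim is the main obstacle. The clean version takes $T$ of size $k-1$ on which the restriction of $\mathcal C$ is surjective (an information-set-type choice). One then picks $a$ not in the image of $V|_T$, or handles the case where $V|_T$ is already full. We would try the alternative of using $T$ of size $\dim V$, giving $\mathrm{wt}(\mathcal C\setminus V)\le n-\dim V$. That is weaker, but it suffices because the sum argument only needs the two bounds combined.

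With the claim, $d_Z\le n-k_1+1$ and $d_X\le n-k_2+1$. If either removed subspace is all of its code the corresponding weight is vacuous, and we would treat the degenerate case $\kappa=0$ separately. Summing the two bounds gives $d_X+d_Z\le 2n-k_1-k_2+2=n-\kappa+c+2$, which completes the proof.
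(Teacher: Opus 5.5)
The paper gives no proof of this lemma; it is simply quoted from \cite{hu2024new}. Your reduction is the natural one. With $\kappa=k_1+k_2-n+c$ one has exactly $n-\kappa+c+2=(n-k_1+1)+(n-k_2+1)$. Your first display, with its trailing ``$-n+\ldots$'', is garbled, but your later computation $2n-k_1-k_2+2$ is right. So it suffices to show $d_Z\le n-k_1+1$ and $d_X\le n-k_2+1$.

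\textbf{The gap.} The proposal never proves the claim you yourself call the main obstacle, and both fallbacks you offer fail.
\begin{itemize}
\item \emph{The coset idea.} A codeword with $c|_T=a\neq 0$ can have weight up to $\operatorname{wt}(a)+n-k+1$, so it gives no bound of $n-k+1$.
\item \emph{The weaker bound.} The estimate $\operatorname{wt}(\mathcal C\setminus V)\le n-\dim V$ does \emph{not} suffice. Since $\dim(\mathcal C_1\cap\mathcal C_2^\perp)=k_1-\kappa$ and $\dim(\mathcal C_1^\perp\cap\mathcal C_2)=k_2-\kappa$, it only yields $d_X+d_Z\le 2n-k_1-k_2+2\kappa$. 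This exceeds the target by $2\kappa-2$ whenever $\kappa\ge 2$. The two individual losses add rather than cancel, so the remark that ``the sum argument only needs the two bounds combined'' is false.
\end{itemize}

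\textbf{How to close it.} Choose an information set $I$ of $\mathcal C$ with $|I|=k$. For $j\in I$, let $g_j\in\mathcal C$ be the unique codeword with $g_j|_I=e_j$; these are the rows of a systematic generator matrix.
\begin{itemize}
\item Each $g_j$ vanishes on $I\setminus\{j\}$, so $\operatorname{wt}(g_j)\le n-k+1$.
\item The $g_j$ form a basis of $\mathcal C$, so a proper subspace $V$ cannot contain all of them. Hence some $g_j$ lies in $\mathcal C\setminus V$.
\end{itemize}
This is exactly your first case with $T=I\setminus\{j\}$, where $\mathcal C_T=\langle g_j\rangle$; the fix is to vary $j$, not the coset representative $a$.

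Alternatively, your size-$\dim V$ idea can be rescued. Take $T$ to be an information set of $V$. Then $\mathcal C_T\cap V=\{0\}$ and $\dim\mathcal C_T\ge k-\dim V\ge 1$. Apply the classical Singleton bound to $\mathcal C_T$ punctured on $T$, a code of length $n-\dim V$ and dimension at least $k-\dim V$. This gives a nonzero word outside $V$ of weight at most $n-k+1$, not merely at most $n-\dim V$.

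Finally, since $\kappa=k_1-s_1=k_2-s_2$, the two removed subspaces are proper simultaneously, exactly when $\kappa\ge 1$. The degenerate case $\kappa=0$ makes both distances vacuous at once.
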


\section{Asymmetric Entanglement-Assisted Quantum Locally Recoverable Codes}\label{sec:AEA-qLRCs}
In this work, we consider an asymmetric error model for local recovery, in which errors at known locations are restricted to be either $X$-type or $Z$-type, and we investigate the corresponding local recovery properties by analyzing the recovery mechanisms and locality requirements for each error type separately.

\begin{definition}[$X$-type Pauli Twirling Channel]
Let $\rho\in D(\mathcal{H}_{\Omega})$ be a density operator. The
\emph{$X$-type Pauli twirling channel} is the CPTP map
\begin{equation}
\mathcal{N}^{X}(\rho)=\frac{1}{q}\sum_{u\in\mathbb{F}_{q}} X^{u}\rho\left(X^{u}\right)^{\dagger}.
\end{equation}
where $\{X^{u}:u\in\mathbb{F}_{q}\}$ denotes the Pauli-$X$
operators on $\mathbb{C}^{q}$. For an $n$-qudit state
$\rho\in{D}(\mathcal{H}_{\Omega})$, the $X$-type Pauli twirling channel acting on the $i^{th}$ qudit is defined by
\[
\mathcal{N}^{X}_{i}(\rho)=\left(\mathrm{I}^{\otimes(i-1)}\otimes\mathcal{N}^{X}
\otimes
\mathrm{I}^{\otimes(n-i)}\right)\rho,
\]
where $\mathrm{I}$ denotes the identity channel.
\end{definition}

\begin{definition}[$Z$-type Pauli Twirling Channel]
Let $\rho\in{D}(\mathcal{H}_{\Omega})$ be a density operator. The
\emph{$Z$-type Pauli twirling channel} is the CPTP map
\[
\mathcal{N}^{Z}(\rho)=\frac{1}{q}\sum_{v\in\mathbb{F}_{q}}Z^{v}\rho\left(Z^{v}\right)^{\dagger},
\]
where $\{Z^{v}:v\in\mathbb{F}_{q}\}$ denotes the Pauli-$Z$
operators on $\mathbb{C}^{q}$. For an $n$-qudit state
$\rho\in{D}(\mathcal{H}_{\Omega})$,
the $Z$-type Pauli twirling channel acting on the $i^{th}$  qudit is defined by
\[
\mathcal{N}^{Z}_{i}(\rho)=\left(\mathrm{I}^{\otimes(i-1)}\otimes\mathcal{N}^{Z}\otimes
\mathrm{I}^{\otimes(n-i)}\right)\rho,
\]
where $\mathrm{I}$ denotes the identity channel.
\end{definition}

 \begin{definition}\label{EAqLRC-Asym-def}
An entanglement-assisted quantum code $\mathcal{Q}_E$ is called an asymmetric entanglement-assisted quantum locally recoverable code (AEA-qLRC) with locality $\{r_X, r_Z\}$, if for every coordinate $i\in[n]$, the following conditions are satisfied: 
\begin{enumerate}
    \item \textbf{Local recovery from $X$-type errors:} Suppose that only the $i^{th}$ qudit is affected by an $X$-type error. Then there exists a recovery set $\Gamma_i^X\subseteq[n]$ with $i\in\Gamma_i^X$ and $\vert{}\Gamma_i^X\vert{}\le r_X+1$, and a CPTP recovery channel:
    \[\mathcal R_i^X: \mathsf D\!\left( \mathcal H_{\Gamma_i^X\setminus\{i\}} \otimes \mathcal H_{\mathcal D} \right) \longrightarrow \mathsf D\!\left( \mathcal H_{\Gamma_i^X} \right),\]
    where $\mathcal H_{\mathcal D}$ denotes the decoder's share of the pre-shared maximally entangled state. For every code state $\rho\in\mathcal Q_E,$ the qudit at coordinate $i$ can be recovered by applying $\mathcal R_i^X$:
  \[\left( \mathcal{R}_i^X \otimes \mathrm{I}_{[n]\setminus\Gamma_i^X} \right) \left( \rho_{([n]\setminus\{i\})} \right) = \rho.\]
  
  \item  \textbf{Local recovery from $Z$-type errors:} Suppose that only the $i^{th}$ qudit is affected by an $Z$-type error. Then there exists a recovery set $\Gamma_i^Z\subseteq[n]$ with $i\in\Gamma_i^Z$ and $\vert{}\Gamma_i^Z\vert{}\le r_Z+1$, and a CPTP recovery channel:
  \[\mathcal R_i^Z: \mathsf D\!\left( \mathcal H_{\Gamma_i^Z\setminus\{i\}} \otimes \mathcal H_{\mathcal D} \right) \longrightarrow \mathsf D\!\left( \mathcal H_{\Gamma_i^Z} \right),\]
where $\mathcal H_{\mathcal D}$ denotes the decoder's share of the pre-shared maximally entangled state. For every code state $\rho\in\mathcal Q_E,$ the qudit at coordinate $i$ can be recovered by applying $\mathcal R_i^Z$:
  \[\left( \mathcal{R}_i^Z \otimes \mathrm{I}_{[n]\setminus\Gamma_i^Z} \right) \left( \rho_{([n]\setminus\{i\})} \right) = \rho.\]
\end{enumerate}
 \end{definition}

The stabilizer framework in \eqref{eq:stb_EAQEC} extends naturally to the AEAQEC setting. The following Theorem~\ref{Stb_EAqLRC_Asym} establishes a stabilizer sufficient condition of AEA-qLRCs by incorporating the asymmetric locality requirements into the entanglement-assisted stabilizer formalism.

\begin{theorem}\label{Stb_EAqLRC_Asym}
Let $\mathcal{Q}_{\mathcal{S}_E}$ be an entanglement-assisted quantum stabilizer code with stabilizer group $\mathcal{S}_E$. Suppose that for every encoder coordinate $i\in[n]$, there exist two stabilizer generators $E_1 = E_{\mathcal{E}} \otimes E_{\mathcal{D}}$ and $E_2 = E'_{\mathcal{E}} \otimes E'_{\mathcal{D}}$, where $E_{\mathcal{E}} = X^\alpha Z^\beta$ and $E'_{\mathcal{E}} = X^{\alpha'} Z^{\beta'}$ such that $(\alpha_i,\beta_i)=(1,0)$ and $(\alpha_i',\beta_i')=(0,1)$. If the associated recovery sets $\Gamma_i^Z = \operatorname{supp}(E_{\mathcal E})$ and $\Gamma_i^X = \operatorname{supp}(E_{\mathcal E}')$ satisfy $\vert{}\Gamma_i^Z\vert{}\le r_Z+1$ and $\vert{}\Gamma_i^X\vert{}\le r_X+1$, then every X-type error acting on the $i$-th encoded qudit can be recovered by accessing at most $r_X$ other encoded qudits, and every Z-type error acting on the $i$-th encoded qudit can be recovered by accessing at most $r_Z$ other encoded qudits. Consequently, $\mathcal{Q}_{\mathcal{S}_E}$ is an AEA-qLRC with  locality $\{r_X, r_Z\}$.
\end{theorem}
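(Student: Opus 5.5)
The plan is to treat the two error types separately, using the stabilizer generator whose encoder restriction at coordinate $i$ fails to commute with that error type. The basic tool is the standard syndrome/twirling argument used for Theorem~\ref{Stb_EAqLRC}.

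\textbf{$Z$-type errors.} Let $\rho=|\psi\rangle\langle\psi|$ be a code state and let $E=Z_i^{v}$ be a $Z$-type error on qudit $i$. Since $(\alpha_i,\beta_i)=(1,0)$, the encoder part $E_{\mathcal E}$ of $E_1$ has an $X$ component at $i$. A direct computation gives
\[
E_1\, Z_i^{v} = \omega^{\operatorname{tr}(v)}\, Z_i^{v}\, E_1
\]
up to the sign convention, with a phase that is nontrivial for $v\neq 0$. The distinct values of $v\in\mathbb F_q$ should be distinguished by measuring $E_1$ together with its scalar multiples, which are also in $\mathcal S_E$. If necessary I would replace $E_1$ by the generators $E_1^{a}$ obtained by scaling the $X$-component by $a\in\mathbb F_q$, since $\operatorname{tr}(av)$ over all $a$ determines $v$. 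All these operators are supported on $\Gamma_i^Z\cup\mathcal D$. Measuring them therefore needs only the qudits of $\Gamma_i^Z\setminus\{i\}$, qudit $i$ itself and the decoder's ebits, and the correction $Z_i^{-v}$ follows.

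To match Definition~\ref{EAqLRC-Asym-def} exactly, where the input is the reduced state $\rho_{[n]\setminus\{i\}}$ after twirling, I would argue as follows. The twirled state $\mathcal N^Z_i(\rho)$ is a uniform mixture of $Z_i^{v}\rho Z_i^{-v}$. The recovery channel $\mathcal R_i^Z$ acts on $\Gamma_i^Z$ and $\mathcal D$: it measures the syndrome of $\{E_1^{a}\}$, obtaining a value of $v$, and then applies $Z_i^{-v}$. This is a CPTP map on $\mathcal H_{\Gamma_i^Z\setminus\{i\}}\otimes\mathcal H_{\mathcal D}$ together with the register $i$, and it returns $\rho$ exactly. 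The key fact is that distinct $Z_i^{v}$ produce orthogonal syndrome spaces, because they anticommute differently with a stabilizer element.

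\textbf{$X$-type errors.} The argument is symmetric, using $E_2$ with $(\alpha_i',\beta_i')=(0,1)$. Here $E_2$ fails to commute with $X_i^{u}$ for $u\neq0$, and the recovery set is $\Gamma_i^X=\operatorname{supp}(E'_{\mathcal E})$ with $|\Gamma_i^X|\le r_X+1$. The number of other encoded qudits accessed is $|\Gamma_i^X|-1\le r_X$, and in the $Z$ case it is $|\Gamma_i^Z|-1\le r_Z$. The decoder-side factors $E_{\mathcal D},E'_{\mathcal D}$ act only on the ebits, which the decoder holds error-free, so they impose no further locality cost. Together the two cases give locality $\{r_X,r_Z\}$.

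\textbf{Main obstacle.} The delicate point is the passage from ``syndrome identifies the error'' to a channel acting on $\rho_{[n]\setminus\{i\}}$ that reproduces $\rho$, for which there are two routes.
\begin{itemize}
\item[(a)] Use the twirling identity: a full twirl of type $Z$ on qudit $i$ leaves information only in the $X$-basis content of qudit $i$. Then show that the stabilizer $E_1$ restricted to $\Gamma_i^Z$ lets one reconstruct that content, by a Petz-type or Knill--Laflamme erasure argument applied to the restricted error set $\{Z_i^{v}\}$.
\item[(b)] Invoke the Knill--Laflamme conditions for the error set $\{Z_i^{v}\}_v$. Since $Z_i^{-v}Z_i^{v'}$ fails to commute with an element of $\mathcal S_E$ supported on $\Gamma_i^Z\cup\mathcal D$, we get $P Z_i^{-v}Z_i^{v'} P=0$ for $v\neq v'$, where $P$ is the projector onto the code space. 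The recovery can then be realized locally on $\Gamma_i^Z\cup\mathcal D$.
\end{itemize}
I would follow the proof of Theorem~\ref{Stb_EAqLRC} from \cite{li2026entanglement,kumar2026entanglement}, specialized to each error type.
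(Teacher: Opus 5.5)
Your proposal is essentially the paper's proof. For each error type you measure the generator whose encoder part carries the opposite Pauli at coordinate $i$ ($E_1$ for $Z$-errors, $E_2$ for $X$-errors) together with its $\mathbb F_q$-multiples, read off the error value from the commutation phase, apply the inverse Pauli on qudit $i$, and count $|\Gamma_i|-1$ other qudits, with the ebits free. The multiples are genuinely needed, not just ``if necessary'', when $q$ is not prime, since $\operatorname{tr}(v)$ can vanish for $v\neq 0$.

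The one real difference is the domain of the recovery channel. You let it act on the twirled qudit $i$ as well as on $\Gamma_i\setminus\{i\}$ and $\mathcal D$. Definition~\ref{EAqLRC-Asym-def} literally asks for a channel acting on $\rho_{[n]\setminus\{i\}}$ alone. Your reading is what the syndrome argument actually supports, because a single one-type check through $i$ cannot in general restore a fully erased qudit. The paper's reference-state substitution glosses over exactly this point.
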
 
\begin{proof}
We first make explicit the reduction from qudit \emph{erasure} to a
\emph{known-location Pauli error}, which underlies the syndrome-based recovery
argument below. When qudit $i$ is erased, the decoder cannot access its original
contents; to reconstruct it, the recovery channel conceptually substitutes a fixed
reference state $\lvert 0\rangle$ at position $i$ and treats the discrepancy between
this substitution and the true (unknown) state of qudit $i$ as an unknown Pauli
error $E$ supported only on coordinate $i$. Concretely, if the original code state
had reduced density operator $\rho_i$ on qudit $i$ satisfying
$X_i^{u}\lvert 0\rangle\!\langle 0\rvert (X_i^{u})^{\dagger}=\rho_i$ for some
$u\in\mathbb{F}_q$, then recovering qudit $i$ amounts to identifying $u$ and applying
$X_i^{u}$ to the substituted reference qudit; this is exactly the role played by the
error operator $E=X^{e_X}\otimes I_{\mathcal D}$ (respectively $Z^{e_Z}\otimes I_{\mathcal D}$)
introduced below, with $u=(e_X)_i$ (respectively $(e_Z)_i$) the unknown quantity to be
determined. Crucially, the syndrome that identifies $u$ is extracted by measuring a
stabilizer element whose \emph{encoder-side support outside coordinate $i$} lies
entirely within the recovery set $\Gamma_i^X\setminus\{i\}$ (respectively
$\Gamma_i^Z\setminus\{i\}$); since a stabilizer measurement outcome is determined by
the code state and does not depend on the (as yet unset) reference qudit at $i$, this
measurement is well-defined using only the surviving qudits in
$\Gamma_i^X\setminus\{i\}$ together with the decoder's ebits. Once the syndrome is
obtained, applying the corresponding correction to the reference qudit at position
$i$ completes the reconstruction. This two-step procedure — syndrome extraction from
$\Gamma_i^X\setminus\{i\}\cup\mathcal{H}_{\mathcal D}$, followed by correction of the
reference qudit at $i$ — realizes precisely the recovery channel
$\mathcal R_i^X:\mathsf D(\mathcal H_{\Gamma_i^X\setminus\{i\}}\otimes\mathcal
H_{\mathcal D})\to\mathsf D(\mathcal H_{\Gamma_i^X})$ (respectively $\mathcal R_i^Z$)
required by Definition~\ref{EAqLRC-Asym-def}. We now carry out this procedure explicitly.

\smallskip
Fix a coordinate $i\in[n]$ of the encoder's $n$ physical qudits. By assumption, there exist two stabilizer generators $E_1=E_{\mathcal E}\otimes E_{\mathcal D}$ and $E_2=E_{\mathcal E}'\otimes E_{\mathcal D}'$ where $E_{\mathcal E}=X^\alpha Z^\beta$ and $E_{\mathcal E}'=X^{\alpha'}Z^{\beta'}$ such that $(\alpha_i,\beta_i)=(1,0)$ and $(\alpha_i',\beta_i')=(0,1)$. Let $\Gamma_i^Z=\operatorname{supp}(E_{\mathcal E})$ and $\Gamma_i^X=\operatorname{supp}(E_{\mathcal E}')$ denote the corresponding recovery sets. We show that any restricted X-type (respectively, Z-type) error acting on the $i$-th encoded qudit can be corrected by accessing only the qudits in $\Gamma_i^X\setminus\{i\}$ (respectively, $\Gamma_i^Z\setminus\{i\}$) together with the decoder's ebits.

 Local recovery from X-type errors: Suppose an X-type error occurs on the $i$-th qudit. Let $e=(e_X\mid 0)\in\mathbb F_q^{2n}$ be the symplectic representation of the error, where $\operatorname{supp}(e_X)=\{i\}$. Equivalently, the corresponding error operator is $E=\left(X^{e_X}\right)\otimes I_{\mathcal D}$.
To identify the error, the recovery channel performs syndrome measurements using the commuting family $M_{E_2}=\{E_2^\eta:\eta\in\mathbb F_q\}$. Since $[I_{\mathcal D},{E'}_{\mathcal D}^{\,\eta}]=0$, the decoder subsystem contributes no phase. The measured syndrome depends only on the symplectic inner product over the encoder subsystem:
  $$\lambda_{E_2}(\eta)=\omega^{\operatorname{tr}\!\left(\langle\eta(\alpha'\mid\beta'),(e_X\mid 0)\rangle_s\right)}=\omega^{\operatorname{tr}\!\left(\eta(e_X\cdot\beta')\right)}.$$
Accordingly, the extracted syndrome is $s_{E_2}=e_X\cdot\beta'$. Since the error is supported only on coordinate $i$, this reduces to $s_{E_2}=(e_X)_i\beta_i'$. Using $(\alpha_i',\beta_i')=(0,1)$, we obtain $s_{E_2}=(e_X)_i(1)=(e_X)_i$.
Therefore, the recovery operator is:
  $$R_i^X=\left(X_i^{-s_{E_2}}\right)\otimes I_{\mathcal D}=\left(X_i^{-(e_X)_i}\right)\otimes I_{\mathcal D}.$$Applying $R_i^X$ to the error operator yields:$$R_i^XE=\left(X_i^{-(e_X)_i}\otimes I_{\mathcal D}\right)\left(X_i^{(e_X)_i}\otimes I_{\mathcal D}\right)\propto I_{\mathcal E}\otimes I_{\mathcal D}.$$
Hence, the X-type error is corrected exactly. Finally, $\vert{}\Gamma_i^X\setminus\{i\}\vert{}=\vert{}\Gamma_i^X\vert{}-1\le(r_X+1)-1=r_X$. So the recovery operation accesses at most $r_X$ additional encoded qudits.

Local recovery from Z-type errors: 
Suppose a Z-type error occurs on the $i$-th qudit. Let $e=(0\mid e_Z)\in\mathbb F_q^{2n}$ be the symplectic representation of the error, where $\operatorname{supp}(e_Z)=\{i\}$. Equivalently, the corresponding error operator is $E=\left(Z^{e_Z}\right)\otimes I_{\mathcal D}$.
To identify the error, the recovery channel performs syndrome measurements using the commuting family $M_{E_1}=\{E_1^\eta:\eta\in\mathbb F_q\}$. Since $[I_{\mathcal D},E_{\mathcal D}^{\,\eta}]=0$, the measured syndrome is:
  $$\lambda_{E_1}(\eta)=\omega^{\operatorname{tr}\!\left(\langle\eta(\alpha\mid\beta),(0\mid e_Z)\rangle_s\right)}=\omega^{\operatorname{tr}\!\left(\eta(-e_Z\cdot\alpha)\right)}.$$Accordingly, the extracted syndrome is $s_{E_1}=-e_Z\cdot\alpha$. Since the error is supported only on coordinate $i$, this reduces to $s_{E_1}=-(e_Z)_i\alpha_i$. Using $(\alpha_i,\beta_i)=(1,0)$, we obtain $s_{E_1}=-(e_Z)_i(1)=-(e_Z)_i$.Therefore, the recovery operator is:$$R_i^Z=\left(Z_i^{\,s_{E_1}}\right)\otimes I_{\mathcal D}=\left(Z_i^{-(e_Z)_i}\right)\otimes I_{\mathcal D}.$$Applying $R_i^Z$ to the error operator yields:$$R_i^ZE=\left(Z_i^{-(e_Z)_i}\otimes I_{\mathcal D}\right)\left(Z_i^{(e_Z)_i}\otimes I_{\mathcal D}\right)\propto I_{\mathcal E}\otimes I_{\mathcal D}.$$
Hence, the Z-type error is corrected exactly. Finally, $\vert{}\Gamma_i^Z\setminus\{i\}\vert{}=\vert{}\Gamma_i^Z\vert{}-1\le(r_Z+1)-1=r_Z$. So the recovery operation accesses at most $r_Z$ additional encoded qudits.  \\ \smallskip The simultaneous measurements on the decoder-side ebits do not contribute to the locality cost, since the shared entanglement is assumed to be freely available during decoding. Therefore, every coordinate admits a recovery procedure involving at most $ r_X$ other encoded qudits for X-type errors and at most $ r_Z$ other encoded qudits for Z-type errors, establishing that $ \mathcal {Q}_{S_E}$ is an AEA-qLRC with locality $\{r_X, r_Z\}$.
\end{proof}

As a consequence of Theorem~\ref{Stb_EAqLRC_Asym}, we obtain the following CSS-type construction of AEA-qLRCs from classical locally recoverable codes.

\begin{proposition}\label{CSS-EA-qLRC-Asym}Let $\mathcal{C}_i$ be an $[n,k_i,d_i]_q$ linear code over $\mathbb{F}_q$ with parity check matrix $H_i$ for $i=1,2$ and $c = \operatorname{rank}(H_1 H_2^{T})$. If for every coordinate $i \in [n]$, there exist parity-check vectors $c_1^{(i)} \in \mathcal{C}_1^\perp$ and $c_2^{(i)} \in \mathcal{C}_2^\perp$ such that $i \in \operatorname{supp}(c_1^{(i)}) \cap \operatorname{supp}(c_2^{(i)}),$ with $\left\vert{} \operatorname{supp}(c_1^{(i)}) \right\vert{} \le r_Z+1$ and $\left\vert{} \operatorname{supp}(c_2^{(i)}) \right\vert{} \le r_X+1$, then $\mathcal{Q}_E$ is an  $[[n,\kappa,\{d_X,d_Z\},\{r_X,r_Z\};c]]_q$ AEA-qLRC with locality $\{r_X, r_Z\}$.\end{proposition}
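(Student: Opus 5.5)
The plan is to realize $\mathcal{Q}_E$ as the entanglement-assisted CSS stabilizer code of Theorem~\ref{PC-AEAQEC} and then verify the hypotheses of Theorem~\ref{Stb_EAqLRC_Asym}. From $H_1$ and $H_2$ I form the encoder-side check operators $X(\mathbf h)$ for $\mathbf h\in\mathcal C_1^\perp$ and $Z(\mathbf h')$ for $\mathbf h'\in\mathcal C_2^\perp$. The symplectic commutation matrix between these two families is $H_1H_2^T$. I then apply the standard symplectic Gram--Schmidt procedure (as in Brun et al.\ and Galindo et al.): after a change of basis within the row spaces of $H_1$ and $H_2$, $H_1H_2^T$ takes the form $\operatorname{diag}(I_c,0)$. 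The $c$ anticommuting pairs are extended by $Z^{\mathcal D}_j$ and $X^{\mathcal D}_j$ on the decoder's ebits, and the remaining generators are padded with $I_{\mathcal D}$. This produces a commuting group $\mathcal S_E$ with $c=\operatorname{rank}(H_1H_2^T)$. The parameters $\kappa=k_1+k_2-n+c$, $d_X$ and $d_Z$ are then exactly those given by Theorem~\ref{PC-AEAQEC}.

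The key observation is that \emph{every} vector of $\mathcal C_1^\perp$, not only the basis rows, yields an element of $\mathcal S_E$ whose encoder part is a pure $X$-type operator. Indeed, $\mathbf h=\sum_j a_j\mathbf g_j$ in the transformed basis gives $\prod_j (g'_j)^{a_j}=X(\mathbf h)\otimes E_{\mathcal D}$ for some decoder-side Pauli $E_{\mathcal D}$ (up to phase). Similarly, every vector of $\mathcal C_2^\perp$ gives an element $Z(\mathbf h')\otimes E'_{\mathcal D}$. Since the group operations are linear on the symplectic vectors, the decoder tails take care of themselves.

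Now fix $i\in[n]$. I scale $c_1^{(i)}$ and $c_2^{(i)}$ so that their $i$-th entries equal $1$, which is possible because $i$ lies in both supports. Take $E_1=X(c_1^{(i)})\otimes E_{\mathcal D}$, which has $\alpha=c_1^{(i)}$ and $\beta=0$, so $(\alpha_i,\beta_i)=(1,0)$. Take $E_2=Z(c_2^{(i)})\otimes E'_{\mathcal D}$, which has $(\alpha'_i,\beta'_i)=(0,1)$. Then $\Gamma_i^Z=\operatorname{supp}(c_1^{(i)})$ has size at most $r_Z+1$ and $\Gamma_i^X=\operatorname{supp}(c_2^{(i)})$ has size at most $r_X+1$. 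By Theorem~\ref{Stb_EAqLRC_Asym}, $\mathcal{Q}_E$ is an AEA-qLRC with locality $\{r_X,r_Z\}$, and with the parameters above.

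The main obstacle is that Theorem~\ref{Stb_EAqLRC_Asym} is phrased in terms of stabilizer \emph{generators}, whereas $c_1^{(i)}$ and $c_2^{(i)}$ are arbitrary elements of the duals. I would resolve this in one of two ways. The first is to note that the proof of that theorem uses only membership $E_1,E_2\in\mathcal S_E$: the measured family $\{E^\eta\}$ commutes with the stabilizer, and its outcome on code states is fixed. The second is to choose, for each $i$, the generating set of $\mathcal S_E$ so that it contains the desired element. I would rely on the first. A secondary point to watch is the convention that $X$-checks from $\mathcal C_1^\perp$ detect $Z$-errors, and hence control $r_Z$, while $Z$-checks from $\mathcal C_2^\perp$ control $r_X$. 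This labeling has to stay consistent with the assignment of $d_X$ and $d_Z$ in Theorem~\ref{PC-AEAQEC}.
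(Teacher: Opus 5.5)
Your proposal is correct and follows essentially the same route as the paper: realize $\mathcal{Q}_E$ as the EA-CSS code, normalize $c_1^{(i)}$ and $c_2^{(i)}$ at coordinate $i$ to obtain a pure $X$-type element of $\mathcal{S}_E$ (which controls $r_Z$) and a pure $Z$-type element (which controls $r_X$), and then invoke Theorem~\ref{Stb_EAqLRC_Asym}. Your explicit symplectic Gram--Schmidt construction of $\mathcal{S}_E$ is extra detail the paper omits. Your observation that Theorem~\ref{Stb_EAqLRC_Asym} needs only membership in $\mathcal{S}_E$, not generator status, addresses a point the paper's proof uses without comment.
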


 \begin{proof}Let $\mathcal{Q}_E$ be the entanglement-assisted quantum code obtained from the classical linear codes $\mathcal{C}_1$ and $\mathcal{C}_2$ via the EA-CSS construction. By construction, the encoder-side components of the stabilizer generators are determined by the classical dual codes $\mathcal{C}_1^\perp=\operatorname{rowspan}(H_1)$ and $\mathcal{C}_2^\perp=\operatorname{rowspan}(H_2)$. Consequently, every encoder-side stabilizer operator has the form $E_{\mathcal E}=X^uZ^v,$ where $u\in\mathcal{C}_1^\perp$ and $v\in\mathcal{C}_2^\perp$.Assume that for every coordinate $i\in[n]$, there exist parity-check vectors $c_1^{(i)}\in\mathcal{C}_1^\perp$ and $c_2^{(i)}\in\mathcal{C}_2^\perp$ such that $i\in \operatorname{supp}(c_1^{(i)}) \cap \operatorname{supp}(c_2^{(i)}),$ with the asymmetric locality conditions $\left\vert{} \operatorname{supp}(c_1^{(i)}) \right\vert{} \le r_Z+1$ and $\left\vert{} \operatorname{supp}(c_2^{(i)}) \right\vert{} \le r_X+1$.  Since $i\in\operatorname{supp}(c_1^{(i)})$, we have $(c_1^{(i)})_i\neq0$. Multiplying $c_1^{(i)}$ by the nonzero scalar $((c_1^{(i)})_i)^{-1}$, we may assume without loss of generality that $(c_1^{(i)})_i=1$. The corresponding encoder-side stabilizer operator is $E_{\mathcal E}=X^{c_1^{(i)}}Z^0,$ so that $(\alpha_i,\beta_i)=(1,0)$. This operator allows for local recovery from Z-type errors. The recovery set size is bounded by $\left\vert{} \operatorname{supp}(E_{\mathcal E}) \right\vert{} = \left\vert{} \operatorname{supp}(c_1^{(i)}) \right\vert{} \le r_Z+1$.  Similarly, after normalizing $c_2^{(i)}$, we obtain $E'_{\mathcal E}=X^0Z^{c_2^{(i)}},$ which satisfies $(\alpha_i',\beta_i')=(0,1)$. This operator provides local recovery from X-type errors. The recovery set size is bounded by $\left\vert{} \operatorname{supp}(E'_{\mathcal E}) \right\vert{} = \left\vert{} \operatorname{supp}(c_2^{(i)}) \right\vert{} \le r_X+1$.  Because the stabilizer group contains these operators bounded independently by $r_Z+1$ and $r_X+1$ for every coordinate $i$, the stabilizer criterion for asymmetric locality $\{r_X, r_Z\}$ is satisfied. Therefore, $\mathcal Q_E$ is an AEA-qLRC with asymmetric locality $\{r_X, r_Z\}$.
\end{proof}

\section{Bounds of CSS-type AEA-qLRC}\label{sec:bounds}
Throughout this section, all bounds are stated for AEA-qLRCs arising from the CSS-type construction of Proposition~\ref{CSS-EA-qLRC-Asym}; we do not address
bounds for general (non-CSS) entanglement-assisted stabilizer codes with asymmetric locality, which remains open (see Section~\ref{sec:conclusion}).

In this section, we establish a Singleton-like bound for AEA-qLRCs with asymmetric localities $\{r_X,r_Z\}$, extending the corresponding bounds for EA qLRCs ~\cite{li2026entanglement,kumar2026entanglement}. Let $n_{\text{opt}}^{(q)}(k, d;r)$, $k_{\text{opt}}^{(q)}(n, d;r)$, and $d_{\text{opt}}^{(q)}(n, k;r)$ denote, respectively, the smallest length, the largest dimension, and the largest minimum distance of a $q$-ary linear code with the two remaining parameters fixed.

\begin{theorem}\label{Bounds_theorem_Asym}
Let $\mathcal{Q}_E$ be an AEA-qLRC with parameters
$[[n,\kappa,\{d_X,d_Z\},\{r_X,r_Z\};c]]_q$, constructed from
Proposition~\ref{CSS-EA-qLRC-Asym} using two classical
$[n,k_i]_q$ linear codes $\mathcal{C}_i$, $i=1,2$, where
$\mathcal{C}_1$ has distance $d_Z$ and locality $r_Z$, while
$\mathcal{C}_2$ has distance $d_X$ and locality $r_X$. Then
  $$\begin{aligned} d_X+d_Z &\leq d_{\mathrm{opt}}^q(k_1+c,\kappa;r_Z) + d_{\mathrm{opt}}^q(k_2+c,\kappa;r_X),\\ 2\kappa &\leq k_{\mathrm{opt}}^q(k_1+c,d_Z;r_Z) + k_{\mathrm{opt}}^q(k_2+c,d_X;r_X),\\ n+\kappa+c &\geq n_{\mathrm{opt}}^q(\kappa,d_X;r_X) + n_{\mathrm{opt}}^q(\kappa,d_Z;r_Z).  \end{aligned}$$
  \end{theorem}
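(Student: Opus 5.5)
The plan is to reduce the quantum parameters to those of two auxiliary classical LRCs whose length is $k_i+c$ and whose dimension is $\kappa$, and then use the definitions of $d_{\mathrm{opt}}^{(q)}$, $k_{\mathrm{opt}}^{(q)}$, $n_{\mathrm{opt}}^{(q)}$ directly. By Theorem~\ref{PC-AEAQEC}, $\kappa=k_1+k_2-n+c$ and
\[
c=n-k_1-s_2=n-k_2-s_1,
\qquad s_1=\dim(\mathcal C_1\cap\mathcal C_2^\perp),\quad s_2=\dim(\mathcal C_2\cap\mathcal C_1^\perp).
\]
Hence $\kappa=k_1-s_1=k_2-s_2$. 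So $\kappa$ is the dimension of the quotients $\mathcal C_1/(\mathcal C_1\cap\mathcal C_2^\perp)$ and $\mathcal C_2/(\mathcal C_2\cap\mathcal C_1^\perp)$. Also $k_1+c=n-s_2$ and $k_2+c=n-s_1$.

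\textbf{Step 1 (auxiliary code from $\mathcal C_1$).} I would build a classical code of length $k_1+c=n-s_2$ and dimension $\kappa$ with distance at least $d_Z$ and locality $r_Z$. I would first try puncturing/shortening $\mathcal C_1$ on an information set of the $s_2$-dimensional code $\mathcal C_2\cap\mathcal C_1^\perp$. Choose $s_2$ coordinates $T$ on which $\mathcal C_2\cap\mathcal C_1^\perp$ projects bijectively. Then take $\mathcal C_1$ modulo $\mathcal C_1\cap\mathcal C_2^\perp$, realized as a complement subcode of dimension $\kappa$ on which no nonzero word lies in $\mathcal C_2^\perp$, and restrict it to $[n]\setminus T$.

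If this restriction fails to preserve distance, the fallback is the standard CSS-LRC trick used in the symmetric case~\cite{kumar2026entanglement}. Shorten $\mathcal C_1$ on $T$: this gives length $n-s_2$ and dimension at least $k_1-s_2$. Then pass to a complementary subcode of dimension $\kappa$. Its distance is at least $\mathrm{wt}(\mathcal C_1\setminus(\mathcal C_1\cap\mathcal C_2^\perp))=d_Z$, since every nonzero word avoids $\mathcal C_1\cap\mathcal C_2^\perp$. Locality $r_Z$ should survive shortening: a local parity $c_1^{(i)}\in\mathcal C_1^\perp$ restricted away from $T$ remains a parity check of the shortened code, of weight at most $r_Z+1$. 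The same locality argument has to be carried through to the $\kappa$-dimensional subcode, whose dual only grows.

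\textbf{Step 2 (symmetric code and the three inequalities).} Symmetrically, from $\mathcal C_2$ I obtain a code of length $k_2+c$, dimension $\kappa$, distance at least $d_X$ and locality $r_X$. Then, by definition of the optimal functions,
\[
d_Z\le d_{\mathrm{opt}}^{(q)}(k_1+c,\kappa;r_Z),\qquad \kappa\le k_{\mathrm{opt}}^{(q)}(k_1+c,d_Z;r_Z),\qquad k_1+c\ge n_{\mathrm{opt}}^{(q)}(\kappa,d_Z;r_Z),
\]
together with the analogous three inequalities for $\mathcal C_2$. Adding them pairwise gives the first two claims. For the third, the sum of lengths is $k_1+k_2+2c=n+\kappa+c$, using $\kappa=k_1+k_2-n+c$.

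\textbf{Main obstacle.} The delicate step is Step 1, specifically ensuring simultaneously that the auxiliary code has exactly length $k_1+c$, dimension $\kappa$, distance $\ge d_Z$, \emph{and} every coordinate retains locality $r_Z$. Shortening can destroy a coordinate's recovery set when that coordinate's local parity meets $T$ only trivially. In that case I would argue, as in the symmetric proof, that monotonicity of $d_{\mathrm{opt}}$, $k_{\mathrm{opt}}$ in length lets one pad or adjust without losing the bound. If needed, I would state the locality requirement as inherited from $\mathcal C_1^\perp$ restricted to $[n]\setminus T$.
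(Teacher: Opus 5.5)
\textbf{Gap in Step~1: you shorten on the wrong subcode.} Your skeleton matches the paper's: shorten on an information set, read off the inequalities from the definitions of $d_{\mathrm{opt}}^{(q)},k_{\mathrm{opt}}^{(q)},n_{\mathrm{opt}}^{(q)}$, then add. The problem is that Step~1 shortens $\mathcal C_1$ on the wrong subcode, and this cannot be patched. The space $\mathcal C_2\cap\mathcal C_1^\perp$ is a subcode of $\mathcal C_2$, not of $\mathcal C_1$, so an information set $T$ of it gives no control over $\mathcal C_1$ on $T$. Puncturing $\mathcal C_1$ on $T$ keeps dimension $k_1$ but can lose up to $s_2$ in weight. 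Shortening only guarantees dimension $k_1-s_2$, which is below $\kappa=k_1-s_1$ whenever $k_1<k_2$, since $s_1-s_2=k_1-k_2$. Even when the dimension suffices, nothing guarantees a $\kappa$-dimensional subcode meeting $\mathcal C_1\cap\mathcal C_2^\perp$ trivially. Your Step~2 for $\mathcal C_2$ at length $k_2+c$ has the same defect.

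\textbf{The missing idea.} Shorten each code on an information set of its \emph{own} degenerate subcode. Shorten $\mathcal C_1$ on an information set $T_1$ of $\mathcal C_1\cap\mathcal C_2^\perp$. Restriction to $T_1$ is already bijective on that subcode, so the shortened code has dimension exactly $k_1-s_1=\kappa$. It is a complement of $\mathcal C_1\cap\mathcal C_2^\perp$ in $\mathcal C_1$, so its distance is $\ge d_Z$, and it lives on $n-s_1=k_2+c$ coordinates. Symmetrically, $\mathcal C_2$ yields a code of length $n-s_2=k_1+c$ with distance $\ge d_X$.

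\textbf{Locality is not an obstacle.} If $h\in\mathcal C^\perp$ is a local check at $j\notin T$, then $h$ restricted to $[n]\setminus T$ is a check of the shortened code. It still contains $j$ in its support and has weight $\le r+1$. This is exactly the paper's systematic-generator-matrix argument. What it proves is $d_X\le d_{\mathrm{opt}}^{(q)}(k_1+c,\kappa;r_X)$ and $d_Z\le d_{\mathrm{opt}}^{(q)}(k_2+c,\kappa;r_Z)$, together with the analogous $k_{\mathrm{opt}}$ and $n_{\mathrm{opt}}$ inequalities. The third claimed inequality then follows, since it only uses $(k_1+c)+(k_2+c)=n+\kappa+c$.

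\textbf{The printed pairing is false.} The correct argument pairs $r_X$ with $k_1+c$ and $r_Z$ with $k_2+c$. That is the opposite of the printed first two inequalities: the paper's proof derives the former and then asserts the latter ``by adding''. The printed pairing, which your plan targets, is false when $r_X\ne r_Z$, so no repair of Step~1 can reach it. Over $\mathbb F_2$, let $\mathcal C_2=\operatorname{span}\{001111,110011\}$. This is a $[6,2,4]$ code with locality $1$ (checks $110000,001100,000011$). Let $\mathcal C_1=\operatorname{span}\{101010,011100,000111\}$. This is a $[6,3,3]$ code with locality $2$ (checks $111000,010101,100011$). Then $\mathcal C_1\cap\mathcal C_2^\perp=\operatorname{span}\{101010\}$ and $\mathcal C_2\cap\mathcal C_1^\perp=\{0\}$. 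Hence $c=3$, $\kappa=2$, $k_1+c=6$, $k_2+c=5$, and the code is pure with $d_X+d_Z=4+3=7$.

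On the other side, $d_{\mathrm{opt}}^{(2)}(6,2;2)\le 4$ by Griesmer. Also $d_{\mathrm{opt}}^{(2)}(5,2;1)=2$: locality $1$ groups the nonzero coordinates into classes of equal coordinates, each of size $\ge 2$. Rank $2$ needs two classes, so there are exactly two on $5$ coordinates. A message vanishing on the larger class then has weight $\le 2$. So the printed right-hand side is at most $6<7$. In this example your Step~2 would need a binary $[5,2,\ge 4]$ code, which does not exist. The correctly paired bound gives $d_{\mathrm{opt}}^{(2)}(6,2;1)+d_{\mathrm{opt}}^{(2)}(5,2;2)=4+3=7$, and this is attained.
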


  \begin{proof} Let $\mathcal{C}_i$ be classical linear codes over $\mathbb{F}_q$ for $i=1,2$. Let $\mathcal{Q}_E$ be an AEAQEC code constructed by Theorem~\ref{PC-AEAQEC}. Suppose that $\mathcal{C}_1^\perp \cap \mathcal{C}_2$ is classical linear code with  generator matrix $\begin{pmatrix} I_{n-k_1-c} & A \end{pmatrix}$.Then, we can assume that $\mathcal{C}_2$ has generator matrix 
\[G_{\mathcal{C}_2} = \begin{pmatrix} I_{n-k_1-c} & A_1 \\ O_{\kappa\times(n-k_1-c)} & P_1\end{pmatrix}.\]
The matrix $\begin{pmatrix} O_{\kappa\times(n-k_1-c)} & P_1\end{pmatrix}$ generates a code $\mathcal{C}'$ with parameters $[n,\kappa,d']_q$.
Every nonzero codeword of $\mathcal{C}'$ lies in $\mathcal{C}_2 \setminus (\mathcal{C}_1^\perp \cap \mathcal{C}_2)$, so $d' \geq d_X$. Let $\mathcal{C}_{P_1}$ be the linear code generated by $P_1$, which has parameters $[k_1+c,\,\kappa,\,d(\mathcal{C}_{P_1})\ge d_X]_q$.
By Proposition~\ref{CSS-EA-qLRC-Asym}, $\mathcal{C}_2$ has locality $r_X$, since $\mathcal{C}'\subseteq\mathcal{C}_2$, the subcode $\mathcal{C}'$ locality at most $r_X$, and hence $\mathcal{C}_{P_1}$ is a LRC with locality at most $r_X$. Therefore,
\begin{equation}\label{equ_3_asym}
d_X \leq d_{\mathrm{opt}}^q(k_1+c,\kappa;r_X), \qquad \kappa \leq k_{\mathrm{opt}}^q(k_1+c,d_X;r_X),\quad k_1+c \geq n_{\mathrm{opt}}^q(\kappa,d_X;r_X). \end{equation}
   \smallskip
 Let $\begin{pmatrix} I_{n-k_2-c} & A \end{pmatrix}$ be the generator matrix of code $\mathcal{C}_1\cap\mathcal{C}_2^{\perp}$.Then, we can assume that $\mathcal{C}_1$ has generator matrix 
\[G_{\mathcal{C}_1} = \begin{pmatrix} I_{n-k_2-c} & A_2 \\ O_{\kappa\times(n-k_2-c)} & P_2\end{pmatrix}.\]
Applying the similar argument, we obtain a code $\mathcal{C}_{P_2}$ with parameters $[k_2+c,\,\kappa,\,d(\mathcal{C}_{P_2})\ge d_Z]_q$. By Proposition~\ref{CSS-EA-qLRC-Asym}, $\mathcal{C}_1$ has locality $r_Z$, so $\mathcal{C}_{P_2}$ is a LRC with locality at most $r_Z$. Therefore,

\begin{equation}\label{equ_4_asym}
    d_Z \leq d_{\mathrm{opt}}^q(k_2+c,\kappa;r_Z), \qquad \kappa \leq k_{\mathrm{opt}}^q(k_2+c,d_Z;r_Z),\quad k_2+c \geq n_{\mathrm{opt}}^q(\kappa,d_Z;r_Z). 
\end{equation}
\smallskip
Adding~\eqref{equ_3_asym} and~\eqref{equ_4_asym} yields the first two claimed inequalities and $$(k_1+c)+(k_2+c) \geq n_{\mathrm{opt}}^q(\kappa,d_X;r_X) + n_{\mathrm{opt}}^q(\kappa,d_Z;r_Z).$$Substituting $k_1+k_2+2c= n+\kappa+c$ yields$$n+\kappa+c \geq n_{\mathrm{opt}}^q(\kappa,d_X;r_X) + n_{\mathrm{opt}}^q(\kappa,d_Z;r_Z),$$which is the third claimed inequality.
\end{proof}

\begin{corollary}\label{cor:singleton_Asym_direct}
Let $\mathcal{Q}_E$ be an AEA-qLRC with parameters
$[[n,\kappa,\{d_X,d_Z\},\{r_X,r_Z\};c]]_q$, constructed using
Proposition~\ref{CSS-EA-qLRC-Asym} from two classical
$[n,k_i]_q$ linear codes $\mathcal{C}_i$, $i=1,2$, where
$\mathcal{C}_1$ has distance $d_Z$ and locality $r_Z$, while
$\mathcal{C}_2$ has distance $d_X$ and locality $r_X$. Then
\begin{equation}\label{eqn:Singleton Bound AEA-qLRC}
    d_X+d_Z
\leq
n-\kappa+c
-\left\lceil\frac{\kappa}{r_X}\right\rceil
-\left\lceil\frac{\kappa}{r_Z}\right\rceil
+4.
\end{equation}
\end{corollary}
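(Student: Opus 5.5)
The plan is to derive the corollary from the first inequality of Theorem~\ref{Bounds_theorem_Asym}, bounding each optimal-distance term with the classical Singleton-like bound \eqref{eq:classical-singleton}. The cleaner route is to work directly from the two auxiliary codes $\mathcal{C}_{P_1}$ and $\mathcal{C}_{P_2}$ built in the proof of Theorem~\ref{Bounds_theorem_Asym}, rather than from the $d_{\mathrm{opt}}$ functions.

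First, recall the two auxiliary codes. The code $\mathcal{C}_{P_1}$ is an $[k_1+c,\kappa,\ge d_X]_q$ linear code with locality at most $r_X$. The code $\mathcal{C}_{P_2}$ is a $[k_2+c,\kappa,\ge d_Z]_q$ linear code with locality at most $r_Z$. Applying \eqref{eq:classical-singleton} to each gives
\begin{equation*}
d_X\le (k_1+c)-\kappa-\left\lceil\frac{\kappa}{r_X}\right\rceil+2,\qquad d_Z\le (k_2+c)-\kappa-\left\lceil\frac{\kappa}{r_Z}\right\rceil+2.
\end{equation*}
Next, add these two inequalities. Then use the relation $\kappa=k_1+k_2-n+c$ from Theorem~\ref{PC-AEAQEC}, in the form $k_1+k_2+2c=n+\kappa+c$. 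The right-hand side becomes $n+\kappa+c-2\kappa-\lceil\kappa/r_X\rceil-\lceil\kappa/r_Z\rceil+4$, which is exactly \eqref{eqn:Singleton Bound AEA-qLRC}.

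The main obstacle is justifying that the punctured code $\mathcal{C}_{P_1}$ really has locality $r_X$. The locality must transfer from $\mathcal{C}_2$ to its subcode $\mathcal{C}'$, and then, after deleting the $n-k_1-c$ coordinates on which $\mathcal{C}'$ vanishes, to $\mathcal{C}_{P_1}$. Two points need checking.

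\begin{itemize}
\item \emph{Restriction of recovery relations.} A local recovery relation for coordinate $j$ in $\mathcal{C}_2$ remains valid on $\mathcal{C}'$, and its restriction to the surviving coordinates still recovers $j$. Its support can only shrink, since the deleted coordinates are identically zero on $\mathcal{C}'$.
\item \emph{Index conventions.} The hypothesis of Proposition~\ref{CSS-EA-qLRC-Asym} places weight-$(r_X+1)$ checks in $\mathcal{C}_2^\perp$, so $\mathcal{C}_2$ has locality $r_X$, consistent with the corollary's hypothesis. A degenerate coordinate, identically zero on $\mathcal{C}_{P_1}$, is trivially recoverable.
\end{itemize}

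I would accept the paper's argument on this point and simply note these two facts. One edge case remains: if $\kappa=0$ the ceilings vanish and the bound is immediate.
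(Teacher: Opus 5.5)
Your proposal is correct and follows the paper's proof. The paper bounds the two $d_{\mathrm{opt}}$ terms of Theorem~\ref{Bounds_theorem_Asym} by the classical Singleton-like bound~\eqref{eq:classical-singleton} and substitutes $k_1+k_2=n+\kappa-c$; those $d_{\mathrm{opt}}$ terms come exactly from the auxiliary codes $\mathcal{C}_{P_1},\mathcal{C}_{P_2}$, to which you apply the bound directly. Your direct route also makes the correct pairing explicit ($k_1+c$ with $r_X$, and $k_2+c$ with $r_Z$), as in that theorem's proof, whereas its displayed statement swaps the localities; the summed bound is the same either way.
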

\begin{proof}
Applying the classical LRC Singleton bound~\eqref{eq:classical-singleton}
to the two terms in the first inequality of
Theorem~\ref{Bounds_theorem_Asym}, we obtain
\[
d_X+d_Z
\leq
d_{\mathrm{opt}}^q(k_1+c,\kappa;r_X)
+
d_{\mathrm{opt}}^q(k_2+c,\kappa;r_Z)
\leq
\left(
k_1+c-\kappa
-\left\lceil\frac{\kappa}{r_X}\right\rceil
+2
\right)
+
\left(
k_2+c-\kappa
-\left\lceil\frac{\kappa}{r_Z}\right\rceil
+2
\right),
\]
which simplifies to
\[
d_X+d_Z
\leq
k_1+k_2+2c-2\kappa
-\left\lceil\frac{\kappa}{r_X}\right\rceil
-\left\lceil\frac{\kappa}{r_Z}\right\rceil
+4.
\]
Since $k_1+k_2=n+\kappa-c$, substituting yields
\[
d_X+d_Z
\leq
n+\kappa-c+2c-2\kappa
-\left\lceil\frac{\kappa}{r_X}\right\rceil
-\left\lceil\frac{\kappa}{r_Z}\right\rceil
+4
=
n-\kappa+c
-\left\lceil\frac{\kappa}{r_X}\right\rceil
-\left\lceil\frac{\kappa}{r_Z}\right\rceil
+4,
\]
which proves the result.
\end{proof}

\begin{corollary}\label{cor:singleton_Asym}
Let $\mathcal{Q}_E$ be an AEA-qLRC with parameters
$[[n,\kappa,\{d_X,d_Z\},\{r_X,r_Z\};c]]_q$, constructed from
Proposition~\ref{CSS-EA-qLRC-Asym} using two classical
$[n,k_i]_q$ linear codes $\mathcal{C}_i$, $i=1,2$, where
$\mathcal{C}_1$ has distance $d_Z$ and locality $r_Z$, while
$\mathcal{C}_2$ has distance $d_X$ and locality $r_X$. Then
\begin{enumerate}\item \textbf{Cadambe-Mazumdar-like bound:} We have that$$\begin{aligned} 2\kappa\leq\;& \min_{\tau_1,\tau_2\in\mathbb{Z}_{\geq0}} \Big\{\tau_1r_X+\tau_2r_Z +k_{\mathrm{opt}}^{(q)} \!\left(k_2+c-\tau_1(r_X+1),d_X\right)\\ &\qquad\qquad+ k_{\mathrm{opt}}^{(q)} \!\left(k_1+c-\tau_2(r_Z+1),d_Z\right) \Big\}. \end{aligned}$$\item \textbf{Plotkin-like Bound:} We have that$$\begin{aligned} d_X+d_Z\leq &(q-1) \min_{\substack{0\leq\tau_1\leq\lceil\kappa/r_X\rceil-1\\ 0\leq\tau_2\leq\lceil\kappa/r_Z\rceil-1}} \Bigg\{ \frac{q^{\kappa-\tau_1r_X-1} (k_2+c-\tau_1(r_X+1))} {q^{\kappa-\tau_1r_X}-1}\\ &\qquad+ \frac{q^{\kappa-\tau_2r_Z-1} (k_1+c-\tau_2(r_Z+1))} {q^{\kappa-\tau_2r_Z}-1} \Bigg\}. \end{aligned}$$\item \textbf{Sphere-packing-like Bound:} We have that$$\begin{aligned} \kappa\leq\;& n+c- \max_{\substack{ 0\leq\tau_1\leq\lfloor(k_2+c-1)/(r_X+1)\rfloor\\ 0\leq\tau_2\leq\lfloor(k_1+c-1)/(r_Z+1)\rfloor}} \Bigg\{ \tau_1+\tau_2\\ &\qquad+ \log_q\!\Big[ V_q\!\big(k_2+c-\tau_1(r_X+1),t_X\big) V_q\!\big(k_1+c-\tau_2(r_Z+1),t_Z\big) \Big] \Bigg\}. \end{aligned}$$where $t_X = \lfloor \frac{d_X-1}{2} \rfloor$, $t_Z = \lfloor \frac{d_Z-1}{2} \rfloor$, and $V_q(M, t) = \sum_{j=0}^{t} \binom{M}{j}(q-1)^j$ is the volume of the classical $q$-ary Hamming ball.\item \textbf{Griesmer-like Bound:} We have that$$\begin{aligned} n+\kappa+c\geq\;& \max_{0\leq\tau_1\leq\lceil\kappa/r_X\rceil-1} \left\{\tau_1(r_X+1)+ \sum_{i=0}^{\kappa-\tau_1r_X-1} \left\lceil\frac{d_X}{q^i}\right\rceil\right\}\\ &+ \max_{0\leq\tau_2\leq\lceil\kappa/r_Z\rceil-1} \left\{\tau_2(r_Z+1)+ \sum_{i=0}^{\kappa-\tau_2r_Z-1} \left\lceil\frac{d_Z}{q^i}\right\rceil\right\}. \end{aligned}$$\end{enumerate}
\end{corollary}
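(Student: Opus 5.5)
The plan is to reuse the two punctured subcodes from the proof of Theorem~\ref{Bounds_theorem_Asym} and apply each classical bound of Lemma~\ref{lemma:cLRC Bounds} to each of them separately. Concretely, the proof of Theorem~\ref{Bounds_theorem_Asym} gives a code $\mathcal{C}_{P_1}$ with parameters $[k_1+c,\kappa,\ge d_X]_q$ and locality at most $r_X$, obtained from the complement of $\mathcal{C}_1^\perp\cap\mathcal{C}_2$ inside $\mathcal{C}_2$. It likewise gives $\mathcal{C}_{P_2}$ with parameters $[k_2+c,\kappa,\ge d_Z]_q$ and locality at most $r_Z$. Each item of the corollary will then follow from one classical inequality per subcode, either added together or used through the identity $k_1+k_2+2c=n+\kappa+c$ from Theorem~\ref{PC-AEAQEC}. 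Bounds stated with a true minimum distance $d$ remain valid with $d$ replaced by the lower bound $d_X$ (resp.\ $d_Z$), because $k_{\mathrm{opt}}$, the Griesmer sum and the Hamming volume are monotone in $d$ in the required direction.

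\textbf{Cadambe--Mazumdar.} Apply the classical bound to $\mathcal{C}_{P_1}$ with parameter $\tau_1$, giving $\kappa\le \tau_1 r_X+k_{\mathrm{opt}}^{(q)}(\text{len}_1-\tau_1(r_X+1),d_X)$. Apply it to $\mathcal{C}_{P_2}$ with an independent $\tau_2$. Adding the two and minimizing over both parameters gives item 1.

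\textbf{Plotkin and Griesmer.} For Plotkin, apply the classical bound to each subcode with ranges $0\le\tau_j\le\lceil\kappa/r\rceil-1$ and add; the common factor $(q-1)$ comes out of the sum. For Griesmer, each subcode gives a lower bound on its length. Adding them gives $(k_1+c)+(k_2+c)=n+\kappa+c\ge$ the sum of the two maxima.

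\textbf{Sphere packing.} For each subcode, the classical bound gives $\kappa\le \text{len}_j-\tau_j-\log_q V_q(\text{len}_j-\tau_j(r+1),t)$. Adding the two gives $2\kappa\le n+\kappa+c-\tau_1-\tau_2-\log_q[V\,V]$, and since the maxima are taken independently this yields $\kappa\le n+c-\max\{\cdots\}$. That is the stated form after cancelling one $\kappa$.

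\textbf{Main obstacle: the length labels.} The statement pairs $d_X,r_X$ with length $k_2+c$ and $d_Z,r_Z$ with $k_1+c$, whereas the proof of Theorem~\ref{Bounds_theorem_Asym} produced $\mathcal{C}_{P_1}$ of length $k_1+c$ with $(d_X,r_X)$. To fix this I will recheck that construction. The code $\mathcal{C}_1^\perp\cap\mathcal{C}_2$ has dimension $n-k_1-c$ inside $\mathcal{C}_2$ of dimension $k_2$, so the complementary block has $\kappa=k_1+k_2-n+c$ rows. The number of columns surviving after deleting the $n-k_1-c$ pivot positions is $n-(n-k_1-c)=k_1+c$. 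Since this conflicts with the statement's labels, I will carefully redo the identification of which pivot positions are deleted, and whether one should instead shorten $\mathcal{C}_2$ relative to $\mathcal{C}_2\cap\mathcal{C}_1^\perp$ using the $c$ formula $n-k_2-\dim(\mathcal{C}_1\cap\mathcal{C}_2^\perp)$. I will then state the final lengths consistently, swapping the roles if necessary.

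A second point to check is that the locality of $\mathcal{C}_{P_j}$ survives puncturing the pivot coordinates. I will argue this as in the proof of Theorem~\ref{Bounds_theorem_Asym}: the subcode $\mathcal{C}'$ vanishes on the pivot coordinates, so restricting a local parity check to the remaining coordinates preserves recoverability.
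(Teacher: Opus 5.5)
Your argument is the paper's argument taken one level down. The paper's proof substitutes the classical bounds into the displayed inequalities of Theorem~\ref{Bounds_theorem_Asym}; you apply them directly to the subcodes $\mathcal{C}_{P_1},\mathcal{C}_{P_2}$ built in that theorem's proof. Your handling of the individual items is sound: independent $\tau_1,\tau_2$, monotonicity in $d$, locality surviving deletion of coordinates on which $\mathcal{C}'$ vanishes, and the identity $(k_1+c)+(k_2+c)=n+\kappa+c$. Working with the subcodes is precisely what exposes the obstacle you flagged.

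That obstacle is genuine, and you should not look for a re-identification that yields the printed labels. Your count is right.
\begin{itemize}
\item Deleting the $n-k_1-c$ pivots of $\mathcal{C}_1^\perp\cap\mathcal{C}_2$ leaves a $[k_1+c,\kappa,\ge d_X]_q$ code of locality $r_X$.
\item Symmetrically, one gets a $[k_2+c,\kappa,\ge d_Z]_q$ code of locality $r_Z$.
\end{itemize}
This is what \eqref{equ_3_asym} records, and it is what the proof of Corollary~\ref{cor:singleton_Asym_direct} uses. However, the display of Theorem~\ref{Bounds_theorem_Asym} attaches the two lengths the other way round, and so do the statement and the paper's proof of this corollary.

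Take $d_X,d_Z$ to be the quantum distances of Theorem~\ref{PC-AEAQEC}, as both you and that proof use them. Then the printed pairing actually fails. Let $q=2$, $n=7$, $\mathcal{C}_2=\{0,\mathbf{1}\}$ and $\mathcal{C}_1=\langle \mathbf{1},1100000\rangle$.
\begin{itemize}
\item Both duals contain weight-$2$ checks through every coordinate, so any $r_X,r_Z\ge 1$ is admissible.
\item $\mathcal{C}_1^\perp\cap\mathcal{C}_2=\{0\}$ and $\mathcal{C}_1\cap\mathcal{C}_2^\perp=\langle 1100000\rangle$.
\item Hence $c=5$, $\kappa=1$, $d_X=7$, $d_Z=5$, $k_1+c=7$ and $k_2+c=6$.
\end{itemize}
Item~1 as printed, with $\tau_1=\tau_2=0$, reads $2\le k_{\mathrm{opt}}^{(2)}(6,7)+k_{\mathrm{opt}}^{(2)}(7,5)=0+1$, which is false. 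Your pairing gives $k_{\mathrm{opt}}^{(2)}(7,7)+k_{\mathrm{opt}}^{(2)}(6,5)=2$, which holds.

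So the correct conclusion of your plan is items 1--3 with $k_1+c$ attached to $(d_X,r_X,\tau_1)$ and $k_2+c$ attached to $(d_Z,r_Z,\tau_2)$. This includes the range of $\tau_1$ in the sphere-packing item. Item~4 holds as printed, because the lengths enter only through their sum.

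Your alternative, shortening $\mathcal{C}_2$ to length $k_2+c$, only guarantees dimension $\ge 2k_2+c-n=\kappa+k_2-k_1$. For pure codes, combine this with the analogous shortening of $\mathcal{C}_1$ to length $k_1+c$; the two dimensions sum to at least $2\kappa$. That does recover the printed Cadambe--Mazumdar and sphere-packing items. It does not recover the printed Plotkin item, which needs dimension $\kappa$ in each term.
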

\begin{proof}
By substituting the classical LRC bounds from Lemma~\ref{lemma:cLRC Bounds} into the respective inequalities of Theorem~\ref{Bounds_theorem_Asym}. For the Cadambe-Mazumdar, Plotkin, and Griesmer-like bounds, the results are obtained directly by evaluating the bound for $\mathcal{C}_1$ using parameters $(d_Z, r_Z)$ at evaluated length $k_1+c$, and $\mathcal{C}_2$ using parameters $(d_X, r_X)$ at evaluated length $k_2+c$. For the sphere-packing-like bound, applying the classical bound to $k_{\mathrm{opt}}^q(k_1+c,d_Z;r_Z)$ and $k_{\mathrm{opt}}^q(k_2+c,d_X;r_X)$ yields an upper bound on $2\kappa$. Substituting the  dimension relation $k_1 + k_2 = n + \kappa - c$ into this sum and subtracting $\kappa$ from both sides simplifies exactly to the stated inequality.\end{proof}

We next characterize the necessary and sufficient conditions for a pure AEA-qLRC to attain the Singleton-like bound \eqref{eqn:Singleton Bound AEA-qLRC}.
\begin{theorem}\label{pure_EA-qLRC-Asym}
Let $\mathcal{Q}_E$ be a pure AEA-qLRC with parameters
$[[n,\kappa,\{d_X,d_Z\},\{r_X,r_Z\};c]]_q$, constructed from
Proposition~\ref{CSS-EA-qLRC-Asym} using two classical
$[n,k_i]_q$ linear codes $\mathcal{C}_i$, $i=1,2$, where
$\mathcal{C}_1$ has distance $d_Z$ and locality $r_Z$, while
$\mathcal{C}_2$ has distance $d_X$ and locality $r_X$. Then $\mathcal{Q}_E$ is optimal if and only if $\mathcal{C}_1$ and $\mathcal{C}_2$ are optimal LRCs such that $d_Z=d_X$, $$\left\lceil\dfrac{k_1}{r_Z}\right\rceil = \left\lceil\dfrac{\kappa}{r_Z}\right\rceil \quad \text{and} \quad \left\lceil\dfrac{k_2}{r_X}\right\rceil = \left\lceil\dfrac{\kappa}{r_X}\right\rceil.$$
\end{theorem}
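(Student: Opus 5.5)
Since $\mathcal{Q}_E$ is pure, Theorem~\ref{PC-AEAQEC} gives $d_X=\mathrm{wt}(\mathcal{C}_2)=d_2$ and $d_Z=\mathrm{wt}(\mathcal{C}_1)=d_1$. The plan is to compare the quantum Singleton-like bound \eqref{eqn:Singleton Bound AEA-qLRC} directly with the sum of the two classical Singleton-like bounds \eqref{eq:classical-singleton}. Apply \eqref{eq:classical-singleton} to $\mathcal{C}_1$ (locality $r_Z$) and $\mathcal{C}_2$ (locality $r_X$):
\[
d_Z\le n-k_1-\left\lceil\tfrac{k_1}{r_Z}\right\rceil+2,\qquad d_X\le n-k_2-\left\lceil\tfrac{k_2}{r_X}\right\rceil+2 .
\]
Adding these and using $k_1+k_2=n+\kappa-c$ gives
\[
d_X+d_Z\le n-\kappa+c-\left\lceil\tfrac{k_1}{r_Z}\right\rceil-\left\lceil\tfrac{k_2}{r_X}\right\rceil+4 .
\]
Since $\kappa\le\min\{k_1,k_2\}$ (from $c\le n-k_i$, i.e.\ $\kappa=k_1+k_2-n+c\le k_j$), we have $\lceil k_1/r_Z\rceil\ge\lceil\kappa/r_Z\rceil$ and $\lceil k_2/r_X\rceil\ge\lceil\kappa/r_X\rceil$. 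Hence this chain reproduces \eqref{eqn:Singleton Bound AEA-qLRC}, and equality in \eqref{eqn:Singleton Bound AEA-qLRC} forces equality in every link.

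\emph{Sufficiency.} If both codes are optimal LRCs, the two classical bounds are equalities. If moreover the two ceiling conditions hold, the last step is also an equality, so $d_X+d_Z$ meets \eqref{eqn:Singleton Bound AEA-qLRC} and $\mathcal{Q}_E$ is optimal.

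\emph{Necessity.} Conversely, if \eqref{eqn:Singleton Bound AEA-qLRC} holds with equality, each of the two classical inequalities must be tight, so $\mathcal{C}_1,\mathcal{C}_2$ are optimal LRCs. Each ceiling inequality must also be tight, which gives the two ceiling conditions.

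The condition $d_Z=d_X$ does not drop out of this chain. I expect deriving it to be the main obstacle; it must come from the structure used in the proof of Theorem~\ref{Bounds_theorem_Asym}. There, for a pure code, the subcode $\mathcal{C}_{P_1}$ of length $k_1+c$ and dimension $\kappa$ has distance $\ge d_X$ and locality $r_X$. Similarly, $\mathcal{C}_{P_2}$ of length $k_2+c$ has distance $\ge d_Z$ and locality $r_Z$.

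My first attempt would be to show that optimality forces the punctured/shortened pieces to be simultaneously optimal. This gives $d_X=k_1+c-\kappa-\lceil\kappa/r_X\rceil+2$ and, using the optimality of $\mathcal{C}_2$ with $n-k_2=k_1+c-\kappa$, $d_X=n-k_2-\lceil k_2/r_X\rceil+2$, with the analogous pair of identities for $d_Z$.

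Comparing these identities should pin down the two distances. The idea is that $\mathcal{C}_1\cap\mathcal{C}_2^\perp$ and $\mathcal{C}_1^\perp\cap\mathcal{C}_2$ have the same dimension relation $n-k_1-c$ vs.\ $n-k_2-c$ through $c$. I would then try to show that the redundancy split $n-k_1$ vs.\ $n-k_2$ between the two codes must balance, mirroring the symmetric argument in \cite{kumar2026entanglement}, and thus conclude $d_Z=d_X$.

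If this balancing argument fails in general, I would fall back on the paper's own reading of the bound. That means interpreting \emph{optimal} as attaining the bound in each component separately, i.e.\ $d_Z=d_{\mathrm{opt}}$ and $d_X=d_{\mathrm{opt}}$ at the common length and dimension, which yields $d_X=d_Z$ directly.
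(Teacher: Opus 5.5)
Your chain is exactly the paper's argument for the LRC-optimality and ceiling conditions: you sum the two classical Singleton-like bounds and then use $\lceil k_1/r_Z\rceil\ge\lceil\kappa/r_Z\rceil$ and $\lceil k_2/r_X\rceil\ge\lceil\kappa/r_X\rceil$. The gap is $d_Z=d_X$, and neither of your plans can close it.

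Suppose ``optimal'' means that $d_X+d_Z$ attains the bound of Corollary~\ref{cor:singleton_Asym_direct}. Then $d_Z=d_X$ is simply not necessary. Your two identities for $d_X$ are the same identity, because $k_1+c-\kappa=n-k_2$ and $\lceil\kappa/r_X\rceil=\lceil k_2/r_X\rceil$. Nothing ties $n-k_2-\lceil k_2/r_X\rceil+2$ to $n-k_1-\lceil k_1/r_Z\rceil+2$.

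Concretely, on $\mathbb{F}_8^*$ ($\alpha$ primitive) take the Reed--Solomon codes $\mathcal{C}_1=\{\operatorname{ev}(f):\deg f<2\}$ with $r_Z=2$ and $\mathcal{C}_2=\{\operatorname{ev}(f):\deg f<3\}$ with $r_X=3$. Both are optimal LRCs, $[7,2,6]$ and $[7,3,5]$. The intersections are $\mathcal{C}_1\cap\mathcal{C}_2^\perp=\langle\operatorname{ev}(x)\rangle$ and $\mathcal{C}_1^\perp\cap\mathcal{C}_2=\langle\operatorname{ev}(x),\operatorname{ev}(x^2)\rangle$, so $\kappa=1$ and $c=3$. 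The code is pure: $\operatorname{ev}(x-1)$ and $\operatorname{ev}((x-1)(x-\alpha))$ are minimum-weight words outside these intersections. Both ceiling conditions hold, and
\[
d_X+d_Z=11=n-\kappa+c-\left\lceil\frac{\kappa}{r_X}\right\rceil-\left\lceil\frac{\kappa}{r_Z}\right\rceil+4,
\]
yet $d_X=5\neq 6=d_Z$.

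The same example disposes of your fallback. Each component is optimal at its own length ($k_1+c\neq k_2+c$) and its own locality ($r_X\neq r_Z$). So componentwise optimality does not force $d_X=d_Z$ either, and there is no ``common length and dimension'' to appeal to.

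The missing idea is the notion of optimality the paper's proof actually uses. The code's distance is $d=\min\{d_X,d_Z\}$, and $\mathcal{Q}_E$ is called optimal when $2d$ equals the right-hand side of the Singleton-like bound; the paper writes $2d\le d_Z+d_X$ for this reason. With that reading you just prepend one link to your chain:
\[
2\min\{d_X,d_Z\}\le d_X+d_Z\le n-\kappa+c-\left\lceil\frac{k_1}{r_Z}\right\rceil-\left\lceil\frac{k_2}{r_X}\right\rceil+4\le n-\kappa+c-\left\lceil\frac{\kappa}{r_Z}\right\rceil-\left\lceil\frac{\kappa}{r_X}\right\rceil+4 .
\]
Equality of the two ends forces every link to be tight. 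Tightness of the first link is precisely $d_X=d_Z$; the remaining links give LRC-optimality and the ceiling conditions, as you showed.

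In the sufficiency direction you then genuinely need the hypothesis $d_Z=d_X$ to make the first link an equality. Your current sufficiency argument never uses that hypothesis, which shows you were proving the $d_X+d_Z$ version — and for that version the necessity of $d_Z=d_X$ fails.
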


\begin{proof}$(\Leftarrow)$ Suppose $\mathcal{C}_1$ and $\mathcal{C}_2$ are optimal classical LRCs with $d_Z = d_X = d$, $\lceil k_1/r_Z \rceil = \lceil \kappa/r_Z \rceil$, and $\lceil k_2/r_X \rceil = \lceil \kappa/r_X \rceil$. From the classical LRC Singleton bound~\eqref{eq:classical-singleton}, we have $d_Z = n - k_1 - \lceil k_1/r_Z \rceil + 2$ and $d_X= n - k_2 - \lceil k_2/r_X \rceil + 2$. Thus,
$$\begin{aligned} 2d = d_Z+d_X &= 2n - (k_1+k_2) - \left\lceil\frac{k_1}{r_Z}\right\rceil - \left\lceil\frac{k_2}{r_X}\right\rceil + 4 \\ &= n - \kappa + c - \left\lceil\frac{\kappa}{r_Z}\right\rceil - \left\lceil\frac{\kappa}{r_X}\right\rceil + 4, \end{aligned}$$
since $\kappa = k_1+k_2 - n + c$. Because $\min\{d_X, d_Z\}=d$, $\mathcal{Q}_E$ attains the Singleton-like bound \eqref{eqn:Singleton Bound AEA-qLRC}, hence $\mathcal{Q}_E$ is optimal.

$(\Rightarrow)$ Suppose that $\mathcal{Q}_E$ is an $[[n,\kappa,\{d_X,d_Z\},\{r_X,r_Z\};c]]_q$ optimal pure AEA-qLRC. Then, from \eqref{eqn:Singleton Bound AEA-qLRC} $$2d = n - \kappa + c - \left\lceil\frac{\kappa}{r_Z}\right\rceil - \left\lceil\frac{\kappa}{r_X}\right\rceil + 4.$$Since $\mathcal{C}_1$ and $\mathcal{C}_2$ have localities $r_Z$ and $r_X$ respectively, the bound in~\eqref{eq:classical-singleton} gives $d_Z \le n - k_1 - \lceil k_1/r_Z \rceil + 2$ and $d_X \le n - k_2 - \lceil k_2/r_X \rceil + 2$. Adding these two inequalities, we get:$$2d \leq d_Z+d_X \le n - \kappa + c - \left\lceil\frac{k_1}{r_Z}\right\rceil - \left\lceil\frac{k_2}{r_X}\right\rceil + 4.$$Comparing this to the optimality equation implies:$$\left\lceil\frac{k_1}{r_Z}\right\rceil + \left\lceil\frac{k_2}{r_X}\right\rceil \le \left\lceil\frac{\kappa}{r_Z}\right\rceil + \left\lceil\frac{\kappa}{r_X}\right\rceil.$$Since $k_i \ge \kappa$, we inherently have $\lceil k_1/r_Z \rceil \ge \lceil \kappa/r_Z \rceil$ and $\lceil k_2/r_X \rceil \ge \lceil \kappa/r_X \rceil$. Thus, the inequality forces exactly:$$\left\lceil\frac{k_1}{r_Z}\right\rceil = \left\lceil\frac{\kappa}{r_Z}\right\rceil \quad \text{and} \quad \left\lceil\frac{k_2}{r_X}\right\rceil = \left\lceil\frac{\kappa}{r_X}\right\rceil.$$This in turn implies that $2d = \Bigl(n - k_1 - \lceil k_1/r_Z \rceil + 2\Bigr) + \Bigl(n - k_2 - \lceil k_2/r_X \rceil + 2\Bigr)$. Therefore, $d_Z = n - k_1 - \lceil k_1/r_Z \rceil + 2$ and $d_X = n - k_2 - \lceil k_2/r_X \rceil + 2$, meaning both $\mathcal{C}_1$ and $\mathcal{C}_2$ independently attain the classical bound \eqref{eq:classical-singleton}. Hence, $\mathcal{C}_1$ and $\mathcal{C}_2$ are optimal LRCs.\end{proof}

\begin{lemma}\label{lem:ceiling}
Let $k,r$ be positive integers and let $s$ be an integer with $0\le s\le k$. Then
\[
\left\lceil\frac{k}{r}\right\rceil=\left\lceil\frac{k-s}{r}\right\rceil
\quad\Longleftrightarrow\quad
s\le (k-1)\bmod r.
\]
\end{lemma}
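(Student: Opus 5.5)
The plan is to reduce everything to the Euclidean division of $k-1$ by $r$. I will write $k-1 = ar + b$ with $a\ge 0$ and $b=(k-1)\bmod r\in\{0,1,\ldots,r-1\}$. Then $k = ar + b + 1$ with $1\le b+1\le r$, so $\lceil k/r\rceil = a+1$. The whole statement then becomes a question about the integer $k-s$: exactly when does it satisfy $\lceil (k-s)/r\rceil = a+1$?

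For positive integers $m$, the identity $\lceil m/r\rceil = a+1$ holds if and only if $ar < m \le (a+1)r$. I will apply this with $m = k-s$. Since $s\ge 0$, we have $k-s\le k\le (a+1)r$, so the upper inequality is automatic. The equality of ceilings is therefore equivalent to $k-s > ar$. Substituting $k = ar+b+1$, this reads $b+1-s>0$, that is, $s\le b=(k-1)\bmod r$. This argument proves both directions at once.

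The only point needing care is the boundary case $k-s=0$, i.e.\ $s=k$, where $\lceil 0/r\rceil = 0$ and the characterization above for positive $m$ does not directly apply. I will handle it separately.
\begin{itemize}
\item In this case $\lceil (k-s)/r\rceil = 0 \ne a+1$, so the left-hand side of the lemma is false.
\item On the right-hand side, $s = k = ar+b+1 > b$, so $s\le b$ is also false.
\end{itemize}
Hence the equivalence holds in this case as well. Equivalently, the criterion $ar < m \le (a+1)r$ remains valid for $m=0$, because $0 > ar$ fails whenever $a \ge 0$. So the case can also be absorbed by noting that the characterization works for all integers $m\ge 0$.

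I do not expect a genuine obstacle here. The main thing to get right is the convention that $(k-1)\bmod r$ lies in $\{0,\ldots,r-1\}$, together with the degenerate endpoint $s=k$. This formulation is what will later be used to restate the ceiling conditions of Theorem~\ref{pure_EA-qLRC-Asym} in terms of $s_1=k_1-\kappa$ and $s_2=k_2-\kappa$, taking $r=r_Z$ and $r=r_X$ respectively.
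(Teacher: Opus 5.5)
Your proof is correct and is essentially the paper's argument: both use division with remainder and locate $k-s$ between consecutive multiples of $r$. The paper divides $k$ by $r$ and splits into the cases $b=0$ and $1\le b\le r-1$, whereas dividing $k-1$ by $r$ makes $\lceil k/r\rceil=a+1$ hold uniformly and removes that case split; your separate treatment of $s=k$ is harmless but unnecessary, since $\lceil x\rceil=a+1$ if and only if $a<x\le a+1$ holds for every real $x$, including $x=0$.
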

\begin{proof}
Write $k=ar+b$ with $a=\lfloor k/r\rfloor$ and $0\le b<r$.

\emph{Case $b=0$.} Then $\lceil k/r\rceil=a$, and $k-s=(a-1)r+(r-s)$. If
$0\le s\le r-1$, then $0<r-s\le r$, so $\lceil (k-s)/r\rceil=a$; if $s\ge r$, then
$\lceil(k-s)/r\rceil\le a-1$. Hence $\lceil k/r\rceil=\lceil(k-s)/r\rceil$ if and only
if $s\le r-1$. Since $b=0$, we have $(k-1)\bmod r=r-1$, so this is exactly the
condition $s\le(k-1)\bmod r$.

\emph{Case $1\le b\le r-1$.} Then $\lceil k/r\rceil=a+1$, and $k-s=ar+(b-s)$. If
$s<b$, then $0<b-s\le r-1$, so $\lceil(k-s)/r\rceil=a+1$; if $s\ge b$, then
$\lceil(k-s)/r\rceil\le a$. Hence $\lceil k/r\rceil=\lceil(k-s)/r\rceil$ if and only
if $s\le b-1$. Since $b\ge1$, we have $(k-1)\bmod r=b-1$, so this is exactly the
condition $s\le(k-1)\bmod r$.

In both cases the equivalence holds, proving the lemma.
\end{proof}

\begin{theorem}\label{cor:two-code-optimality-asym}
Let $\mathcal{Q}_E$ be a pure AEA-qLRC with parameters
$[[n,\kappa,\{d_X,d_Z\},\{r_X,r_Z\};c]]_q$, constructed from
Proposition~\ref{CSS-EA-qLRC-Asym} using two classical
$[n,k_i]_q$ linear codes $\mathcal{C}_i$, $i=1,2$, where
$\mathcal{C}_1$ has distance $d_Z$ and locality $r_Z$, while
$\mathcal{C}_2$ has distance $d_X$ and locality $r_X$. Let $s_1 = \dim(\mathcal{C}_1 \cap \mathcal{C}_2^\perp)$ and $s_2 = \dim(\mathcal{C}_2 \cap \mathcal{C}_1^\perp)$. Then $\mathcal{Q}_E$ is optimal if and only if $\mathcal{C}_1$ and $\mathcal{C}_2$ are optimal LRCs with $d_Z = d_X$ such that$$s_1 \;\le\; (k_1-1) \bmod r_Z \quad \text{and} \quad s_2 \;\le\; (k_2-1) \bmod r_X.$$\end{theorem}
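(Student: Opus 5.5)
The plan is to reduce the statement to Theorem~\ref{pure_EA-qLRC-Asym} and then translate its two ceiling conditions with Lemma~\ref{lem:ceiling}. Theorem~\ref{pure_EA-qLRC-Asym} already says that $\mathcal{Q}_E$ is optimal exactly when $\mathcal{C}_1,\mathcal{C}_2$ are optimal LRCs with $d_Z=d_X$, $\lceil k_1/r_Z\rceil=\lceil\kappa/r_Z\rceil$ and $\lceil k_2/r_X\rceil=\lceil\kappa/r_X\rceil$. It therefore suffices to show that, under the standing hypotheses, each ceiling equality is equivalent to the corresponding inequality on $s_1$ or $s_2$.

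First I express $\kappa$ in terms of $k_1$ and $s_1$, and in terms of $k_2$ and $s_2$. From Theorem~\ref{PC-AEAQEC} we have $c=n-k_2-\dim(\mathcal{C}_1\cap\mathcal{C}_2^\perp)=n-k_2-s_1$ and $c=n-k_1-\dim(\mathcal{C}_1^\perp\cap\mathcal{C}_2)=n-k_1-s_2$. Substituting into $\kappa=k_1+k_2-n+c$ gives
\[
\kappa=k_1-s_1=k_2-s_2 .
\]
This is the key identity. As a consistency check, it agrees with the proof of Theorem~\ref{Bounds_theorem_Asym}, where $\mathcal{C}_1$ splits into $\mathcal{C}_1\cap\mathcal{C}_2^\perp$ (dimension $n-k_2-c=s_1$) plus a $\kappa$-dimensional complement.

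Next I apply Lemma~\ref{lem:ceiling} twice. With $k=k_1$, $r=r_Z$, $s=s_1$, the hypothesis $0\le s_1\le k_1$ holds because $s_1$ is the dimension of a subspace of $\mathcal{C}_1$. The lemma then gives $\lceil k_1/r_Z\rceil=\lceil (k_1-s_1)/r_Z\rceil=\lceil\kappa/r_Z\rceil$ if and only if $s_1\le (k_1-1)\bmod r_Z$. In the same way, with $k=k_2$, $r=r_X$, $s=s_2\le k_2$, we get $\lceil k_2/r_X\rceil=\lceil\kappa/r_X\rceil$ if and only if $s_2\le (k_2-1)\bmod r_X$. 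Substituting these equivalences into Theorem~\ref{pure_EA-qLRC-Asym} yields both directions of the claim at once.

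The only real obstacle is the bookkeeping in the identity $\kappa=k_1-s_1=k_2-s_2$. Theorem~\ref{PC-AEAQEC} pairs $c$ with $n-k_1$ and $\dim(\mathcal{C}_1^\perp\cap\mathcal{C}_2)=s_2$, which is crossed relative to the naming of $s_1$ and $s_2$, so each $s_i$ must be matched to the right $k_i$. A minor point is that Lemma~\ref{lem:ceiling} requires $k_1,k_2\ge 1$; the case $\kappa=0$, or $k_i=0$, is degenerate and would be excluded or handled separately.
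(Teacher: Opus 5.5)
Your proposal is correct and follows the paper's proof essentially verbatim: the paper also reads off $\kappa=k_1-s_1=k_2-s_2$ from Theorem~\ref{PC-AEAQEC}, substitutes this into the two ceiling conditions of Theorem~\ref{pure_EA-qLRC-Asym}, and converts them using Lemma~\ref{lem:ceiling} with $(k,s,r)=(k_1,s_1,r_Z)$ and $(k_2,s_2,r_X)$. Your extra remarks, on the crossed pairing of $s_2$ with $n-k_1$ in the formula for $c$ and on the degenerate case $k_i=0$, are careful additions that the paper leaves implicit.
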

\begin{proof}
By Theorem~\ref{PC-AEAQEC}, the dimension of the AEAQEC code is
$\kappa = k_1 + k_2 - n + c = k_1 - s_1 = k_2 - s_2$. By
Theorem~\ref{pure_EA-qLRC-Asym}, $\mathcal{Q}_E$ is optimal if and only if
$\mathcal{C}_1$ and $\mathcal{C}_2$ are optimal, $d_Z = d_X$, and the ceiling
conditions $\left\lceil\frac{k_1}{r_Z}\right\rceil=\left\lceil\frac{\kappa}{r_Z}\right\rceil$
and $\left\lceil\frac{k_2}{r_X}\right\rceil=\left\lceil\frac{\kappa}{r_X}\right\rceil$
hold. Substituting $\kappa=k_1-s_1$ and $\kappa=k_2-s_2$, these conditions become
\[
\left\lceil\frac{k_1}{r_Z}\right\rceil=\left\lceil\frac{k_1-s_1}{r_Z}\right\rceil
\quad\text{and}\quad
\left\lceil\frac{k_2}{r_X}\right\rceil=\left\lceil\frac{k_2-s_2}{r_X}\right\rceil.
\]
By Lemma~\ref{lem:ceiling} applied with $(k,s,r)=(k_1,s_1,r_Z)$ and
$(k,s,r)=(k_2,s_2,r_X)$ respectively, these two conditions hold if and only if
\[
s_1\le(k_1-1)\bmod r_Z
\quad\text{and}\quad
s_2\le(k_2-1)\bmod r_X.
\]
Therefore, both equality conditions are simultaneously satisfied if and only if
$s_1\le(k_1-1)\bmod r_Z$ and $s_2\le(k_2-1)\bmod r_X$. Hence, the theorem is proved.
\end{proof}

\begin{theorem}\label{EA-qLRC-LCP-Asym}
Let $\mathcal{Q}_E$ be a pure AEA-qLRC with parameters
$[[n,\kappa,\{d_X,d_Z\},\{r_X,r_Z\};c]]_q$, constructed from
Proposition~\ref{CSS-EA-qLRC-Asym} using two classical
$[n,k_i]_q$ linear codes $\mathcal{C}_i$, $i=1,2$, where
$\mathcal{C}_1$ has distance $d_Z$ and locality $r_Z$, while
$\mathcal{C}_2$ has distance $d_X$ and locality $r_X$. Then $\mathcal{Q}_E$ is an optimal maximally entangled pure AEA-qLRC if and only if $\mathcal{C}_1$ and $\mathcal{C}_2$ are optimal LRCs that form a Linear Complementary Pair (LCP), i.e., $k_1=k_2=k$, $d_Z=d_X=d$, and $\mathcal{C}_1 \cap \mathcal{C}_2^\perp = \{0\}$.\end{theorem}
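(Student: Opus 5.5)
The plan is to read ``maximally entangled'' in the usual EA sense, namely that the number of ebits is as large as possible, $c=n-\kappa$, so that the isotropic part $\mathcal S_I$ of the stabilizer is trivial. Theorem~\ref{PC-AEAQEC} gives $c=n-k_1-s_2=n-k_2-s_1$ with $s_1=\dim(\mathcal C_1\cap\mathcal C_2^\perp)$ and $s_2=\dim(\mathcal C_2\cap\mathcal C_1^\perp)$. It also gives $\kappa=k_1-s_1=k_2-s_2$. Hence $c=n-\kappa$ holds if and only if $k_1+s_2=\kappa=k_1-s_1$, i.e.\ $s_1+s_2=0$, i.e.\ $s_1=s_2=0$.

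The first step is to show that $s_1=0$ forces $s_2=0$ and $k_1=k_2$. From $\kappa=k_1-s_1=k_2-s_2$ we get $k_1-k_2=s_1-s_2$. For the other direction, I use the identity $\dim(\mathcal C_1\cap\mathcal C_2^\perp)=\dim(\mathcal C_1)-\operatorname{rank}(G_1H_2^T)$ together with $\operatorname{rank}(G_1H_2^T)=\operatorname{rank}(H_2G_1^T)$. The equality of ranks shows that the pairing $\mathcal C_1\times\mathcal C_2^\perp\to\mathbb F_q$ has one common rank, call it $\rho$. This gives $s_1=k_1-\rho$ and $\dim(\mathcal C_2^\perp\cap\mathcal C_1^{\perp})=(n-k_2)-\rho$.

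I expect this symmetric bookkeeping to be the main obstacle. What I actually need is that the three conditions ``$\mathcal C_1\cap\mathcal C_2^\perp=\{0\}$ with $k_1=k_2$'', ``$s_1=s_2=0$'' and ``$c=n-\kappa$'' are equivalent. With $k_1=k_2$ the relation $k_1-k_2=s_1-s_2$ gives $s_1=s_2$, so $s_1=0$ implies $s_2=0$. Conversely, $s_1=s_2=0$ gives $k_1=k_2$ by the same relation. So the LCP condition as stated, $k_1=k_2$ together with $\mathcal C_1\cap\mathcal C_2^\perp=\{0\}$, is exactly the maximal-entanglement condition $s_1=s_2=0$.

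With this in hand, optimality is obtained from Theorem~\ref{cor:two-code-optimality-asym}. For a pure code, $\mathcal Q_E$ is optimal if and only if $\mathcal C_1,\mathcal C_2$ are optimal LRCs with $d_Z=d_X$, $s_1\le (k_1-1)\bmod r_Z$ and $s_2\le(k_2-1)\bmod r_X$. When $s_1=s_2=0$ the two ceiling conditions hold automatically, because $(k-1)\bmod r\ge 0$. They are equivalently the trivial identities $\lceil k_i/r\rceil=\lceil \kappa/r\rceil$ with $\kappa=k_i$.

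Putting this together: if $\mathcal Q_E$ is optimal and maximally entangled, then $s_1=s_2=0$ gives $k_1=k_2$ and $\mathcal C_1\cap\mathcal C_2^\perp=\{0\}$. Theorem~\ref{cor:two-code-optimality-asym} then gives that both codes are optimal LRCs with $d_X=d_Z$, so they form an LCP of optimal LRCs. Conversely, if $\mathcal C_1,\mathcal C_2$ form an LCP of optimal LRCs with $d_X=d_Z$, then $s_1=0$ and $s_2=0$, so $c=n-\kappa$ and $\kappa=k$. The conditions of Theorem~\ref{cor:two-code-optimality-asym} hold trivially, so the code is optimal. It is also pure by hypothesis, since $d_Z=\operatorname{wt}(\mathcal C_1\setminus\{0\})$ and $d_X=\operatorname{wt}(\mathcal C_2\setminus\{0\})$ when the intersections are zero.
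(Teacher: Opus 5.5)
Your proof is correct and follows essentially the paper's route. You reduce maximal entanglement to $s_1=s_2=0$ together with $k_1=k_2$, then apply Theorem~\ref{cor:two-code-optimality-asym}, whose ceiling conditions become vacuous. Taking $c=n-\kappa$ as the definition and deriving $s_1+s_2=0$ from $c=n-k_1-s_2$ and $\kappa=k_1-s_1$ is actually tidier than the paper's unexplained assertion that maximal $\operatorname{rank}(H_1H_2^T)$ means $c=n-k_1=n-k_2$.

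You should delete the rank-bookkeeping detour, which is unused and also wrong. First, $s_1=0$ alone does not force $s_2=0$ or $k_1=k_2$: take $\mathcal C_2=\mathbb F_q^n$ and $k_1<n$. Second, $k_1-\operatorname{rank}(G_1H_2^T)$ equals $\dim(\mathcal C_1\cap\mathcal C_2)$, not $\dim(\mathcal C_1\cap\mathcal C_2^\perp)$; for the latter you would need $G_1G_2^T$.
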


\begin{proof}
By Theorem~\ref{PC-AEAQEC}, an AEAQEC code $\mathcal{Q}_E$ , the number of required EPR pairs is given by $c = \operatorname{rank}(H_1 H_2^T)$. The quantum code $\mathcal{Q}_E$ is maximally entangled if and only if $c$ achieves maximum, which means $c = n-k_1 = n-k_2$. This inherently forces the component code dimensions to be equal, so $k_1 = k_2 = k$.The maximal rank condition $c = n-k$ is achieved if and only if the matrix $H_1 H_2^T$ has full row rank. By the rank-nullity theorem, this occurs exactly when the kernel of the mapping is trivial. As established previously, the kernels correspond to the intersections $\mathcal{C}_1 \cap \mathcal{C}_2^\perp$ and $\mathcal{C}_2 \cap \mathcal{C}_1^\perp$. A trivial kernel means $s_1 = \dim(\mathcal{C}_1 \cap \mathcal{C}_2^\perp) = 0$ and $s_2 = \dim(\mathcal{C}_2 \cap \mathcal{C}_1^\perp) = 0$. This mutual trivial intersection defines $\mathcal{C}_1$ and $\mathcal{C}_2$ as a Linear Complementary Pair (LCP) of codes.Because $\mathcal{C}_1$ and $\mathcal{C}_2$ possess the distinct asymmetric locality parameters $r_Z$ and $r_X$ respectively, we apply Theorem \ref{cor:two-code-optimality-asym},  with $s_1 = 0$ and $s_2 = 0$, the optimality constraints:
 $$s_1 \le (k_1-1) \bmod r_Z \quad \text{and} \quad s_2 \le (k_2-1) \bmod r_X$$are trivially satisfied for $k \ge 1$. Therefore, by Theorem \ref{cor:two-code-optimality-asym}, $\mathcal{Q}_E$ is an optimal maximally entangled EA-qLRC if and only if the constituent codes $\mathcal{C}_1$ and $\mathcal{C}_2$ are optimal classical LRCs that form an LCP with equal minimum distances $d_Z = d_X = d$.\end{proof}

\section{Explicit Constructions with Asymmetric Locality}\label{sec:construction}
In this section, we present explicit AEA-qLRC constructions based on
\emph{Tamo-Barg codes}, which are Reed-Solomon-like polynomial
evaluation codes and \emph{cyclic LRCs}, whose algebraic structure enables simultaneous
control of the minimum distance, locality, and dimension of the
relevant dual-code intersection.

\subsection{Construction from Tamo-Barg Codes}
Let $\mathbb{F}_q$ be a finite field of characteristic $p$, and let $r_Z \ge 2$ and $r_X \ge 2$ be integers such that both $r_Z+1$ and $r_X+1$ divide $q-1$. Let $\ell_1, \ell_2$ be integers satisfying $1 \le \ell_1, \ell_2 < q-1$. We utilize the common evaluation set $\mathcal{P} = \mathbb{F}_q^{*} = \{1, \alpha, \alpha^2, \ldots, \alpha^{q-2}\}$, giving code length $n = \vert{}\mathcal{P}\vert{} = q-1$.To achieve asymmetric locality $\{r_X, r_Z\}$, we construct two distinct Tamo--Barg locally recoverable codes, $\mathcal{C}_1$ and $\mathcal{C}_2$, defined by their respective exponent sets:
  $$\begin{aligned} S_1 &= \{i \in [\ell_1-1]^\dagger : i \not\equiv r_Z \pmod{r_Z+1}\}, \\ S_2 &= \{i \in [\ell_2-1]^\dagger : i \not\equiv r_X \pmod{r_X+1}\}. \end{aligned}$$The dimensions are $k_1 = \ell_1 - \lfloor\ell_1/(r_Z+1)\rfloor$ and $k_2 = \ell_2 - \lfloor\ell_2/(r_X+1)\rfloor$. By the evaluation of the Euclidean inner product derived for the symmetric case, the dual code exponent sets are uniquely determined by $S_i^\perp = \{j \in [n-1] : n-j \notin S_i\}$. Specifically:$$\begin{aligned} S_1^\perp &= \{i \in [q-2] : i \le q-\ell_1-1 \text{ or } i \equiv 1 \pmod{r_Z+1}\}, \\ S_2^\perp &= \{i \in [q-2] : i \le q-\ell_2-1 \text{ or } i \equiv 1 \pmod{r_X+1}\}. \end{aligned}$$

  \begin{proposition}\label{prop:TBintersection-asym}
  Let $q$ be a prime power, $n = q-1$, and $r_Z, r_X \ge 2$ be integers such that both $r_Z+1$ and $r_X+1$ divide $n$. Let $1 \le \ell_1, \ell_2 < n$, and let $S_1 = \{i \in [\ell_1-1]^\dagger : i \not\equiv r_Z \pmod{r_Z+1}\}$ and $S_2^\perp = \{i \in [n-1] : i \le q-\ell_2-1 \text{ or } i \equiv 1 \pmod{r_X+1}\}$ be the respective exponent sets of the code $\mathcal{C}_1$ and the dual of code $\mathcal{C}_2$. Then, $S_1 \cap S_2^\perp$ is given by:$$\begin{cases}  \{i \in [\ell_1-1]^\dagger : i \not\equiv r_Z \pmod{r_Z+1}\}, & \ell_1 + \ell_2 \le q, \\  \{i \in [q-\ell_2-1]^\dagger : i \not\equiv r_Z \pmod{r_Z+1}\} \cup \Lambda_{1,2}, & \ell_1 + \ell_2 > q,  \end{cases}$$where $\Lambda_{1,2} = \{i \in [\ell_1-1]^\dagger : i \ge q-\ell_2,\; i \not\equiv r_Z \pmod{r_Z+1},\; i \equiv 1 \pmod{r_X+1}\}$.Moreover, $s_1 := \vert{}S_1 \cap S_2^\perp\vert{}$ is given by$$s_1 =  \begin{cases} \ell_1 - 1 - \left\lfloor\dfrac{\ell_1}{r_Z+1}\right\rfloor, & \ell_1 + \ell_2 \le q, \\ (q - \ell_2 - 1) - \left\lfloor\dfrac{q-\ell_2}{r_Z+1}\right\rfloor + \vert{}\Lambda_{1,2}\vert{}, & \ell_1 + \ell_2 > q. \end{cases}$$By symmetry, $s_2 := \vert{}S_2 \cap S_1^\perp\vert{}$ is obtained by interchanging the indices $1$ and $2$, and the variables $r_Z$ and $r_X$.
  \end{proposition}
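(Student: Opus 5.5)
The proof is a direct elementwise computation of $S_1\cap S_2^\perp$. The only facts needed are the definitions of the two exponent sets; no earlier result of the paper is required. Throughout, I keep track of the fact that $S_2^\perp\subseteq[n-1]=[q-2]=\{1,\ldots,q-2\}$ excludes the exponent $0$, whereas $S_1\subseteq[\ell_1-1]^\dagger$ contains it.

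Consequently every element of the intersection lies in $\{1,\ldots,\ell_1-1\}$, and the intersection never contains $0$. The sets written in the statement should therefore be read with $0$ removed, i.e.\ intersected with $[q-2]$. This is exactly what the "$-1$" in the formula for $s_1$ accounts for. Since $\ell_1<n=q-1$, we also have $\ell_1-1\le q-2$, so no element of $S_1$ other than $0$ is lost by the range restriction of $S_2^\perp$.

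\textbf{Case $\ell_1+\ell_2\le q$.} Here $\ell_1-1\le q-\ell_2-1$. Every $i\in S_1$ with $i\ge1$ therefore satisfies the first disjunct $i\le q-\ell_2-1$ of $S_2^\perp$, and the congruence condition on $S_2^\perp$ is irrelevant. Hence $S_1\cap S_2^\perp=S_1\setminus\{0\}$. Its size is $|S_1|-1=k_1-1=\ell_1-1-\lfloor \ell_1/(r_Z+1)\rfloor$, using the Tamo--Barg dimension count for $k_1$.

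\textbf{Case $\ell_1+\ell_2> q$.} I split $S_1\setminus\{0\}$ according to whether $i\le q-\ell_2-1$ or $i\ge q-\ell_2$.
\begin{itemize}
\item The first part meets $S_2^\perp$ in all of $\{1\le i\le q-\ell_2-1: i\not\equiv r_Z \pmod{r_Z+1}\}$. Among $1,\ldots,m$ with $m=q-\ell_2-1$, the number of $i\equiv r_Z\pmod{r_Z+1}$ is $\lfloor (m+1)/(r_Z+1)\rfloor$ (the residues $r_Z,2r_Z+1,\ldots$). This part therefore has size $(q-\ell_2-1)-\lfloor (q-\ell_2)/(r_Z+1)\rfloor$.
\item In the second part only the congruence disjunct $i\equiv1\pmod{r_X+1}$ of $S_2^\perp$ can hold. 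It is retained exactly when $i\equiv 1\pmod{r_X+1}$, which gives precisely $\Lambda_{1,2}$.
\end{itemize}
The two parts are disjoint, so $s_1$ is the sum of their sizes. The formula for $s_2$ follows by the same argument with the roles of the indices $1,2$ and of $r_Z,r_X$ exchanged.

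The main obstacle is purely bookkeeping: the boundary conventions. These are the exclusion of $0$ from $S_2^\perp$, the floor count of the forbidden residues $r_Z \pmod{r_Z+1}$ in an initial segment, and confirming that the boundary index $i=q-\ell_2$ falls in the second part. I would verify each of these on a small example, such as $q=7$ with $r_Z=r_X=2$, before writing the proof out.
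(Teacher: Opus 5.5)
Your proof is correct and takes essentially the same approach as the paper, which splits $S_2^\perp$ at the threshold $q-\ell_2-1$ into an initial segment and a congruence part above it, then counts each piece's intersection with $S_1$. Your explicit handling of the exponent $0$ is more careful than the paper's: its proof writes $S_1\cap S^\perp_{2,A}$ as a subset of $[m]^\dagger$ but counts it as $m-\lfloor (m+1)/(r_Z+1)\rfloor$, which is the count over $\{1,\dots,m\}$, just as in your argument.
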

  \begin{proof}
  We decompose $S_2^\perp = S^\perp_{2,A} \cup S^\perp_{2,B}$, where $S^\perp_{2,A} = \{i \in [n-1] : i \le q-\ell_2-1\}$ and $S^\perp_{2,B} = \{i \in [n-1] : i \ge q-\ell_2, \; i \equiv 1 \pmod{r_X+1}\}$. Thus, $S_1 \cap S_2^\perp = (S_1 \cap S^\perp_{2,A}) \cup (S_1 \cap S^\perp_{2,B})$.Observe that $S_1 \cap S^\perp_{2,A} = \{i \in [m]^\dagger : i \not\equiv r_Z \pmod{r_Z+1}\}$, where $m = \min\{\ell_1-1, q-\ell_2-1\}$. Among the integers up to $m$, exactly $\left\lfloor\frac{m+1}{r_Z+1}\right\rfloor$ are congruent to $r_Z \pmod{r_Z+1}$. Therefore, $\vert{}S_1 \cap S^\perp_{2,A}\vert{} = m - \left\lfloor\frac{m+1}{r_Z+1}\right\rfloor$.Next, $S_1 \cap S^\perp_{2,B} = \{i \in [\ell_1-1]^\dagger : i \ge q-\ell_2,\; i \not\equiv r_Z \pmod{r_Z+1},\; i \equiv 1 \pmod{r_X+1}\}$, which is defined as $\Lambda_{1,2}$. Due to the independence of the asymmetric recovery constraints $r_Z$ and $r_X$, this cardinality relies on the greatest common divisor of the cyclic shifts and is denoted as $\vert{}\Lambda_{1,2}\vert{}$.  We consider two cases:\begin{enumerate}\item If $\ell_1 + \ell_2 \le q$, then $\ell_1-1 \le q-\ell_2-1$, so $m = \ell_1-1$ and $S_1 \cap S^\perp_{2,B} = \varnothing$. Hence $S_1 \cap S_2^\perp = S_1 \cap S^\perp_{2,A}$, yielding $s_1 = (\ell_1-1) - \left\lfloor\frac{\ell_1}{r_Z+1}\right\rfloor$.\item If $\ell_1 + \ell_2 > q$, then $q-\ell_2-1 < \ell_1-1$, so $m = q-\ell_2-1$. Hence $S_1 \cap S^\perp_{2,A} = \{i \in [q-\ell_2-1]^\dagger : i \not\equiv r_Z \pmod{r_Z+1}\}$ and $S_1 \cap S^\perp_{2,B} = \Lambda_{1,2}$. Adding the cardinalities of these disjoint sets yields the claimed expression for $s_1$.\end{enumerate}By swapping the role of the $X$-type and $Z$-type evaluation polynomials, an identical argument establishes the formula for $s_2$.
  \end{proof}

  \begin{theorem}\label{thm:TB-EAqLRC-Asym-1}
  Let $q$ be a prime power, and let $r_Z, r_X \ge 2$ be integers such that $(r_Z+1) \mid (q-1)$ and $(r_X+1) \mid (q-1)$. If $1 \le \ell_1, \ell_2 < q-1$ satisfy $\ell_1 + \ell_2 \le q$, then there exists a pure AEA-qLRC with parameters $[[q-1,1,\ge q-\max(\ell_1,\ell_2),\{r_X,r_Z\};(q-1)-k_2-s_1]]_q$, where $k_1 = \ell_1 - \left\lfloor\frac{\ell_1}{r_Z+1}\right\rfloor,\quad k_2 = \ell_2 - \left\lfloor\frac{\ell_2}{r_X+1}\right\rfloor$, and $s_1 = \ell_1 - 1 - \lfloor\ell_1/(r_Z+1)\rfloor$.\end{theorem}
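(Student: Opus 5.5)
The plan is to instantiate Proposition~\ref{CSS-EA-qLRC-Asym} and Theorem~\ref{PC-AEAQEC} with the two Tamo--Barg codes $\mathcal C_1,\mathcal C_2$ defined by $S_1,S_2$ above, and to read off each parameter in turn.

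\textbf{Locality.} First I verify the hypotheses of Proposition~\ref{CSS-EA-qLRC-Asym}. The multiplicative subgroup $H_Z\le\mathbb F_q^*$ of order $r_Z+1$ partitions $\mathcal P=\mathbb F_q^*$ into cosets of size $r_Z+1$. On each coset, every $f$ with exponents in $S_1$ agrees with a polynomial of degree $\le r_Z-1$. This is because $x^{r_Z+1}$ is constant on a coset and no exponent is $\equiv r_Z \pmod{r_Z+1}$. Interpolating through $r_Z+1$ points then gives a nonzero dual vector $c_1^{(i)}\in\mathcal C_1^\perp$ supported exactly on the coset containing $\alpha^{i}$. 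Hence $|\operatorname{supp}(c_1^{(i)})|\le r_Z+1$ and $i\in\operatorname{supp}(c_1^{(i)})$. The same argument with $H_X$ of order $r_X+1$ gives $c_2^{(i)}\in\mathcal C_2^\perp$ of weight $\le r_X+1$. So Proposition~\ref{CSS-EA-qLRC-Asym} yields an AEA-qLRC with locality $\{r_X,r_Z\}$.

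\textbf{Dimension and entanglement.} Next I identify $\dim(\mathcal C_1\cap\mathcal C_2^\perp)$ with $|S_1\cap S_2^\perp|$. The monomials $X^0,\dots,X^{q-2}$ evaluated on $\mathbb F_q^*$ are linearly independent (Vandermonde/characters). So the evaluation map is injective on polynomials of degree $\le q-2$, and intersecting two monomial evaluation codes corresponds to intersecting their exponent sets. In the regime $\ell_1+\ell_2\le q$, Proposition~\ref{prop:TBintersection-asym} gives $s_1=\ell_1-1-\lfloor \ell_1/(r_Z+1)\rfloor=k_1-1$. Theorem~\ref{PC-AEAQEC} then gives $\kappa=k_1-s_1=1$ and $c=n-k_2-s_1=(q-1)-k_2-s_1$, as claimed. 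Consistency with $\kappa=k_2-s_2$ is automatic from Theorem~\ref{PC-AEAQEC}, but I would double-check it by the symmetric count of $s_2$ from Proposition~\ref{prop:TBintersection-asym}. I expect this to give $s_2=k_2-1$ since $\ell_1+\ell_2\le q$.

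\textbf{Distance and purity.} Finally, every nonzero $f$ with exponents in $S_1$ has degree $\le \ell_1-1$, so it has at most $\ell_1-1$ roots in $\mathbb F_q^*$. Thus $\operatorname{wt}(\mathcal C_1)\ge (q-1)-(\ell_1-1)=q-\ell_1$, and similarly $\operatorname{wt}(\mathcal C_2)\ge q-\ell_2$. Since $\mathcal C_1\setminus(\mathcal C_1\cap\mathcal C_2^\perp)\subseteq\mathcal C_1\setminus\{0\}$, we get $d_Z\ge \operatorname{wt}(\mathcal C_1)$, and likewise $d_X\ge\operatorname{wt}(\mathcal C_2)$. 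Hence $\min\{d_X,d_Z\}\ge q-\max(\ell_1,\ell_2)$.

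\textbf{Main obstacle.} The delicate step is purity, i.e.\ showing that $d_Z=\operatorname{wt}(\mathcal C_1)$ in the sense of the definition following Theorem~\ref{PC-AEAQEC}, and likewise for $d_X$. This needs a minimum-weight codeword of $\mathcal C_1$ lying outside $\mathcal C_1\cap\mathcal C_2^\perp$. Here $\mathcal C_1\cap\mathcal C_2^\perp$ has codimension $1$, spanned by all exponents of $S_1$ except one. I would look for the excluded exponent, presumably $0$, since $0\notin S_2^\perp$, and argue that a minimum-weight polynomial can be chosen with nonzero constant term. If that fails, I would fall back on stating purity in the weaker sense that the claimed lower bound on $d_X,d_Z$ holds for the full codes $\mathcal C_1,\mathcal C_2$.
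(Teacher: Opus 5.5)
Apart from purity, your proposal follows the paper's proof. The paper combines Proposition~\ref{CSS-EA-qLRC-Asym}, the count $s_1=k_1-1$ from Proposition~\ref{prop:TBintersection-asym}, and $\kappa=k_1+k_2-n+c=k_1-s_1=1$. Your explicit coset check of locality and your degree bound on $\operatorname{wt}(\mathcal C_i)$ supply what the paper leaves implicit. The incomplete point is purity, which the theorem asserts but which neither the paper's proof nor your proposal establishes. Your fallback does not fix it: a lower bound on $\operatorname{wt}(\mathcal C_1)$ is not the equality $d_Z=\operatorname{wt}(\mathcal C_1)$ that purity requires. Your primary plan does work, and you should finish it that way.

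\emph{The excluded exponent is $0$.} Each $i\in S_1\setminus\{0\}$ satisfies $1\le i\le \ell_1-1\le q-\ell_2-1$, so $i\in S_2^\perp$, while $0\notin S_2^\perp\subseteq[q-2]$. Hence $\mathcal C_1\cap\mathcal C_2^\perp$ is exactly the set of $\operatorname{ev}(f)$ with $f(0)=0$. The set displayed in Proposition~\ref{prop:TBintersection-asym} for this case equals $S_1$ and wrongly contains $0$; its count $s_1=k_1-1$ is the correct one.

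\emph{A minimum-weight codeword with nonzero constant term.} Let $D_1=\max S_1$. Since no element of $S_1$ is $\equiv r_Z \pmod{r_Z+1}$, you can write $D_1=a(r_Z+1)+b$ with $0\le b\le r_Z-1$. Pick representatives $\gamma_1,\dots,\gamma_a$ of distinct cosets of $H_Z$, and distinct $\beta_1,\dots,\beta_b\in\mathbb F_q^*$ outside these cosets; this is possible since $D_1\le q-3$. Put $f(x)=\prod_{t=1}^{b}(x-\beta_t)\prod_{j=1}^{a}\bigl(x^{r_Z+1}-\gamma_j^{r_Z+1}\bigr)$. Then:
\begin{itemize}
\item Every monomial of $f$ has exponent $t'+j'(r_Z+1)$ with $0\le t'\le b\le r_Z-1$ and total at most $D_1$, so it lies in $S_1$ and $f$ gives a codeword of $\mathcal C_1$.
\item $f$ vanishes at exactly $D_1$ points of $\mathbb F_q^*$, so $\operatorname{wt}(\operatorname{ev}(f))=q-1-D_1$, which meets your degree bound.
\item $f(0)=\prod_t(-\beta_t)\prod_j(-\gamma_j^{r_Z+1})\neq 0$.
\end{itemize}
Hence $\operatorname{wt}(\mathcal C_1)=q-1-D_1$ is attained outside $\mathcal C_1\cap\mathcal C_2^\perp$, i.e.\ $d_Z=\operatorname{wt}(\mathcal C_1)$. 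In fact every minimum-weight codeword then has nonzero constant term, since it must be $\lambda\prod(x-\beta)$ over $D_1$ distinct nonzero roots.

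The same construction with $r_X$, $D_2=\max S_2$, and $S_2\cap S_1^\perp=S_2\setminus\{0\}$ gives $d_X=\operatorname{wt}(\mathcal C_2)$. This completes purity and also gives $d_Z=q-1-D_1\ge q-\ell_1$ and $d_X=q-1-D_2\ge q-\ell_2$.
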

  \begin{proof}
  Follows from Proposition~\ref{CSS-EA-qLRC-Asym} and Proposition~\ref{prop:TBintersection-asym}. The logical dimension is given by $\kappa = k_1 + k_2 - n + c$. Because $c = (q-1) - k_2 - s_1$, the dimension simplifies to $\kappa = k_1 - s_1$. Substituting the expression for $k_1$ and $s_1$ yields $\kappa = 1$. The overall minimum distance is $\delta = \min(d_Z, d_X) \ge \min(q - \ell_1, q - \ell_2) = q - \max(\ell_1, \ell_2)$.
  \end{proof}
  \begin{remark}\label{rem:TB-degeneracy}
In the disjoint-support regime $\ell_1+\ell_2\le q$ of
Theorem~\ref{thm:TB-EAqLRC-Asym-1}, one computes directly from
Proposition~\ref{prop:TBintersection-asym} that
\[
s_1=\ell_1-1-\left\lfloor\frac{\ell_1}{r_Z+1}\right\rfloor=k_1-1
\]
for \emph{every} admissible choice of $\ell_1,\ell_2,r_Z,r_X$ satisfying the
hypotheses of the theorem. Consequently $\kappa=k_1-s_1=1$ identically throughout
this regime: Theorem~\ref{thm:TB-EAqLRC-Asym-1} yields a family of pure AEA-qLRCs
encoding only a single logical qudit, regardless of the length $n=q-1$ or the
localities $r_X,r_Z$ chosen. Constructions with $\kappa>1$ therefore require the
overlapping-support regime $\ell_1+\ell_2>q$ treated in
Theorem~\ref{thm:TB-EAqLRC-Asym-2} below, where $s_1$ depends on the auxiliary set
$\Lambda_{1,2}$ and need not force $\kappa=1$.
\end{remark}
As noted in Remark~\ref{rem:TB-degeneracy}, we now turn to the overlapping-support
regime $\ell_1+\ell_2>q$, which permits logical dimension $\kappa>1$ and therefore
yields a genuinely non-degenerate family of pure AEA-qLRCs.
  \begin{theorem}\label{thm:TB-EAqLRC-Asym-2}Let $q$ be a prime power, and let $r_Z, r_X \ge 2$ be integers such that $(r_Z+1) \mid (q-1)$ and $(r_X+1) \mid (q-1)$. If $1 \le \ell_1, \ell_2 < q-1$ satisfy $\ell_1 + \ell_2 > q$, then there exists a pure AEA-qLRC with parameters $[[q-1,k_1-s_1,\ge\min(q-\ell_1,q-\ell_2),\{r_X,r_Z\};(q-1)-k_2-s_1]]_q$,where $k_1 = \ell_1 - \left\lfloor\frac{\ell_1}{r_Z+1}\right\rfloor,\quad k_2 = \ell_2 - \left\lfloor\frac{\ell_2}{r_X+1}\right\rfloor$ for $i \in \{1,2\}$, and
  $$\begin{aligned} s_1 &= (q - 1 - \ell_2) - \left\lfloor\frac{q-\ell_2}{r_Z+1}\right\rfloor + \vert{}\Lambda_{1,2}\vert{}, \\ \Lambda_{1,2} &= \{i \in [\ell_1-1]^\dagger : i \ge q-\ell_2,\; i \not\equiv r_Z \pmod{r_Z+1},\; i \equiv 1 \pmod{r_X+1}\}. \end{aligned}$$
  \end{theorem}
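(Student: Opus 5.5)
The plan follows the proof of Theorem~\ref{thm:TB-EAqLRC-Asym-1}, now in the overlapping regime. First, take $\mathcal{C}_1$ and $\mathcal{C}_2$ to be the Tamo--Barg codes with exponent sets $S_1$ and $S_2$. By the Tamo--Barg definition they are $[q-1,k_i]_q$ LRCs with localities $r_Z$ and $r_X$, and $k_1,k_2$ are as stated. The locality of $\mathcal{C}_1$ gives, for every coordinate $i$, a dual codeword $c_1^{(i)}\in\mathcal{C}_1^\perp$ supported on the multiplicative coset of the subgroup of order $r_Z+1$ containing $\alpha^i$, so $|\operatorname{supp}(c_1^{(i)})|\le r_Z+1$. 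Likewise there is $c_2^{(i)}\in\mathcal{C}_2^\perp$ with support of size at most $r_X+1$. Both contain $i$, so the hypotheses of Proposition~\ref{CSS-EA-qLRC-Asym} hold, and $\mathcal{Q}_E$ is an AEA-qLRC with locality $\{r_X,r_Z\}$ and $c=\operatorname{rank}(H_1H_2^T)$.

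Second, compute $c$ and $\kappa$ using Theorem~\ref{PC-AEAQEC}: $c=n-k_2-\dim(\mathcal{C}_1\cap\mathcal{C}_2^\perp)$. The evaluation map on monomials $X^j$ with $0\le j\le q-2$ is injective, since these evaluation vectors are the rows of a Vandermonde/DFT matrix over $\mathbb{F}_q^*$. Hence $\mathcal{C}_1\cap\mathcal{C}_2^\perp$ is spanned by the evaluations of monomials with exponents in $S_1\cap S_2^\perp$. This is the step I would check most carefully: the intersection of two monomial-spanned subspaces is spanned by the common monomials only because the monomial evaluations form a basis. It is also where the convention for the dual exponent set ($S_i^\perp$ defined via $n-j\notin S_i$, with exponents read mod $q-1$) must be used consistently. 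Granting this, $\dim(\mathcal{C}_1\cap\mathcal{C}_2^\perp)=s_1$, which Proposition~\ref{prop:TBintersection-asym} gives in the case $\ell_1+\ell_2>q$ by exactly the displayed formula involving $\Lambda_{1,2}$. So $c=(q-1)-k_2-s_1$, and $\kappa=k_1+k_2-n+c=k_1-s_1$.

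Third, bound the distances and verify purity. A nonzero codeword of $\mathcal{C}_1$ is the evaluation of a polynomial of degree at most $\ell_1-1<q-1$, so it has at most $\ell_1-1$ zeros on $\mathbb{F}_q^*$. Thus $\operatorname{wt}(\mathcal{C}_1)\ge q-1-(\ell_1-1)=q-\ell_1$, and similarly $\operatorname{wt}(\mathcal{C}_2)\ge q-\ell_2$. Since $d_Z=\operatorname{wt}(\mathcal{C}_1\setminus(\mathcal{C}_1\cap\mathcal{C}_2^\perp))\ge\operatorname{wt}(\mathcal{C}_1)$, and likewise for $d_X$, we get $\delta=\min(d_X,d_Z)\ge\min(q-\ell_1,q-\ell_2)$. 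I expect purity, i.e.\ $d_Z=\operatorname{wt}(\mathcal{C}_1)$ and $d_X=\operatorname{wt}(\mathcal{C}_2)$, to be the main obstacle: it fails if a minimum-weight word of $\mathcal{C}_1$ can lie only in $\mathcal{C}_1\cap\mathcal{C}_2^\perp$. My first attempt would follow the paper's convention for Theorem~\ref{thm:TB-EAqLRC-Asym-1} and interpret the distance parameter as the lower bound $\min(q-\ell_1,q-\ell_2)$, which holds whether or not the code is pure. To justify the word ``pure'' directly, I would exhibit a minimum-weight word outside the intersection. One candidate is a product of linear factors $\prod(X-\beta)$ over $\ell_1-1$ evaluation points, reduced into the span of $S_1$. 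Failing that, I would state the purity claim only under the lower-bound interpretation.
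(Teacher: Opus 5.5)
Your treatment of locality, $c$, $\kappa$ and the distance bound follows the paper's route. The paper's proof cites Proposition~\ref{CSS-EA-qLRC-Asym} and Proposition~\ref{prop:TBintersection-asym}, records $c=n-k_2-s_1$ and $\kappa=k_1-s_1$, and uses the degree bound. Your monomial-basis argument for $\dim(\mathcal{C}_1\cap\mathcal{C}_2^\perp)=s_1$ supplies a step the paper leaves implicit.

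The paper's proof says nothing about purity, so the gap you flag is shared with it. Still, the statement asserts purity, and neither of your fallbacks settles it. Reading the distance as a lower bound concerns only the distance, not the word ``pure''. Your candidate witness also fails. A product of $\ell_1-1$ arbitrary linear factors generally has nonzero coefficients at exponents $\equiv r_Z \pmod{r_Z+1}$, so it is not in $\mathcal{C}_1$. ``Reducing'' it into the span of $S_1$ destroys its zero set. Moreover, when $\ell_1-1\equiv r_Z\pmod{r_Z+1}$, no codeword of $\mathcal{C}_1$ has $\ell_1-1$ zeros at all.

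The missing idea is to look at the constant term, and to choose the roots as whole cosets of the order-$(r_Z+1)$ subgroup $H\le\mathbb{F}_q^*$.

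\begin{itemize}
\item Since $0\in S_2$, we have $0\notin S_2^\perp$. Evaluation is injective on polynomials of degree at most $q-2$, so every vector of $\mathcal{C}_1\cap\mathcal{C}_2^\perp$ is the evaluation of a polynomial with zero constant term.
\item Let $D=\max S_1$ and write $D=\mu(r_Z+1)+t$ with $0\le t\le r_Z-1$. This is possible because no element of $S_1$ is $\equiv r_Z$. Also $D\le \ell_1-1\le q-3$.
\item Pick $\beta_1,\dots,\beta_\mu$ in distinct cosets of $H$. Pick distinct $\gamma_1,\dots,\gamma_t$ outside those cosets; there are $q-1-\mu(r_Z+1)=q-1-D+t>t$ available points.
\item Set $f(x)=\prod_{i=1}^{\mu}\bigl(x^{r_Z+1}-\beta_i^{r_Z+1}\bigr)\prod_{j=1}^{t}(x-\gamma_j)$. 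Its monomials are $x^{a(r_Z+1)+b}$ with $b\le t\le r_Z-1$ and degree at most $D$, so $\operatorname{ev}(f)\in\mathcal{C}_1$.
\item $f$ vanishes exactly on $\beta_1H\cup\dots\cup\beta_\mu H\cup\{\gamma_j\}$, which is $D$ points. Its weight is therefore $q-1-D$. This equals $\operatorname{wt}(\mathcal{C}_1)$, since every nonzero codeword has degree at most $D$.
\item Its constant term is $f(0)=\prod_i\bigl(-\beta_i^{r_Z+1}\bigr)\prod_j(-\gamma_j)\neq 0$.
\end{itemize}

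Hence $\operatorname{ev}(f)$ is a minimum-weight word of $\mathcal{C}_1$ outside $\mathcal{C}_1\cap\mathcal{C}_2^\perp$, and $d_Z=\operatorname{wt}(\mathcal{C}_1)$. The same argument with $r_X$, $S_2$ and $0\notin S_1^\perp$ gives $d_X=\operatorname{wt}(\mathcal{C}_2)$, which establishes purity. It also gives the exact value $d_Z=q-1-\max S_1$. This equals $q-\ell_1$, or $q-\ell_1+1$ when $\ell_1-1\equiv r_Z\pmod{r_Z+1}$, consistent with the ``$\ge$'' in the statement.
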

\begin{proof}
Follows from Proposition~\ref{CSS-EA-qLRC-Asym} and Proposition~\ref{prop:TBintersection-asym}. The intersection cardinality $s_1$ determines both the entanglement requirement $c = n - k_2 - s_1$ and the logical dimension $\kappa = k_1 - s_1$. The minimum distance satisfies $\delta = \min(d_Z, d_X) \ge \min(q - \ell_1, q - \ell_2)$. Because the components $\mathcal{C}_1$ and $\mathcal{C}_2$ encode the independent recovery parameters $r_Z$ and $r_X$, the resulting code inherits the target asymmetric locality $\{r_X, r_Z\}$.
\end{proof}

\subsection{Construction from Cyclic Codes}

\begin{lemma}\cite[Lemma 8]{kumar2026entanglement}\label{Hull dim}
Let $\mathcal{C}_1$ and $\mathcal{C}_2$ be cyclic codes of length $n$ over $\mathbb{F}_q$ (respectively, over $\mathbb{F}_{q^2}$ for the Hermitian case) with defining sets $Z_1, Z_2 \subseteq \mathbb{Z}_n$, and dimensions $k_i = n - |Z_i|$ for $i = 1, 2$.
\begin{enumerate}
    \item[(1)] $\dim\left(\mathcal{C}_1^{\perp} \cap \mathcal{C}_2\right) = |Z_1| - |Z_2 \cap (-Z_1)|$, where $-Z_1 := \{-i \bmod n : i \in Z_1\}$.
    \item[(2)] If $\mathcal{C}_1, \mathcal{C}_2$ are cyclic over $\mathbb{F}_{q^2}$, then $\dim\left(\mathcal{C}_1^{\perp_h} \cap \mathcal{C}_2\right) = |Z_1| - |Z_2 \cap (-qZ_1)|$, where $-qZ_1 := \{-qi \bmod n : i \in Z_1\}$.
\end{enumerate}
\end{lemma}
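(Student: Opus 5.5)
The plan is to prove (1) first and then obtain (2) by replacing the Euclidean dual with the Hermitian one. Throughout, $\alpha$ is a primitive $n$-th root of unity and $\gcd(n,q)=1$, so every cyclic code is determined by its defining set. The code $\mathcal C_i$ is then
\[
\mathcal C_i=\{c(x)\in R_n : c(\alpha^j)=0\ \text{for all } j\in Z_i\}.
\]

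For (1), the first step is to recall the standard description of the Euclidean dual of a cyclic code. It is cyclic with defining set $\mathbb Z_n\setminus(-Z_1)$, because the generator polynomial of $\mathcal C_1^\perp$ is the reciprocal of the check polynomial $h_1(x)=(x^n-1)/g_1(x)$. The intersection of two cyclic codes is cyclic, and its defining set is the union of the two defining sets, since a polynomial lies in both ideals exactly when it vanishes on both sets of roots. Hence $\mathcal C_1^\perp\cap\mathcal C_2$ has defining set $(\mathbb Z_n\setminus(-Z_1))\cup Z_2$, and its dimension is
\[
n-\bigl|(\mathbb Z_n\setminus(-Z_1))\cup Z_2\bigr| .
\]
Inclusion--exclusion gives
\[
\bigl|(\mathbb Z_n\setminus(-Z_1))\cup Z_2\bigr| = n-|Z_1|+|Z_2\cap(-Z_1)|,
\]
using $|-Z_1|=|Z_1|$, since negation is a bijection on $\mathbb Z_n$. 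Substituting yields $\dim(\mathcal C_1^\perp\cap\mathcal C_2)=|Z_1|-|Z_2\cap(-Z_1)|$.

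For (2), the argument is identical, with $\mathcal C_1^{\perp_h}=(\mathcal C_1^{q})^{\perp}$, where $\mathcal C_1^{q}$ is obtained by raising every coordinate to the $q$-th power. If $c(\alpha^j)=0$, then $c^{(q)}(\alpha^{qj})=c(\alpha^j)^q=0$, so $\mathcal C_1^{q}$ is cyclic with defining set $qZ_1$. By the Euclidean case, $\mathcal C_1^{\perp_h}$ then has defining set $\mathbb Z_n\setminus(-qZ_1)$. Multiplication by $-q$ is a bijection on $\mathbb Z_n$ because $\gcd(q,n)=1$, so $|-qZ_1|=|Z_1|$, and the same inclusion--exclusion count finishes the proof.

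The main obstacle is confirming that the dual defining set is exactly the complement of $-Z_1$, rather than something off by a unit scaling. I would check this directly from the reciprocal polynomial: the roots of $x^{\deg h_1}h_1(x^{-1})$ are the inverses $\alpha^{-j}$ of the roots $\alpha^{j}$ of $h_1$, that is, the exponents $-j$ for $j\notin Z_1$. The remaining steps are routine once this is settled.
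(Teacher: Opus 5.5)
Your proof is correct. The paper does not prove this lemma; it imports it verbatim from \cite[Lemma 8]{kumar2026entanglement}. Your defining-set argument is the standard proof: the dual has defining set $\mathbb Z_n\setminus(-Z_1)$ via the reciprocal check polynomial, intersecting cyclic codes corresponds to taking the union of their defining sets, and inclusion--exclusion finishes.

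One step in part (2) needs tightening. The implication $c(\alpha^j)=0\Rightarrow c^{(q)}(\alpha^{qj})=0$ alone only shows that $\mathcal C_1^{q}$ is contained in the cyclic code with defining set $qZ_1$. To get equality, use either of the following:
\begin{itemize}
\item the identity $c^{(q)}(\alpha^{qj})=c(\alpha^j)^q$ is an equivalence, since $x\mapsto x^q$ is injective;
\item a dimension count: $\dim\mathcal C_1^{q}=n-|Z_1|=n-|qZ_1|$, where $qZ_1$ is again a union of $q^2$-cyclotomic cosets.
\end{itemize}
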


\begin{theorem}\cite[Theorem 11]{kumar2026entanglement}\label{Cyclic EAQEC}
Let $\mathcal{C}_1$ and $\mathcal{C}_2$ be cyclic codes over
$\mathbb{F}_q$ with parameters $[n,k_1,d_Z]_q$ and
$[n,k_2,d_X]_q$, respectively, and defining sets
$Z_1,Z_2\subseteq\mathbb{Z}_n$. Then there exists a pure EAQEC code with parameters
$[[n, \; n - |(-Z_1) \cup Z_2|, \; \{d_X, d_Z\}; \; |(-Z_1) \cap Z_2|]]_q$.
\end{theorem}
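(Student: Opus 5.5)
The plan is to derive the statement directly from Theorem~\ref{PC-AEAQEC} together with Lemma~\ref{Hull dim}(1), translating every dimension in the general EA-CSS parameter formula into defining-set cardinalities. The cyclic codes $\mathcal{C}_1,\mathcal{C}_2$ have dimensions $k_i=n-|Z_i|$. Theorem~\ref{PC-AEAQEC} gives an AEAQECC with $\kappa=k_1+k_2-n+c$ and $c=n-k_1-\dim(\mathcal{C}_1^\perp\cap\mathcal{C}_2)$. Three steps are needed: compute $c$, compute $\kappa$, and verify purity, so that $d_X=\operatorname{wt}(\mathcal{C}_2)$ and $d_Z=\operatorname{wt}(\mathcal{C}_1)$.

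For the entanglement count, Lemma~\ref{Hull dim}(1) gives $\dim(\mathcal{C}_1^\perp\cap\mathcal{C}_2)=|Z_1|-|Z_2\cap(-Z_1)|$. Substituting, and using $n-k_1=|Z_1|$,
\[
c=|Z_1|-\bigl(|Z_1|-|Z_2\cap(-Z_1)|\bigr)=|(-Z_1)\cap Z_2|.
\]
For the dimension, $\kappa=(n-|Z_1|)+(n-|Z_2|)-n+|(-Z_1)\cap Z_2|$. Since $|-Z_1|=|Z_1|$ (negation is a bijection of $\mathbb{Z}_n$), inclusion–exclusion turns this into $\kappa=n-|(-Z_1)\cup Z_2|$, as claimed.

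The main obstacle is purity. The minimum weights in Theorem~\ref{PC-AEAQEC} are taken over $\mathcal{C}_2\setminus(\mathcal{C}_1^\perp\cap\mathcal{C}_2)$ and $\mathcal{C}_1\setminus(\mathcal{C}_1\cap\mathcal{C}_2^\perp)$. In general these weights are only bounded below by $\operatorname{wt}(\mathcal{C}_2)$ and $\operatorname{wt}(\mathcal{C}_1)$, and equality need not hold without further input. I would therefore read the statement as in \cite{kumar2026entanglement}: $d_X$ and $d_Z$ denote the weights of the punctured sets, which are always at least the classical distances $\operatorname{wt}(\mathcal{C}_2)$ and $\operatorname{wt}(\mathcal{C}_1)$, and the theorem holds with these. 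If genuine purity is required, I would argue as follows. The intersections $\mathcal{C}_1^\perp\cap\mathcal{C}_2$ and $\mathcal{C}_1\cap\mathcal{C}_2^\perp$ are themselves cyclic, with defining sets $Z_2\cup(\mathbb{Z}_n\setminus(-Z_1))$ and $Z_1\cup(\mathbb{Z}_n\setminus(-Z_2))$ respectively. When these intersections are trivial, for instance in the LCP case, the punctured minimum weights coincide with the full minimum weights, so the code is pure. Otherwise, one would show via the BCH bound that the distance of each intersection code is at least that of the ambient code, which again gives equality of the minimum weights.

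Since this theorem is quoted from \cite{kumar2026entanglement}, I expect the source argument to be exactly the counting of $c$ and $\kappa$ above, with purity treated in that reference's convention. I would make that convention explicit in the write-up rather than claim more.
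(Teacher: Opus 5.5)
\paragraph{Parameter count.}
Your computation of the parameters is correct. Lemma~\ref{Hull dim}(1) together with $n-k_1=|Z_1|$ gives $c=|(-Z_1)\cap Z_2|$. Inclusion--exclusion with $|-Z_1|=|Z_1|$ then gives $\kappa=n-|(-Z_1)\cup Z_2|$. The paper does not prove this theorem itself; it imports it from \cite{kumar2026entanglement}. Later it uses the theorem only for these parameter formulas, so this part of your proposal covers everything the paper relies on.

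\paragraph{Purity: the step fails, and the claim is false in general.}
Your fallback argument (use the BCH bound to show $d(\mathcal{C}_1^\perp\cap\mathcal{C}_2)\ge d(\mathcal{C}_2)$) proves nothing. That inequality holds for every subcode of $\mathcal{C}_2$, and it does not place a minimum-weight word of $\mathcal{C}_2$ outside $\mathcal{C}_1^\perp\cap\mathcal{C}_2$. For that you would need something like the strict inequality $d(\mathcal{C}_1^\perp\cap\mathcal{C}_2)>d(\mathcal{C}_2)$. The BCH bound cannot certify it, since it only bounds distances from below and you would need the exact value of $d(\mathcal{C}_2)$.

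In fact no argument can close this step, because purity in the sense defined after Theorem~\ref{PC-AEAQEC} can fail for cyclic pairs. Take $q=7$, $n=6$, $\alpha=3$, $Z_1=\{0,2,4\}$ and $Z_2=\{1,5\}$. Then $\mathcal{C}_1=\langle x^3-1\rangle$, $\mathcal{C}_2=\langle x^2-x+1\rangle$, and $\mathcal{C}_1^\perp=\langle x^3+1\rangle\subseteq\mathcal{C}_2$.
\begin{itemize}
\item A word $ux^i+vx^{i+\delta}$ with $u,v\neq 0$ and $0<\delta<6$ lies in $\mathcal{C}_2$ iff $u+v\alpha^{\delta}=u+v\alpha^{5\delta}=0$. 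This forces $\alpha^{4\delta}=1$, hence $\delta=3$, and then $v=u$ because $\alpha^3=-1$.
\item So every weight-$2$ word of $\mathcal{C}_2$ equals $ux^i(1+x^3)\in\mathcal{C}_1^\perp$.
\item No weight-$1$ word lies in $\mathcal{C}_2$, while $x^2-x+1\notin\mathcal{C}_1^\perp$ has weight $3$.
\end{itemize}
Hence $\operatorname{wt}(\mathcal{C}_2\setminus(\mathcal{C}_1^\perp\cap\mathcal{C}_2))=3>2=\operatorname{wt}(\mathcal{C}_2)$. The construction therefore yields an impure $[[6,1,\{3,2\};0]]_7$ code.

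\paragraph{What is provable.}
The statement is provable only with $d_X,d_Z$ (the classical distances of $\mathcal{C}_2,\mathcal{C}_1$, as in the hypothesis) read as guaranteed lower bounds on the true distances. In that reading the code is pure \emph{to} these distances, which is automatic because $\mathcal{C}_1^\perp\cap\mathcal{C}_2\subseteq\mathcal{C}_2$ and $\mathcal{C}_1\cap\mathcal{C}_2^\perp\subseteq\mathcal{C}_1$. Alternatively, genuine purity holds under an extra hypothesis, such as trivial intersections as in the LCP setting of Theorem~\ref{Cyclic-MEAQECC-Asym}, which you handled correctly. Note that your reading which redefines $d_X,d_Z$ as the punctured weights conflicts with the hypothesis that $\mathcal{C}_2$ is an $[n,k_2,d_X]_q$ code; the lower-bound reading avoids this. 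Drop the BCH sentence and state this restriction explicitly.
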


\begin{lemma}\label{lem:support-alignment-asym}Let $n$ be such that $(r_Z+1)\mid n$ and $(r_X+1)\mid n$. Set $\nu_Z=n/(r_Z+1)$ and $\nu_X=n/(r_X+1)$, and let $\alpha$ be a primitive $n$-th root of unity. For $\ell_1\in[r_Z]^{\dagger}$ and $\ell_2\in[r_X]^{\dagger}$, let $\mathcal{C}_1$ and $\mathcal{C}_2$ be cyclic codes of length $n$ over $\mathbb{F}_q$ whose defining sets contain:$$L_1=\{\,j\in[n-1]^{\dagger}: j\equiv\ell_1 \!\!\pmod{r_Z+1}\,\}$$$$L_2=\{\,j\in[n-1]^{\dagger}: j\equiv\ell_2 \!\!\pmod{r_X+1}\,\}$$respectively. Then for each $u_1\in[\nu_Z-1]^{\dagger}$ and $u_2\in[\nu_X-1]^{\dagger}$, the vectors $h_{u_1}^{(\ell_1)}\in\mathbb{F}_q^{\,n}$ and $h_{u_2}^{(\ell_2)}\in\mathbb{F}_q^{\,n}$ with entries$$\bigl(h_{u_1}^{(\ell_1)}\bigr)_m= \begin{cases} \alpha^{\ell_1 m}, & m\equiv u_1 \!\!\pmod{\nu_Z},\\ 0, & \text{otherwise,} \end{cases}$$and$$\bigl(h_{u_2}^{(\ell_2)}\bigr)_m= \begin{cases} \alpha^{\ell_2 m}, & m\equiv u_2 \!\!\pmod{\nu_X},\\ 0, & \text{otherwise,} \end{cases}$$lie in $\mathcal{C}_1^{\perp}$ and $\mathcal{C}_2^{\perp}$, respectively. They satisfy:$$\mathrm{supp}\bigl(h_{u_1}^{(\ell_1)}\bigr)=\{\,m\in[n-1]^{\dagger}: m\equiv u_1\!\!\pmod{\nu_Z}\,\},$$$$\mathrm{supp}\bigl(h_{u_2}^{(\ell_2)}\bigr)=\{\,m\in[n-1]^{\dagger}: m\equiv u_2\!\!\pmod{\nu_X}\,\},$$which are sets of sizes $r_Z+1$ and $r_X+1$, respectively.Consequently, for every coordinate $i\in[n]$, the checks $h_{u_1(i)}^{(\ell_1)}\in\mathcal{C}_1^{\perp}$ and $h_{u_2(i)}^{(\ell_2)}\in\mathcal{C}_2^{\perp}$, where $u_1(i)\equiv i \pmod{\nu_Z}$ and $u_2(i)\equiv i \pmod{\nu_X}$, satisfy$$i\in\mathrm{supp}\bigl(h_{u_1(i)}^{(\ell_1)}\bigr)\cap\mathrm{supp}\bigl(h_{u_2(i)}^{(\ell_2)}\bigr),$$with $\bigl\vert{}\mathrm{supp}\bigl(h_{u_1(i)}^{(\ell_1)}\bigr)\bigr\vert{}=r_Z+1$ and $\bigl\vert{}\mathrm{supp}\bigl(h_{u_2(i)}^{(\ell_2)}\bigr)\bigr\vert{}=r_X+1$.\end{lemma}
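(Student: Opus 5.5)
The plan is to verify membership in the dual code directly, by computing the Euclidean inner product of $h_{u_1}^{(\ell_1)}$ with an arbitrary codeword of $\mathcal{C}_1$. The residue-class indicator will be rewritten as a character sum, which turns this inner product into a combination of evaluations of the codeword polynomial at zeros in $L_1$. The argument for $\mathcal{C}_2$ is identical with $(r_Z,\nu_Z,\ell_1,u_1)$ replaced by $(r_X,\nu_X,\ell_2,u_2)$, so I treat only $\mathcal{C}_1$.

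Throughout, coordinates are indexed by $m\in[n-1]^\dagger$, matching the polynomial picture of $R_n$. Existence of a primitive $n$-th root of unity forces $\gcd(n,q)=1$. Hence $\nu_Z$ is invertible in $\mathbb{F}_q$, which the averaging step below needs.

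\textbf{Step 1 (character expansion of the indicator).} Set $\gamma=\alpha^{r_Z+1}$, which has multiplicative order exactly $\nu_Z$. For $m\in[n-1]^\dagger$,
\[
\frac{1}{\nu_Z}\sum_{s=0}^{\nu_Z-1}\gamma^{s(m-u_1)}=\begin{cases}1,& m\equiv u_1\pmod{\nu_Z},\\ 0,&\text{otherwise,}\end{cases}
\]
by the usual geometric-sum argument. The key choice here is the correct root of unity. A root of order $r_Z+1$ such as $\alpha^{\nu_Z}$ would instead detect $m\bmod (r_Z+1)$, which is the wrong residue class. I expect this bookkeeping to be the only real subtlety.

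\textbf{Step 2 (inner product vanishes).} Let $c=(c_0,\dots,c_{n-1})\in\mathcal{C}_1$ with polynomial $c(x)=\sum_m c_mx^m$. Because $L_1$ is contained in the defining set of $\mathcal{C}_1$, we have $c(\alpha^j)=0$ for all $j\in L_1$. Substituting Step 1 gives
\[
\langle h_{u_1}^{(\ell_1)},c\rangle_E=\frac{1}{\nu_Z}\sum_{s=0}^{\nu_Z-1}\gamma^{-su_1}\sum_{m}c_m\alpha^{m(\ell_1+s(r_Z+1))}
=\frac{1}{\nu_Z}\sum_{s}\gamma^{-su_1}\,c\bigl(\alpha^{\ell_1+s(r_Z+1)}\bigr).
\]
Each exponent $\ell_1+s(r_Z+1)$, reduced mod $n$, is congruent to $\ell_1\pmod{r_Z+1}$, using $(r_Z+1)\mid n$. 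Each such exponent therefore lies in $L_1$, so every term vanishes. Hence $h_{u_1}^{(\ell_1)}\in\mathcal{C}_1^\perp$.

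\textbf{Step 3 (supports and the covering claim).} The nonzero entries of $h_{u_1}^{(\ell_1)}$ are the powers $\alpha^{\ell_1m}\neq0$, so its support is exactly $\{m: m\equiv u_1 \pmod{\nu_Z}\}$. This set has $n/\nu_Z=r_Z+1$ elements, and likewise the support for $\mathcal{C}_2$ has $r_X+1$ elements. For a given coordinate $i$, choose $u_1(i)\equiv i\pmod{\nu_Z}$ and $u_2(i)\equiv i\pmod{\nu_X}$. Then $i$ lies in both supports, which gives the final statement and feeds directly into the hypothesis of Proposition~\ref{CSS-EA-qLRC-Asym}.
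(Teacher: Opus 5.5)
Your argument is correct and is essentially the paper's. Both proofs reduce $\langle c,h_{u_1}^{(\ell_1)}\rangle$ to the vanishing of $c$ at the $\nu_Z$ zeros $\alpha^{\ell_1+t(r_Z+1)}$, $t\in[\nu_Z-1]^{\dagger}$, and then invert the length-$\nu_Z$ Fourier transform in $\alpha^{r_Z+1}$. The paper does this by invoking invertibility of the Vandermonde matrix $(\beta_Z^{tu_1})$ acting on the residue-class sums $S_{u_1}$; you instead write that inverse explicitly through the character expansion of the residue-class indicator, which is why you need $\nu_Z\neq 0$ in $\mathbb{F}_q$, and you justify this correctly.

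One caveat applies to both proofs. The claim $h_{u_1}^{(\ell_1)}\in\mathbb{F}_q^{\,n}$ tacitly requires $\alpha\in\mathbb{F}_q$ (e.g.\ $n\mid q-1$, as in the paper's $q=16$, $n=15$ example). Otherwise the computation only gives orthogonality over the extension field. To get an $\mathbb{F}_q$-valued check you would then take a trace of $\lambda h_{u_1}^{(\ell_1)}$. Its support is contained in, rather than equal to, the residue class, but $\lambda$ can still be chosen so that the support contains $i$.
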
\begin{proof}We demonstrate the property for $\mathcal{C}_1$; the proof for $\mathcal{C}_2$ follows identically by replacing the index $1$ with $2$ and $Z$ with $X$. Since $L_1$ is contained in the defining set, every $c=(c_0,\dots,c_{n-1})\in\mathcal{C}_1$ satisfies $c(\alpha^{\ell_1+t(r_Z+1)})=0$ for $t\in[\nu_Z-1]^{\dagger}$. Put $\beta_Z=\alpha^{r_Z+1}$, an element of order $\nu_Z$, and $S_{u_1}=\sum_{m\equiv u_1\,(\mathrm{mod}\,\nu_Z)}c_m\alpha^{\ell_1 m}$ for $u_1\in[\nu_Z-1]^{\dagger}$.Grouping the coordinates by residue modulo $\nu_Z$ gives$$0=c\bigl(\alpha^{\ell_1+t(r_Z+1)}\bigr)=\sum_{m=0}^{n-1}c_m\alpha^{\ell_1 m}\beta_Z^{tm} =\sum_{u_1=0}^{\nu_Z-1}\beta_Z^{tu_1}S_{u_1} ,\qquad t\in[\nu_Z-1]^{\dagger},$$using $\beta_Z^{tm}=\beta_Z^{tu_1}$ whenever $m\equiv u_1\pmod{\nu_Z}$. The coefficient matrix $(\beta_Z^{tu_1})_{0\le t,u_1\le\nu_Z-1}$ is Vandermonde in the distinct powers $1,\beta_Z,\dots,\beta_Z^{\nu_Z-1}$, hence invertible, so $S_{u_1}=0$ for every $u_1$. As $S_{u_1}=\langle c,h_{u_1}^{(\ell_1)}\rangle$, this establishes $h_{u_1}^{(\ell_1)}\in\mathcal{C}_1^{\perp}$.Each residue class modulo $\nu_Z$ meets $[n-1]^{\dagger}$ in exactly $n/\nu_Z=r_Z+1$ coordinates, and the entries $\alpha^{\ell_1 m}$ are nonzero, so $\mathrm{supp}(h_{u_1}^{(\ell_1)})$ is exactly that residue class, independent of $\ell_1$. The final display follows by taking $u_1(i)\equiv i\pmod{\nu_Z}$ and $u_2(i)\equiv i\pmod{\nu_X}$ in their respective codes, placing $i$ simultaneously in both supports.\end{proof}

\begin{theorem}\label{Cyclic-EA-qLRC-Asym}
Let $(r_Z+1)\mid n$ and $(r_X+1)\mid n$. Let $\mathcal{C}_1$ and $\mathcal{C}_2$ be cyclic LRC codes over $\mathbb{F}_q$ with localities $r_Z$ and $r_X$ respectively, and defining sets $Z_i=L_i\cup D_i$ for $i\in\{1,2\}$, constructed as in definition~\ref{cyclicLRC} with respect to a common primitive $n$-th root of unity $\alpha$. Then there exists a pure AEA-qLRC with parameters $[[n, \; n - |(-Z_1) \cup Z_2|, \; \{d_X, d_Z\},\{r_X, r_Z\}; \; |(-Z_1) \cap Z_2|]]_q$.
\end{theorem}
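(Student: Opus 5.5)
The proof combines three earlier results: Theorem~\ref{Cyclic EAQEC} for the parameters, Lemma~\ref{lem:support-alignment-asym} for the local checks, and Proposition~\ref{CSS-EA-qLRC-Asym} for asymmetric locality.

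First, apply Theorem~\ref{Cyclic EAQEC} to the cyclic codes $\mathcal{C}_1$ and $\mathcal{C}_2$ with defining sets $Z_1$ and $Z_2$. This gives a pure EAQEC code with parameters
\[
[[n,\; n-|(-Z_1)\cup Z_2|,\; \{d_X,d_Z\};\; |(-Z_1)\cap Z_2|]]_q .
\]
We should check that the entanglement count agrees with $c=\operatorname{rank}(H_1H_2^{T})$ from Theorem~\ref{PC-AEAQEC}. By Lemma~\ref{Hull dim}(1), $\dim(\mathcal{C}_1^\perp\cap\mathcal{C}_2)=|Z_1|-|Z_2\cap(-Z_1)|$. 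Substituting into $c=n-k_1-\dim(\mathcal{C}_1^\perp\cap\mathcal{C}_2)$ with $k_1=n-|Z_1|$ gives $c=|(-Z_1)\cap Z_2|$. Then $\kappa=k_1+k_2-n+c=n-|Z_1|-|Z_2|+|(-Z_1)\cap Z_2|=n-|(-Z_1)\cup Z_2|$ by inclusion--exclusion, using $|-Z_1|=|Z_1|$. Purity and the distances $d_Z=\mathrm{wt}(\mathcal{C}_1)$, $d_X=\mathrm{wt}(\mathcal{C}_2)$ are taken directly from Theorem~\ref{Cyclic EAQEC}.

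Second, we establish locality. By Definition~\ref{cyclicLRC}, $Z_1\supseteq L_1$ with $L_1=\{j: j\equiv \ell_1 \pmod{r_Z+1}\}$ for some $\ell_1\in[r_Z]^\dagger$. Likewise $Z_2\supseteq L_2$ with $\ell_2\in[r_X]^\dagger$, and both are defined with respect to the common root $\alpha$. Lemma~\ref{lem:support-alignment-asym} then gives, for every coordinate $i$, a check $h^{(\ell_1)}_{u_1(i)}\in\mathcal{C}_1^\perp$ with support of size $r_Z+1$ and a check $h^{(\ell_2)}_{u_2(i)}\in\mathcal{C}_2^\perp$ with support of size $r_X+1$, and both supports contain $i$. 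These are exactly the hypotheses of Proposition~\ref{CSS-EA-qLRC-Asym} with $c_1^{(i)}=h^{(\ell_1)}_{u_1(i)}$ and $c_2^{(i)}=h^{(\ell_2)}_{u_2(i)}$. So the EA-CSS code $\mathcal{Q}_E$ built from $\mathcal{C}_1,\mathcal{C}_2$ is an AEA-qLRC with locality $\{r_X,r_Z\}$.

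The main obstacle is making sure the code of Theorem~\ref{Cyclic EAQEC} is the same EA-CSS code that Proposition~\ref{CSS-EA-qLRC-Asym} refers to. We need the $X$-part of its stabilizers drawn from $\mathcal{C}_1^\perp$ and the $Z$-part from $\mathcal{C}_2^\perp$, with no conjugation or reversal of the defining set moving the checks into other codes. I would argue this by noting that both results instantiate Theorem~\ref{PC-AEAQEC} with parity-check matrices $H_1,H_2$. The negation $-Z_1$ enters only through counting $\dim(\mathcal{C}_1^\perp\cap\mathcal{C}_2)$ (the defining set of $\mathcal{C}_1^\perp$ is the complement of $-Z_1$), not through any change of which codes' duals furnish the stabilizers. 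Once this identification is made, the locality conclusion transfers verbatim and the stated parameters follow.
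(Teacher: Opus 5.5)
Your proposal is correct and follows the same route as the paper. Since $L_i\subseteq Z_i$, Lemma~\ref{lem:support-alignment-asym} supplies, through every coordinate, checks in $\mathcal{C}_1^\perp$ and $\mathcal{C}_2^\perp$ of support sizes $r_Z+1$ and $r_X+1$; these meet the hypotheses of Proposition~\ref{CSS-EA-qLRC-Asym}, and the parameters and purity are taken from Theorem~\ref{Cyclic EAQEC}. Your extra checks are ones the paper leaves implicit: deriving $c=|(-Z_1)\cap Z_2|$ and $\kappa=n-|(-Z_1)\cup Z_2|$ via Lemma~\ref{Hull dim} and inclusion--exclusion, and noting that both results describe the same EA-CSS code built from $H_1,H_2$.
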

\begin{proof}
By construction, $L_i\subseteq Z_i$ for $i=1,2$. Therefore, the asymmetric extension of Lemma~\ref{lem:support-alignment-asym} applies to each $\mathcal{C}_i$ and supplies, for every coordinate $i\in[n]$, parity checks $c_1^{(i)}\in\mathcal{C}_1^{\perp}$ and $c_2^{(i)}\in\mathcal{C}_2^{\perp}$ such that $i\in\mathrm{supp}(c_1^{(i)})\cap\mathrm{supp}(c_2^{(i)})$. The independent support sizes satisfy $\vert{}\mathrm{supp}(c_1^{(i)})\vert{}=r_Z+1$ and $\vert{}\mathrm{supp}(c_2^{(i)})\vert{}=r_X+1$. The asymmetric CSS-like criterion (Proposition~\ref{CSS-EA-qLRC-Asym}) is therefore satisfied using the independent recovery sets $\Gamma_i^{(Z)}=\{\,m: m\equiv i \!\!\pmod{n/(r_Z+1)}\,\}$ and $\Gamma_i^{(X)}=\{\,m: m\equiv i \!\!\pmod{n/(r_X+1)}\,\}$. Because $\mathcal{C}_1$ governs the Z-type recovery and $\mathcal{C}_2$ governs the X-type recovery, the resulting quantum code possesses asymmetric locality $\{r_X,r_Z\}$. The stated dimension, distance, and entanglement parameters follow directly from Theorem~\ref{Cyclic EAQEC}.
\end{proof}

\begin{example}
 To illustrate the construction over the base field $q=16$ with length $n=15$, where the $16$-cyclotomic cosets modulo $15$ all have a size of $1$, we target asymmetric localities of $r_X=2$ and $r_Z=4$ since $(r_X+1)=3$ and $(r_Z+1)=5$ both divide $n=15$. Using a primitive $15$-th root of unity $\alpha \in \mathbb{F}_{16}$, we construct two constituent cyclic LRCs, $\mathcal{C}_1$ and $\mathcal{C}_2$. For the Z-recovery code $\mathcal{C}_1$ with $r_Z=4$, setting $\ell_1=1$ establishes the locality subset $L_1=\{1, 6, 11\}$, and appending $d_Z=\{2, 3\}$ yields the defining set $Z_1=\{1, 2, 3, 6, 11\}$, which results in a dimension $k_1=15-\vert{}Z_1\vert{}=10$ and a BCH guaranteed distance $d_Z \ge 4$. For the X-recovery code $\mathcal{C}_2$ with $r_X=2$, choosing $\ell_2=1$ provides the locality subset $L_2=\{1, 4, 7, 10, 13\}$, and appending $d_X=\{2\}$ gives the defining set $Z_2=\{1, 2, 4, 7, 10, 13\}$, resulting in a dimension $k_2=15-\vert{}Z_2\vert{}=9$ and a BCH distance $d_X \ge 3$. To determine the quantum code parameters, we compute the inverse set $-Z_1 \equiv \{14, 13, 12, 9, 4\} \pmod{15}$, which dictates the required entanglement pairs $c = \vert{}(-Z_1) \cap Z_2\vert{} = \vert{}\{4, 13\}\vert{} = 2$ and the logical dimension $\kappa = 15 - \vert{}(-Z_1) \cup Z_2\vert{} = 15 - 9 = 6$. With a quantum distance $\{d_X, d_Z\} = \{4,3\}$, this yields a pure $[[15, 6, \{4,3\},\{2, 4\}; 2]]_{16}$ AEA-qLRC.
\end{example}
\begin{theorem}\label{Cyclic-MEAQECC-Asym}
Let $\mathcal{C}_1$ and $\mathcal{C}_2$ be an optimal Linear Complementary Pair (LCP) of cyclic LRCs over $\mathbb{F}_q$ with parameters $[n,k,d]_q$, defining sets $Z_1$ and $Z_2$, and localities $r_Z$ and $r_X$, respectively. Then there exists an optimal maximally entangled pure AEA-qLRC with asymmetric locality $\{r_X, r_Z\}$ and parameters $[[n,k,d,\{r_X,r_Z\};n-k]]_q$.
\end{theorem}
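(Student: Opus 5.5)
The plan is to combine Theorem~\ref{Cyclic-EA-qLRC-Asym}, which turns a pair of cyclic LRCs into a pure AEA-qLRC, with Theorem~\ref{EA-qLRC-LCP-Asym}, which says optimal maximally entangled pure codes are exactly those from LCPs of optimal LRCs.

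\textbf{Step 1: locality and pure AEA-qLRC.} Since $\mathcal C_1,\mathcal C_2$ are cyclic LRCs as in Definition~\ref{cyclicLRC}, their defining sets contain the locality sets $L_1,L_2$. Lemma~\ref{lem:support-alignment-asym} therefore supplies, at every coordinate $i$, checks in $\mathcal C_1^\perp$ and $\mathcal C_2^\perp$ containing $i$ with supports of sizes $r_Z+1$ and $r_X+1$. Proposition~\ref{CSS-EA-qLRC-Asym} and Theorem~\ref{Cyclic-EA-qLRC-Asym} then give a pure AEA-qLRC with locality $\{r_X,r_Z\}$. Its parameters are $\kappa=n-|(-Z_1)\cup Z_2|$ and $c=|(-Z_1)\cap Z_2|$, with $d_X,d_Z$ the distances of $\mathcal C_2,\mathcal C_1$, both equal to $d$.

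\textbf{Step 2: evaluate $c$ and $\kappa$ from the LCP hypothesis.} Since $\mathcal C_1\cap\mathcal C_2^\perp=\{0\}$, Lemma~\ref{Hull dim}(1) with the roles of the codes swapped gives
\[
0=\dim(\mathcal C_2^\perp\cap\mathcal C_1)=|Z_2|-|Z_1\cap(-Z_2)|.
\]
Hence $|Z_1\cap(-Z_2)|=|Z_2|=n-k$. Negation is a bijection on $\mathbb Z_n$, so $|(-Z_1)\cap Z_2|=|Z_1\cap(-Z_2)|=n-k$, and thus $c=n-k$. Because $|Z_1|=|Z_2|=n-k$, the two sets $-Z_1$ and $Z_2$ coincide. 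Then $|(-Z_1)\cup Z_2|=n-k$, so $\kappa=k$. This agrees with $\kappa=k_1+k_2-n+c=2k-n+(n-k)=k$. Since $c=n-k$ attains its maximum value $n-k_1=n-k_2$, the code is maximally entangled.

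\textbf{Step 3: optimality.} Apply Theorem~\ref{EA-qLRC-LCP-Asym}. The code is pure, $\mathcal C_1,\mathcal C_2$ are optimal LRCs with $k_1=k_2=k$ and $d_X=d_Z=d$, and they form an LCP. So the quantum code is an optimal maximally entangled pure AEA-qLRC with parameters $[[n,k,d,\{r_X,r_Z\};n-k]]_q$.

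The main point needing care is Step 2. Lemma~\ref{Hull dim} must be applied with the correct orientation of the codes, and one must check that the LCP condition $\mathcal C_1\cap\mathcal C_2^\perp=\{0\}$ alone forces $-Z_1=Z_2$. The other intersection, $\mathcal C_2\cap\mathcal C_1^\perp$, is then trivial automatically by the cardinality count.
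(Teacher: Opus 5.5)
Your proposal is correct and follows essentially the same route as the paper. Theorem~\ref{Cyclic-EA-qLRC-Asym} supplies the pure AEA-qLRC, and Theorem~\ref{EA-qLRC-LCP-Asym} then gives maximal entanglement and optimality; the only difference is that you derive $c=n-k$ and $\kappa=k$ from the defining sets via Lemma~\ref{Hull dim} (obtaining $Z_2=-Z_1$), whereas the paper reads $c=n-k$ off the general rank/intersection argument behind Theorem~\ref{EA-qLRC-LCP-Asym}, namely $c=n-k_2-\dim(\mathcal{C}_1\cap\mathcal{C}_2^\perp)$.
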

\begin{proof}
Since $\mathcal{C}_1$ and $\mathcal{C}_2$ form an LCP, we have
$\mathcal{C}_1\cap\mathcal{C}_2^\perp=\{0\}; k_1=k_2=k,\;d_Z=d_X=d.$
By Theorem~\ref{Cyclic-EA-qLRC-Asym}, these cyclic LRCs induce a pure  AEA-qLRC. Furthermore, by Theorem~\ref{EA-qLRC-LCP-Asym}, the LCP condition gives $c=n-k.$
Since $\mathcal{C}_1$ and $\mathcal{C}_2$ are optimal, the resulting quantum code is optimal. Hence there exists a maximally entangled pure AEA-qLRC with parameters $[[n,k,d,\{r_X,r_Z\};n-k]]_q$. 
\end{proof}

\section{Conclusion}\label{sec:conclusion}

In this paper, we introduced entanglement-assisted quantum locally recoverable codes with asymmetric locality $\{r_X,r_Z\}$ , a framework unifying two distinct lines of research in quantum error correction: the asymmetry between bit-flip and phase-flip protection and the local recoverability of individual qudits. We established that this property admits both a stabilizer-based sufficient criterion and, within the CSS framework, a corresponding sufficient condition in terms of independent support-size conditions on the dual codes of the two constituent classical codes. We emphasize that both results are stated as sufficient conditions; establishing matching necessary conditions, and hence a full characterization, is left for future work (see the discussion below). On the basis of this characterization, we derived Singleton-, Griesmer-, Plotkin-, sphere-packing-, and Cadambe--Mazumdar-like bounds for CSS-type AEA-qLRCs (Proposition~\ref{CSS-EA-qLRC-Asym}), and obtained a complete characterization of when a pure CSS-type AEA-qLRC constructed from two classical LRCs attains the Singleton-like bound~\eqref{eqn:Singleton Bound
AEA-qLRC} with equality, expressed in terms of the optimality of the constituent codes and the dimensions of their mutual dual intersections. We further exhibited explicit families of AEA-qLRCs arising from Tamo--Barg codes and from cyclic codes, including a family of maximally entangled codes obtained from linear complementary pairs. Since the symmetric-locality case is recovered as the special instance $r_X=r_Z$, these results strictly extend the known parameter space of entanglement-assisted quantum locally recoverable codes.

Several directions remain open for future investigation. First, the bounds and optimality conditions established here are specific to the CSS-type construction; extending this analysis to general, non-CSS entanglement-assisted stabilizer codes with asymmetric locality would enlarge the attainable parameter space beyond what is captured by the present framework. Second, it would be of interest to incorporate \emph{availability}—that is, the existence of multiple disjoint recovery sets per coordinate—into the asymmetric locality parameter $\{r_X, r_Z\}$, yielding a still more general class of codes. Third, the problem of constructing AEA-qLRCs that attain the Griesmer-, Plotkin-, and sphere-packing-like bounds established in this paper remains open; in particular, determining whether infinite families of AEA-qLRCs achieving these bounds exist constitutes a natural and promising direction for future research.
\bibliography{sn-bibliography}

\end{document}